\begin{document}

\preprint{APS/123-QED}

\title{Randomization Accelerates Series-Truncated Quantum Algorithms}

\author{Yue Wang}
\author{Qi Zhao}%
\email{Contact author: zhaoqi@cs.hku.hk}
\affiliation{%
QICI Quantum Information and Computation Initiative, School of Computing and Data Science,
The University of Hong Kong, Pokfulam Road, Hong Kong SAR, China
}

\date{February 8, 2024}

\begin{abstract}


Quantum algorithms typically demand prohibitively complicated circuits to solve practical problems. Previous studies have shown that classical randomness can accelerate some specific quantum algorithms. In this work, we introduce the \textit{Randomized Truncated Series} (RTS) which extends this acceleration to all quantum algorithms that rely on truncated series approximations. RTS offers two key advantages: it quadratically suppresses truncation errors and allows for continuous adjustment of the effective truncation order. By leveraging random mixing between two quantum circuits, RTS ensures that their probabilistic combination accurately realizes the desired algorithm, while significantly reducing the average circuit size. We demonstrate the versatility of RTS through concrete applications. Our results shed light on the path toward practical quantum advantage.
\end{abstract}

\maketitle

\newtheorem{theorem}{Theorem}
\newtheorem{lemma}{Lemma}
\newtheorem{corollary}{Corollary}
\newtheorem{claim}{Claim}
\newtheorem{definition}{Definition}
\newtheorem{conjecture}{Conjecture}
\newtheorem{observation}{Observation}
\newtheorem{presumption}{Presumption}
\newtheorem{proposition}{Proposition}
\newtheorem{remark}{Remark}

\definecolor{BlueSNS}{rgb}{0,0.1098,0.4980}

\newcommand{\U}{F_1}
\newcommand{\UD}{F_1^\dagger}
\newcommand{\UU}{F_2}
\newcommand{\UUD}{F_2^\dagger}
\newcommand{\qi}[1]{\textcolor{blue}{(QI: #1)}}
\makeatletter
\newcommand*{\rom}[1]{\romannumeral #1}
\makeatother
\newcommand{\Id}{\mathds{1}}
\makeatother
\newcommand{\ii}{\mathrm{i}}

Quantum algorithms, such as those for Hamiltonian simulation (HS)~\cite{sethUniveral1996,lowHamiltonianSimulationQubitization2019a,lowQuantumSignalProcessing2017,berrySimulatingHamiltonianDynamics2015,Childs2019fasterquantum,ChildsNearlyOpt2019,ZhaoRandominput2022}, solving differential equations~\cite{berryQuantumAlgorithmLinear2017, liuEfficientQuantumAlgorithm2021, an_quantum_2022, anTheoryQuantumDifferential2023, kroviImprovedQuantumAlgorithms2023, fangTimemarchingBasedQuantum2023}, and singular value transformation~\cite{gilyenQuantumSingularValue2019,sunderhauf2023generalized}, achieve up to exponential asymptotic speedup compared to their classical counterparts~\cite{martynGrandUnificationQuantum2021}. These algorithms serve as the basis for key proposals in chemistry~\cite{ThermalRateChem1999,QchemSimu2001,weckerGatecountEstimatesPerforming2014,babbushEncodingElectronicSpectra2018,babbushLowDepthQuantum2018,YuanReviewChem2020}, materials~\cite{costaQuantumAlgorithmSimulating2019,haahQuantumAlgorithmSimulating2023,mizutaOptimalHamiltonianSimulation2023}, cryptography~\cite{shorPolynomialTimeAlgorithmsPrime1997}, engineering~\cite{liPotentialQuantumAdvantage2023,ameriQuantumAlgorithmLinear2023,lindenQuantumVsClassical2022}, and finance~\cite{hermanSurveyQuantumComputing2022}.

Many algorithms approximately transform an operator $H$ by truncated polynomials (e.g. in HS, matrix inversion, or factoring~\cite{lowQuantumSignalProcessing2017,gilyenQuantumSingularValue2019,martynGrandUnificationQuantum2021}). The truncation order $K$ is derived according to the existing error analysis and a predetermined accuracy. While increasing $K$ enhances precision, it requires additional qubits and gates, resulting in more complex circuits that impede the realization of quantum advantage~\cite{stilck2021limitations, zhouWhatLimitsSimulation2020}. Moreover, the target accuracy generally falls between $\epsilon_K$ and $\epsilon_{K+1}$ achievable by truncation orders $K$ and $K+1$ (see Fig.~\ref{fig:intui_graph}(a)). To satisfy the desired accuracy, $K$ is rounded up to the next integer, leading to more intricate quantum circuits without generally enhancing performance.

Simplifying circuits while preserving precision is crucial and challenging, especially when targeting a broad class of quantum algorithms. We observed that classical randomness accelerates specific quantum algorithms~\cite{Childs2019fasterquantum,campbel2019random,wan2022random,yang2021accelerate,zeng2022simplehighprecisionhamiltoniansimulation,granetHamiltonianDynamicsDigital2024, Hsieh2024double} by replacing faithful execution of the original quantum circuit with randomizing on an ensemble of simpler ones. In this article, we generalize the concept to encompass all quantum algorithms depending on truncated series expansion. Specifically, we introduce \textit{Randomized Truncated Series} (RTS) to achieve a quadratically improved and continuously adjustable truncation error. On the high level, we utilize random mixing of truncated series such that truncation errors cancel out one another. To better approximate $F(H):= \sum_{k=0}^{\infty} \alpha_k H^k$, RTS performs random mixing on $F_1(H):= \sum_{k=0}^{K_1} \alpha_k H^k$ and a modified polynomial of order $K_2 > K_1$, $F_2(H):= \sum_{k=0}^{K_1} \alpha_k H^k + 1/(1-p) \sum_{k=K_1+1}^{K_2} \alpha_k H^k$, where $p \in [0,1)$ is the mixing probability and $\alpha_k$ are real coefficients. One can fine-tune the circuit cost by adjusting the continuous mixing probability, $p$, thereby creating an effective fractional truncated order. Unlike in previous works, a key difference in this setting is the non-unitarity of $F_i$ \footnote{We use the subscript $i\in{1,2}$.}. The mixing lemma proposed in Refs.~\cite{campbellShorterGateSequences2017,hastingsTurningGateSynthesis2016} relies on the invariance of the diamond norm under unitary evolution, thus incompatible with our case. We provide a generalized version to quantify the error for mixing near-unitary operators, which may be of independent interest.

RTS exhibits broad applicability across various quantum algorithms. We demonstrate its utility in optimizing: (\rom{1}) HS via both linear combination of unitaries (LCU)~\cite{berrySimulatingHamiltonianDynamics2015} and (\rom{2}) quantum signal processing (QSP)~\cite{lowHamiltonianSimulationQubitization2019a} frameworks; (\rom{3}) the uniform spectral amplification (USA) algorithm within the QSP framework to showcase the application of RTS to polynomial composition~\cite{lowQuantumSignalProcessing2017}, and (\rom{4}) the solution of differential equations involving truncated series as subroutines~\cite{berryQuantumAlgorithmLinear2017}. Our findings indicate that RTS significantly reduces errors by several orders of magnitude. For instance, when targeting an accuracy of $10^{-8}$, RTS can reduce the error upper bound by four orders of magnitude in accelerating the BCCKS algorithm~\cite{berryQuantumAlgorithmLinear2017}. This substantial error reduction is crucial in high-accuracy regimes required by applications such as simulating chemical reactions~\cite{reiherElucidatingReactionMechanisms2017}, where quantum advantage is anticipated in the near term.

This paper is organized as follows. We present our main result RTS in section~\ref{sec:result}, which contains a protocol to execute RTS and the error bound of performing random mixing. The following section~\ref{sec:app} contains four applications of RTS framework including in LCU, QSP and solving ODE. We give rigorous performance guarantee for these applications, and for the LCU based HS algorithm, we made some modification to improve the success probability tailored to the RTS framework. We also demonstrate the numerical performance of RTS framework on the BCCKS algorithm. We finally conclude our work in section~\ref{sec:sum}. In section~\ref{sec:method}, we present the technical details for the proof in four applications and we defer the proof of some corollaries in section~\ref{sec:method} to appendix~\ref{app:LCU} and \ref{app:QSP}.

\section{Results}
\label{sec:result}

RTS involves mixing two operators with specified probabilities, as shown in Fig.~\ref{fig:intui_graph}(b). We use quantum circuit $\mathfrak{V}_{i}$  to probabilistically implement the near-unitary operator $V_i$, which encodes information of $F_i$. The circuit $\mathfrak{V}_1$ is identical to the standard technique truncated at order $K_1$. The circuit $\mathfrak{V}_2$ differs from a standard truncation at order 
$K_2$ only by a constant amplification of the higher-order terms. Therefore, the implementation generally differs only in the angles of the rotation gates that encode the coefficients. Normally, we need resources scale linearly with $K_{i}$ to implement $\mathfrak{V}_{i}$~\cite{berrySimulatingHamiltonianDynamics2015, childsFirstQuantumSimulation2018}. Thus, the average resource cost scales linearly with $K_{\text{m}} := pK_1 + (1-p)K_2$ with RTS. While $K_{\text{m}}$ may correspond to a higher average truncation order, RTS results in a much lower error and reduces effective truncation order for target accuracy.
\begin{figure}
    \centering
    \includegraphics[width=0.5\linewidth]{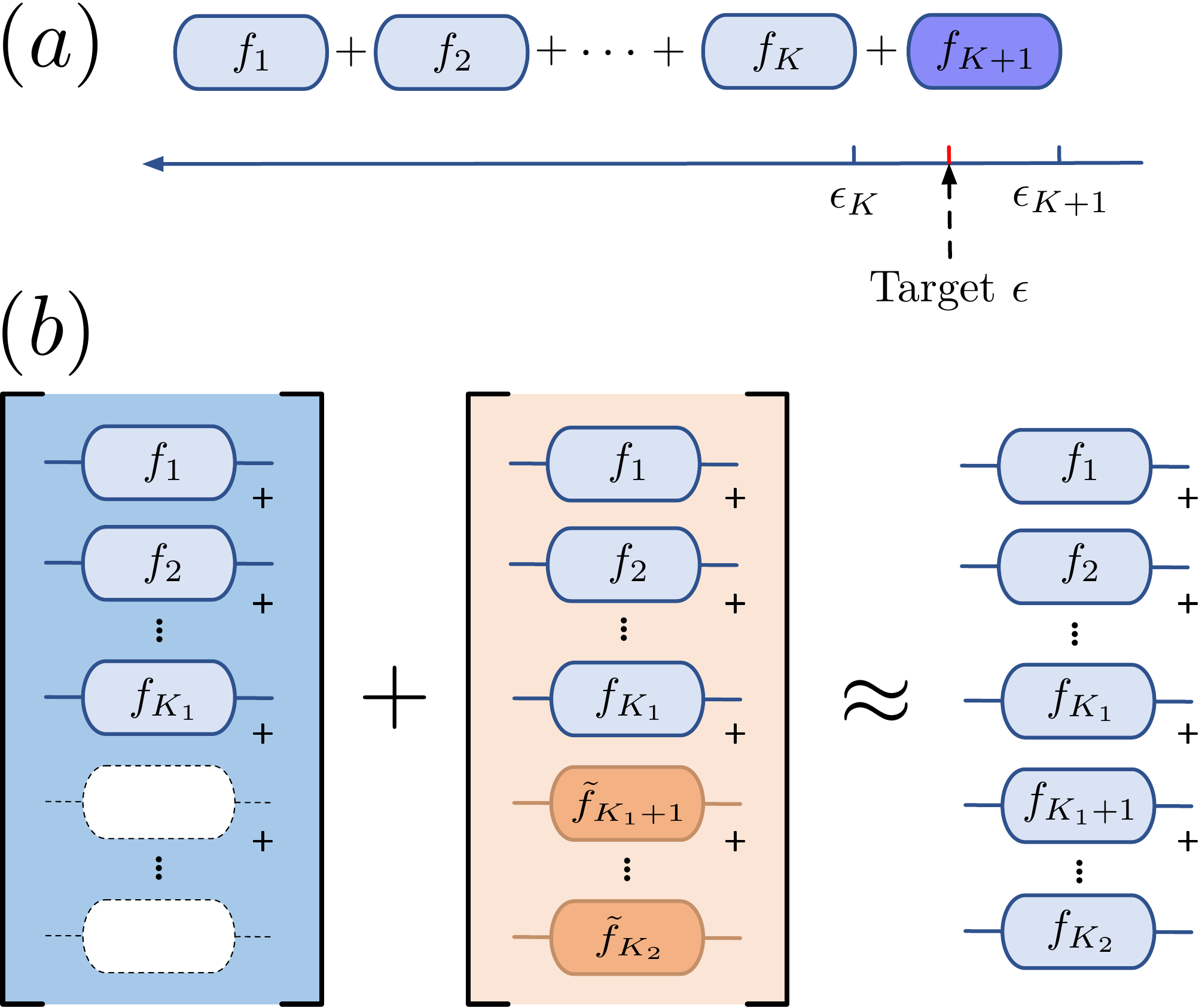}
    \caption{Illustration of conceptual idea. We denote $f_k$ as the $k$-th term in $F(H)$. (a) depicts the conventional approach. The target error $\epsilon$ falls between truncation errors for two series of order $K$ and $K+1$, resulting in inefficiency in the $(K+1)$-th term. (b) demonstrates the RTS method, where we mix two series expansions $F_1$ (depicted in blue brackets) and $F_2$ (depicted in orange brackets) with probability $p$ and $1-p$, respectively. $\tilde{f}_k = 1/(1-p) f_k$ in the orange bracket are modified terms that return to $f_k$ after sampling the measurement result. The amplification coefficient $1/(1-p)$ will be reverted to unity because its contribution to the final result is suppressed by its mixing probability $1-p$. Consequently, the output of RTS includes information on higher-order terms in the series expansion, and better approximates $F(H)$.}
    \label{fig:intui_graph}
\end{figure}

The following protocol outlines procedures to execute RTS. This approach yields only classical measurement outcomes, but we show that one can still retrieve a coherent quantum state when $\mathfrak{V}_{i}$ is structured as concatenated identical segments, referred to as segmented algorithms. For example, Hamiltonian simulation (HS) usually breaks the total evolution $e^{-iHt}$ into $r$ smaller segments $U = e^{-iHt/r}$, so it naturally fits the segmented framework. For segmented algorithms, we handle step 2 differently. Since $\mathfrak{V}_i$ and $U$ evolve the system for the same time duration, we maintain the same number of segments as the original algorithm. Each segment is replaced with either $\mathfrak{V}_1$ or $\mathfrak{V}_2$, where $\mathfrak{V}_1$ is simpler than $U$ while targeting same accuracy. As $\mathfrak{V}_1$ is used for the majority of segments (with probability $p$), the overall circuit depth is reduced.
\begin{enumerate}[leftmargin=*]
    \item \textbf{Random Circuit Generation:} Randomly generate $\mathfrak{V}_1$ and $\mathfrak{V}_2$, with probabilities $p$ and $(1-p)$, respectively. 
    \item \textbf{Circuit Construction:} Apply the prepared quantum circuit to the input state. For segmented algorithms, replace each segment $U$ with a circuit $\mathfrak{V}_i$ generated in step 1. 
    \item \textbf{Post-Selection:} For the succeeded circuits, we execute the follow-up operations, i.e. measure observables. Then statistically combine the classical results.
    \item \textbf{Repeats:} Repeat until the desired sampling accuracy is reached.
\end{enumerate}

Since we approximate a unitary operator by truncated series, the resultant operator need not be a valid unitary operator. We thus define $\epsilon$-near unitary operator to be 
\begin{definition}
An operator $V$ is $\epsilon$‑near‑unitary if there exists a unitary $U$ with $\|V-U\|\le \epsilon$, for $0\le \epsilon\le 1$.
\end{definition}

The original proof in ref.\cite{campbellShorterGateSequences2017, hastingsTurningGateSynthesis2016} applied the unitary invariance of the diamond norm. In our case $V$ need not be unitary, so we perform a renormalization of the output state and obtain a generalized error bound. We generalize the mixing lemma~\cite{Childs2019fasterquantum,campbellShorterGateSequences2017,hastingsTurningGateSynthesis2016}. Define a mixing channel $\mathcal{V}_{\text{mix}}(\rho) = p \mathcal{V}_1(\rho) + (1-p) \mathcal{V}_2(\rho)$ for a density matrix $\rho$, where $\mathcal{V}_i$ is quantum channel corresponds to $V_i$, i.e. $\mathcal{V}_{i}(\rho) = V_{i} \rho V_{i}^\dagger$.

\begin{lemma}
\label{le:mix}
Let $V_1$ and $V_2$ be near-unitary operators approximating an ideal operator $U$. Denote the operator $V_{m} := pV_1 + (1-p)V_2$. Assume the operator norm follows $\|V_1-U \|\le a_1$, $\|V_2-U \|\le a_2$, and $\|V_{m}-U\|\le b$, then the density operator $\rho = \ket{\psi}\bra{\psi}$ acted on by the mixed channel $\mathcal{V}_{\text{mix}}$ satisfies  
\begin{equation}
\left\|\mathcal{V}_{\text{mix}}(\rho) -\mathcal{U}(\rho) \right\|_1  \le \varepsilon, 
\end{equation}
where $\varepsilon= 4b+2pa_1^2+2(1-p)a_2^2$, $\mathcal{U}(\rho) = U\rho U^{\dagger}$ and $\|\cdot\|_1$ is the 1-norm.
\end{lemma}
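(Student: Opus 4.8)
The plan is to bound $\|\mathcal{V}_{\text{mix}}(\rho) - \mathcal{U}(\rho)\|_1$ by interpolating through the channel associated with $V_m$, handling separately the coherent error (how far $V_m$ is from $U$) and the incoherent/non-unitarity error (the spread between the two branches). First I would write $\mathcal{V}_{\text{mix}}(\rho) - \mathcal{U}(\rho) = \big(\mathcal{V}_{\text{mix}}(\rho) - V_m\rho V_m^\dagger\big) + \big(V_m\rho V_m^\dagger - U\rho U^\dagger\big)$ and bound the two pieces in $1$-norm by the triangle inequality. For the second piece, the standard perturbation estimate gives $\|V_m\rho V_m^\dagger - U\rho U^\dagger\|_1 \le \|V_m - U\|\,(\|V_m\| + \|U\|)\,\|\rho\|_1 \le 2b + b^2 \le 3b$ using $\|V_m\| \le 1+b$, $\|\rho\|_1 = 1$, and $0\le b\le 1$; I would actually aim for the cleaner bound $\le 2\|V_m - U\| + \text{(higher order)}$ and absorb constants, targeting a contribution of at most $2b$ to $\varepsilon$ (with the remaining slack in $4b$ covering the cross terms below).

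The heart of the argument is the first piece: $\mathcal{V}_{\text{mix}}(\rho) - V_m\rho V_m^\dagger = p V_1\rho V_1^\dagger + (1-p)V_2\rho V_2^\dagger - V_m\rho V_m^\dagger$. Expanding $V_m = pV_1 + (1-p)V_2$ and collecting terms, this equals $p(1-p)\big(V_1 - V_2\big)\rho\big(V_1 - V_2\big)^\dagger$ — the familiar "variance" identity for mixtures. Here I would use $\|V_1 - V_2\| \le \|V_1 - U\| + \|U - V_2\| \le a_1 + a_2$. Taking the $1$-norm of a rank-one-sandwiched operator, $\|p(1-p)(V_1-V_2)\rho(V_1-V_2)^\dagger\|_1 = p(1-p)\|(V_1-V_2)\ket{\psi}\|^2 \le p(1-p)(a_1+a_2)^2$. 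This is almost what we want, but the clean target $2pa_1^2 + 2(1-p)a_2^2$ is not literally $p(1-p)(a_1+a_2)^2$, so the key technical step is a better branch-wise bookkeeping: rather than symmetrizing $V_1 - V_2$, write $V_i - V_m = (1-p)(V_i - V_{3-i})$ appropriately and use $\|(V_1-U)\ket\psi\|^2 \le a_1^2$, $\|(V_2 - U)\ket\psi\|^2\le a_2^2$ directly, so that the mixture variance is controlled by $p a_1^2 + (1-p)a_2^2$ up to a factor $2$ from an inequality of the form $\|(V_1-V_2)\ket\psi\|^2 \le 2\|(V_1-U)\ket\psi\|^2 + 2\|(V_2-U)\ket\psi\|^2$ combined with $p(1-p)\le \min(p,1-p)$.

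The step I expect to be the main obstacle is getting the constants to land exactly at $\varepsilon = 4b + 2pa_1^2 + 2(1-p)a_2^2$ rather than some looser bound: the naive route gives $p(1-p)(a_1+a_2)^2$ which does not factor through $p a_1^2 + (1-p) a_2^2$ cleanly, so I would need to carefully split $V_1 - V_2$ through $U$ with the $2$-in-the-middle Cauchy–Schwarz/AM–GM inequality and then use $p(1-p) \le p$ and $p(1-p) \le 1-p$ to attach the right probability weight to each term. A secondary subtlety is that the lemma as stated does not renormalize $\mathcal{V}_i(\rho)$ even though $V_i$ is only near-unitary, so $\mathcal{V}_{\text{mix}}(\rho)$ is subnormalized; I would note that the discrepancy in traces is itself $O(b)$ (since $|\mathrm{Tr}\,\mathcal{V}_{\text{mix}}(\rho) - 1| = |\,\|V_m\ket\psi\|^2 - 1 + p(1-p)\|(V_1-V_2)\ket\psi\|^2| \le O(b) + O(a_i^2)$), so it is already absorbed into $\varepsilon$ and no explicit renormalization is needed for this particular inequality — the renormalization alluded to in the surrounding text is only needed when one wants a statement about normalized output states.
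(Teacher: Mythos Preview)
Your approach is correct and reaches the stated bound, but it is organised differently from the paper's proof. You interpolate through $V_m\rho V_m^\dagger$, use the variance identity $\mathcal{V}_{\text{mix}}(\rho)-V_m\rho V_m^\dagger = p(1-p)(V_1-V_2)\rho(V_1-V_2)^\dagger$, and then massage $p(1-p)\|(V_1-V_2)\ket\psi\|^2$ into $2pa_1^2+2(1-p)a_2^2$ via $\|x+y\|^2\le 2\|x\|^2+2\|y\|^2$ together with $p(1-p)\le\min(p,1-p)$. The paper instead writes $V_i\ket\psi = U\ket\psi + \ket{\epsilon_i}$ at the state level and expands directly, obtaining the exact identity
\[
\mathcal{V}_{\text{mix}}(\rho)-U\rho U^\dagger
= \ket{\epsilon_m}\bra{\psi}U^\dagger + U\ket{\psi}\bra{\epsilon_m}
+ p\,\ket{\epsilon_1}\bra{\epsilon_1} + (1-p)\,\ket{\epsilon_2}\bra{\epsilon_2},
\]
which immediately yields $2b + pa_1^2 + (1-p)a_2^2$ with the probability weights already in place and no AM--GM or $p(1-p)\le\min(p,1-p)$ step needed. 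This is exactly the ``main obstacle'' you flagged, and the paper's decomposition sidesteps it entirely; it also gives a constant that is tighter by a factor of two for the unnormalized channel (the extra factor of two in the stated $\varepsilon$ comes from the subsequent renormalization of $\mathcal{V}_{\text{mix}}(\rho)$ to unit trace, not from slack in the core estimate). Your variance-identity route has the virtue of being modular and reusable, but here the direct ket-level expansion is both shorter and sharper. Your remarks on (non-)normalization are on point and match the paper's handling.
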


\begin{proof}
From the assumption of operator norms, we have 
\begin{equation}
    \|V_1\ket{\psi}-U\ket{\psi} \|\le a_1, \quad \|V_2\ket{\psi}-U\ket{\psi} \|\le a_2, \quad \|V_{m}\ket{\psi}-U\ket{\psi}\|\le b
\end{equation}

We denote the non-normalized state $\mathcal{V}_{\text{mix}}(\rho)  = p V_1\ket{\psi}\bra{\psi}V_1^{\dagger} +  (1-p) V_2\ket{\psi}\bra{\psi}V_2^{\dagger}$, $\ket{\epsilon_1}=V_1\ket{\psi}-U\ket{\psi}$, $\ket{\epsilon_2}=V_2\ket{\psi}-U\ket{\psi}$, and $\ket{\epsilon_m}=(pV_1+(1-p)V_2)\ket{\psi}-U\ket{\psi}=p\ket{\epsilon_1}+(1-p)\ket{\epsilon_2}$. 
\begin{equation}
\begin{aligned}
\mathcal{V}_{\text{mix}}(\rho) -U\ket{\psi}\bra{\psi}U^{\dagger}& = p  (U\ket{\psi}+\ket{\epsilon_1})(\bra{\epsilon_1}+\bra{\psi}U^{\dagger} )+ (1-p)  (U\ket{\psi}+\ket{\epsilon_2})(\bra{\epsilon_2}+\bra{\psi}U^{\dagger} )- U\ket{\psi}\bra{\psi}U^{\dagger} \\
&=\ket{\epsilon_m}\bra{\psi}U^{\dagger}+ U\ket{\psi} \bra{\epsilon_m}+ p\ket{\epsilon_1}\bra{\epsilon_1}+ (1-p)\ket{\epsilon_2}\bra{\epsilon_2}.
\end{aligned}
\end{equation}
According to the definitions, $\|\ket{\epsilon_1}\|\le a_1$, $\|\ket{\epsilon_2}\|\le a_2$, $\|\ket{\epsilon_m}\|=\sqrt{\braket{\epsilon_m|\epsilon_m}}\le b$, 
\begin{equation}
\begin{aligned}
\|\mathcal{V}_{\text{mix}}(\rho) -U\ket{\psi}\bra{\psi}U^{\dagger}\|_1
&\le \|\ket{\epsilon_m}\bra{\psi}U^{\dagger}\|_1 + \|U\ket{\psi} \bra{\epsilon_m}\|_1+ p \|\ket{\epsilon_1}\bra{\epsilon_1}\|_1+ (1-p)\|\ket{\epsilon_2}\bra{\epsilon_2}\|_1\\
&\le 2\sqrt{\braket{\epsilon_m|\epsilon_m}}+p \braket{\epsilon_1|\epsilon_1} + (1-p) \braket{\epsilon_2|\epsilon_2}\\
&\le 2b+pa_1^2+(1-p)a_2^2 =: \varepsilon^\prime .
\end{aligned}
\end{equation}
With $|\varepsilon^\prime| \le 1$, this also implies   $1-\varepsilon^\prime \le \|\mathcal{V}_{\text{mix}}(\rho)\|_1\le 1+\varepsilon^\prime$
\begin{equation}
\left\|\frac{\mathcal{V}_{\text{mix}}(\rho)}{\|\mathcal{V}_{\text{mix}}(\rho)\|_1}-U\ket{\psi}\bra{\psi}U^{\dagger} \right\|_1   \le \left\|\mathcal{V}_{\text{mix}}(\rho) -U\ket{\psi}\bra{\psi}U^{\dagger}\right\|_1+ \|\mathcal{V}_{\text{mix}}(\rho)\|_1(\frac{1}{\|\mathcal{V}_{\text{mix}}(\rho)\|_1}-1)
\le 2\varepsilon^\prime.
\end{equation}

Therefore, with $\tilde{\mathcal{V}}_{mix}(\rho) = \mathcal{V}_{\text{mix}}(\rho)/{\rm Tr}\mathcal{V}_{\text{mix}}(\rho)$ is the normalised quantum state and $\mathcal{U}(\rho) = U \rho U^\dagger$, we have 
\begin{equation}
\label{eq: mixing_channel_1_norm}
    \left\|\tilde{\mathcal{V}}_{mix}(\rho) - \mathcal{U}(\rho)\right\|_1 \le 2\epsilon^\prime =: \varepsilon.
\end{equation}

\end{proof}

We utilize the Lemma~\ref{le:mix} and the structure of $V_i$ to analyze the performance of RTS and obtain the main Theorem.

\begin{theorem}
    \label{cor:main}
    Let $U = \sum_{k=0}^{\infty} \alpha_k H^k$ be an operator in series expansion form. Assume a quantum circuit $\mathfrak{V}_1$ encodes the truncated operator $V_1$ such that $\|U - V_1\| \le a_1$, and there exist another quantum circuit $\mathfrak{V}_2$ that encodes $V_2$, where $\|U - V_2\| \le a_2$ and $a_2 = \mathcal{O}\left(a_1\right)$. Employing RTS on $V_1$ and $V_2$ yields a mixing channel $\mathcal{V}_{\text{mix}}$ such that 
    \begin{equation}
       \left\|\mathcal{V}_{\text{mix}}(\rho)-\mathcal{U}(\rho)\right\|_1 = \mathcal{O}\left(a_1^2\right).
    \end{equation}
\end{theorem}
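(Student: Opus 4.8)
The plan is to reduce the statement to Lemma~\ref{le:mix} applied to $V_1$, $V_2$, and their mixture $V_m := pV_1+(1-p)V_2$. That lemma gives
\begin{equation}
\left\|\mathcal{V}_{\text{mix}}(\rho)-\mathcal{U}(\rho)\right\|_1 \;\le\; 4b + 2pa_1^2 + 2(1-p)a_2^2
\end{equation}
for any $b$ with $\|V_m-U\|\le b$. Since $p\in[0,1)$ and $a_2=\mathcal{O}(a_1)$, the last two terms are already $\mathcal{O}(a_1^2)$, so the entire content of the theorem is the claim that the \emph{mixture} $V_m$ approximates $U$ quadratically better than $V_1$ does, i.e.\ that one may take $b=\mathcal{O}(a_1^2)$.

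To establish this I would compute $V_m$ explicitly from the RTS construction. Using $V_1=F_1(H)=\sum_{k=0}^{K_1}\alpha_k H^k$ and $V_2=F_2(H)=\sum_{k=0}^{K_1}\alpha_k H^k+\tfrac{1}{1-p}\sum_{k=K_1+1}^{K_2}\alpha_k H^k$, the mixing weight $(1-p)$ on $V_2$ exactly cancels the amplification factor $1/(1-p)$, so
\begin{equation}
V_m \;=\; pV_1+(1-p)V_2 \;=\; \sum_{k=0}^{K_1}\alpha_k H^k + \sum_{k=K_1+1}^{K_2}\alpha_k H^k \;=\; \sum_{k=0}^{K_2}\alpha_k H^k .
\end{equation}
Thus $V_m$ is \emph{exactly} the order-$K_2$ truncation of $U$: the leading residual terms that make $\|V_1-U\|$ and $\|V_2-U\|$ of order $a_1$ (namely $\alpha_{K_1+1}H^{K_1+1}$ and its counterpart in $V_2$, weighted by $-p/(1-p)$) cancel in the convex combination, and every monomial of degree $\le K_2$ is restored with its correct coefficient. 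Hence $b := \|U-V_m\| = \bigl\|\sum_{k=K_2+1}^{\infty}\alpha_k H^k\bigr\|$ is the truncation residual at the strictly larger order $K_2$.

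What remains — and what I expect to be the only real obstacle — is to argue $b=\mathcal{O}(a_1^2)$, which is where the quantitative decay of the series enters and cannot follow from the hypotheses on $V_1,V_2$ alone. For the rapidly convergent expansions that underlie series-truncated algorithms (Taylor-type series carrying $1/k!$ factors, or geometrically decaying $\alpha_k$), taking $K_2\simeq 2K_1$ makes the order-$K_2$ residual the square of the order-$K_1$ residual up to a harmless constant; for a Taylor series of argument $x$, for instance, $a_1 \approx |x|^{K_1+1}/(K_1+1)!$ while the order-$(2K_1+1)$ residual is $\approx |x|^{2K_1+2}/(2K_1+2)! = a_1^2 / \binom{2K_1+2}{K_1+1} \le a_1^2$, so $b=\mathcal{O}(a_1^2)$; each concrete instance in Section~\ref{sec:app} verifies this step directly for the series at hand. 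Substituting $b=\mathcal{O}(a_1^2)$, $a_2=\mathcal{O}(a_1)$ and $p\in[0,1)$ into the Lemma~\ref{le:mix} bound then yields $\varepsilon = 4b+2pa_1^2+2(1-p)a_2^2 = \mathcal{O}(a_1^2)$, which is the assertion.
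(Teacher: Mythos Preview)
Your proposal is correct and follows essentially the same route as the paper: compute $V_m=pV_1+(1-p)V_2=\sum_{k=0}^{K_2}\alpha_k H^k$ from the explicit RTS forms of $V_1,V_2$, identify $b=\|U-V_m\|$ as the order-$K_2$ residual, and feed $a_1,a_2,b$ into Lemma~\ref{le:mix}. The only difference is in how the non-rigorous step is handled: you argue $b=\mathcal{O}(a_1^2)$ by invoking a choice $K_2\simeq 2K_1$ and factorial decay, whereas the paper simply asserts that ``the truncation error is reduced exponentially with the truncation order in most series expansions, therefore $b$ can be neglected in the big~$\mathcal{O}$ notation'' without tying it to a specific $K_2$; both treatments acknowledge this as an assumption verified case-by-case in the applications.
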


\begin{proof}
We have assumed $V_1$ and $V_2$ have the form
\begin{equation}
    V_1 = \sum_{k=0}^{K_1} \alpha_k H^k, \quad  V_2 = \sum_{k=0}^{K_1} \alpha_k H^k + \frac{1}{1-p}\sum_{k=K_1+1}^{K_2} \alpha_k H^k,
\end{equation}
for some $K_2 \ge K_1,$ and $ K_1,K_2 \in \mathbb{N}$. 
Therefore, we have 
\begin{equation}
    V_{\rm m} = pV_1 + (1-p)V_2 = \sum_{k=0}^{K_2} \alpha_k H^k
\end{equation}
We can then calculate the error for all operators
\begin{equation}
\begin{split}
    b = \left\|U - (pV_1 + (1-p)V_2)\right\| &= \left\|\sum_{k=0}^{\infty} \alpha_k H^k -  p\sum_{k=0}^{K_1} \alpha_k H^k - (1-p)\left(\sum_{k=0}^{K_1} \alpha_k H^k + \frac{1}{1-p}\sum_{k=K_1+1}^{K_2} \alpha_k H^k\right)\right\| \\
    &= \left\|\sum_{k=0}^{\infty} \alpha_k H^k - \sum_{k=0}^{K_2} \alpha_k H^k\right\| = \sum_{k=K_2+1}^{\infty} \alpha_k \|H\|^k\\
    a_1 = \left\|U-V_1\right\| &=  \left\|\sum_{k=0}^{\infty} \alpha_k H^k -  \sum_{k=0}^{K_1} \alpha_k H^k\right\| =  \sum_{k=K_1+1}^{\infty} \alpha_k \|H\|^k \\
    a_2 = \left\|U-V_2\right\| &= \left\|\sum_{k=0}^{\infty} \alpha_k H^k - \sum_{k=0}^{K_1} \alpha_k H^k - \frac{1}{1-p}\sum_{k=K_1+1}^{K_2} \alpha_k H^k\right\| \\
    &= \left\|\sum_{k=0}^{\infty} \alpha_k H^k - \sum_{k=0}^{K_1} \alpha_k H^k +\sum_{k=K_1+1}^{K_2} \alpha_k H^k - \sum_{k=K_1+1}^{K_2} \alpha_k H^k - \frac{1}{1-p}\sum_{k=K_1+1}^{K_2} \alpha_k H^k\right\| \\
    &=  \left\|\sum_{k=0}^{\infty} \alpha_k H^k - \sum_{k=0}^{K_2} \alpha_k H^k\right\|  + \left\|\left(\frac{1}{1-p} - 1\right)\sum_{k=K_1+1}^{K_2} \alpha_k H^k\right\| \\
    &\le b + \frac{p}{1-p}a_1
\end{split}
\end{equation}

Applying lemma \ref{le:mix}, we obtain the error upper-bound, $\epsilon$ of an algorithm after the mixing channel being 
\begin{equation}
\begin{split}
    \epsilon &= 4b + pa_1^2 + (1-p)a_2^2 \\
    &= 4b + pa_1^2 + (1-p)\left(b + \frac{p}{1-p}a_1\right)^2 \\
    &= \mathcal{O}\left(a_1^2\right)
\end{split}
\end{equation}

In the last line, we assume that the truncation error is reduced exponentially with the truncation order in most series expansions. Therefore, $b$ can be neglected in the big O notation. 
\end{proof}

In Theorem \ref{cor:main}, we neglect $b = \|U-V_{\rm m}\|$ in the asymptotic regime since higher-order terms beyond $K_2$ are exponentially suppressed for typical series expansions. The cost reduction delivered by the Randomised Truncated Series (RTS)
framework manifests differently across quantum algorithms. We will show that the cost reduction for the BCCKS algorithm and the QSP-based HS is reduced by 50\%, and the cost reduction for other algorithms, where cost depends differently with truncation order can be analyzed accordingly. Specifically, for algorithms with cost polynomially depend on truncation order, the cost reduction can be polynomial.

The $K$-truncated BCCKS algorithm for simulating a
$d$-sparse, $n$-qubit Hamiltonian~$H$ for an evolution time~$t$ incurs a
two–qubit–gate cost of $\mathcal{O}  \bigl(n\,\tau\,
\log(\tau/\epsilon)\,K\bigr)$, where $\tau = d^{2}\|H\|_{\max}t$.  In
addition it queries the Hamiltonian oracle $\mathcal{O}(\tau K)$ times.
Crucially, both costs scale linearly with the truncation order
$K  =  \log(\tau/\epsilon)/\log  \bigl(\log(\tau/\epsilon)\bigr)$.

Let the target error be $0<\varepsilon\ll1$ and the fixed problem
constant be $\tau  \gg  1$.  We define
\begin{equation}\label{eq:Cdef}
  K(\varepsilon;\tau)
  \;=\;
  \frac{\log(\tau/\varepsilon)}{\log  \bigl(\log(\tau/\varepsilon)\bigr)}
  \;\;.
\end{equation}

By definition, achieving accuracy~$\varepsilon$ with the original
BCCKS algorithm requires a cost
$K_{\mathrm{orig}}(\varepsilon;\tau)  =  K(\varepsilon;\tau)$.
RTS, by design, achieves quadratic error suppression, thus we run the
same algorithm to precision~$\sqrt{\varepsilon}$ and then apply
statistical mixing.  Its cost is therefore
\begin{align*}
    K_{\mathrm{mix}}(\varepsilon;\tau)
    &= K(\sqrt{\varepsilon};\tau)
      = \frac{\log(\tau/\sqrt{\varepsilon})}
             {\log  \bigl(\log(\tau/\varepsilon)\bigr)} \; .
\end{align*}

Writing $A  =  \log\tau$ and $L  =  \log(1/\varepsilon)$, with
$L  \to  \infty$ as $\varepsilon  \to  0$, we have
\begin{align*}
  K_{\mathrm{orig}} &= \frac{A+L}{\log(A+L)}, &
  K_{\mathrm{mix}} &= \frac{A+L/2}{\log(A+L/2)}\; .
\end{align*}
Hence
\begin{equation*}
  \frac{K_{\mathrm{mix}}}{K_{\mathrm{orig}}}
  = \frac{A+\tfrac12 L}{A+L}
    \;\frac{\log(A+L)}{\log(A+\tfrac12 L)}
  = \frac12\,\frac{\log L}{\log(L/2+A)}\; .
\end{equation*}

Because $A$ is constant while $L  \to  \infty$, we have
$\log(L/2+A)\sim\log L$, so the fraction tends to~$1$; consequently
\begin{equation*}
  \frac{K_{\mathrm{mix}}}{K_{\mathrm{orig}}}
  \xrightarrow[\varepsilon\to0]{} \frac12\; .
\end{equation*}
In the asymptotic regime the RTS framework therefore 
halves the leading‑order cost of the BCCKS algorithm.  The same
line of reasoning applies to the truncated‑series ODE solver of
Berry~\emph{et~al.}, whose resource requirements also scale linearly
with the truncation order.

For QSP based HS the result is
similar but the derivation differs slightly.  Following the cost model
of Ref.~\cite{lowHamiltonianSimulationQubitization2019a} (noting that
Ref.~\cite{gilyenQuantumSingularValue2019} identifies a minor error
that does not affect the leading behaviour), the truncation error obeys
\[\epsilon \ge \bigl(4|t|^{K}\bigr)/(K!\,2^{K})=\mathcal{O}  \bigl((et/K)^{K}\bigr),\]
where $t$ is the simulation time and $K$ is the Jacobi–Anger order.
Thus $\log(1/\epsilon)=\mathcal{O}\bigl(K\,\log(2K/(et))\bigr)$ and one
can bound $K=\mathcal{O}\bigl(t+\log(1/\epsilon)\bigr)$.  With RTS we
replace $\epsilon$ by $\sqrt{\epsilon}$, giving
$\log(1/\sqrt{\epsilon}) = \tfrac12\log(1/\epsilon)$ and therefore
$K\to K/2$ for fixed accuracy; the associated cost is therefore
halved. We next demonstrate how to utilize RTS and the performances with several examples. In each instance, we may redefine variables to avoid using lengthy subscripts.

\section{Applications}
\subsection{BCCKS example} 
\label{sec:app}
Hamiltonian simulation~\cite{berrySimulatingHamiltonianDynamics2015} (HS) is one of the fundamental quantum algorithms. Moreover, it acts as subroutines in algorithms like quantum phase estimation, quantum linear system solver, etc. Therefore, accurate and efficient HS is crucial in both near- and long-term perspectives. 

The Taylor expansion of unitary evolution under the system Hamiltonian $H$ for time $t$ can be written as
\begin{equation}
    \label{eq: LCU_u}
    U = e^{-\ii Ht} =\sum^{\infty}_{k=0}\frac{\left(-\ii Ht\right)^k}{k!}.
\end{equation}
In this scenario, we have $F_1 := \sum^{K_1}_{k=0}\frac{\left(-\ii Ht\right)^k}{k!}$, and $F_2 := \sum^{K_1}_{k=0}\frac{\left(-\ii Ht\right)^k}{k!} + \frac{1}{1-p}\sum^{K_2}_{k=K_1+1}\frac{\left(-\ii Ht\right)^k}{k!}$. The BCCKS algorithm is a typical segmented algorithm that aims to implement $F_1$. Assume an $n$-qubit Hamiltonian $H$ can be decomposed into a sum of efficiently simulatable unitaries $H_l$ with coefficients $\alpha_l$, i.e. $H = \sum^L_{l=1}\alpha_l H_l$.  Then, we can re-express $F_{i}$ in the form $F_{i}=\sum_{j=0}^{\Gamma-1} \beta^{(i)}_j \tilde{V}^{(i)}_j$,
where $\Gamma = \sum_{k=0}^{K_{i}} L^k$, $\tilde{V}^{(i)}_j$ represents one of the unitaries $(- \ii )^kH_{l_1}\cdots H_{l_k}$, and $\beta^{(i)}_j$ is the corresponding positive coefficient. $F_{i}$ is in standard LCU form~\cite{KothariRobin2014} which can be implemented by \textbf{SELECT} and \textbf{PREPARE} oracles followed by oblivious amplitude amplification (OAA)~\cite{berryExponentialImprovementPrecision2014}. For implementation details, one can refer to~\cite{berrySimulatingHamiltonianDynamics2015,childsFirstQuantumSimulation2018}. 
 
Applying oblivious amplitude amplification to $F_i$ yields modified circuits $V_i$ that apply the approximate transformation with high probability. Due to the additional term in $F_2$, we have to apply OAA with an additional round of reflection, and the resulting quantum circuits perform the following transform 
\begin{equation}
\label{eq:LCU_circuit}
    \begin{aligned}
        \ket{0}\ket{\psi} &\mapsto \ket{0}V_{i}\ket{\psi} + \ket{\perp_{i}},\\
        V_1 &:= \frac{3}{s_1}F_1-\frac{4}{s^3_1}F_1F_1^\dagger F_1, \\
        V_2 &:= \frac{5}{s_2}F_2-\frac{20}{s_2^3}F_2F_2^\dagger F_2 + \frac{16}{s_2^5}F_2F_2^\dagger F_2F_2^\dagger F_2,
    \end{aligned}
\end{equation}
where $(\mathds{1}\otimes\bra{0})\ket{\perp_{i}} = 0$, $s_1 \approx 2$ and $s_2 \approx (\sin(\pi/10))^{-1}$. Thus, projecting on $\ket{0}$ in the first register by post-selection essentially implements $V_{i}$. The error bound, cost, and failure probability for RTS implementing the BCCKS algorithm are given by the following corollary. 

\begin{corollary}
\label{cor:LCU}
    let $V_1$ and $V_2$ be as defined in Eq.~\eqref{eq:LCU_circuit}. Then for any mixing probability $p\in[0,1)$ and input state $\rho$, the output of the mixed channel $\mathcal{V}_{\text{mix}}(\rho) = p V_1\rho V_1^{\dagger} +  (1-p) V_2\rho V_2^{\dagger}$ and an ideal evolution for a segment, $U = e^{-iH\tau}$, is bounded by 
    \begin{equation}
    \left\|\mathcal{V}_{\text{mix}}(\rho) - U\rho U^\dagger\right\|_1 \le \max\left\{\frac{40}{1-p}\delta_1^2, 8\delta_m\right\},
    \end{equation}
    where $\delta_1 = 2\frac{(\ln 2)^{K_1+1}}{(K_1+1)!}$ and $\delta_m = 2\frac{(\ln 2)^{K_2+1}}{(K_2+1)!}$. The overall cost of implementing this segment is 
    \begin{equation}
        G=\tilde{\mathcal{O}}( n L (p K_1+ (1-p)K_2)),
    \end{equation}
    where $n$ is number of qubit, $L$ is the number of terms in the unitary expansion of $H$, $K_1$ and $K_2$ are truncation order in $V_1$ and $V_2$ respectively, and $\tilde{\mathcal{O}}$ means polylog factors are suppressed. The failure probability corresponds to one segment being upper bounded by $\xi \le \frac{8}{1-p}\delta^2_1 + 4\delta_1$.
\end{corollary}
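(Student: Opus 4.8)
\textbf{Proof proposal for Corollary~\ref{cor:LCU}.}
The plan is to instantiate the generalized mixing Lemma~\ref{le:mix} with the specific operators $V_1$ and $V_2$ from Eq.~\eqref{eq:LCU_circuit}, so the first task is to bound the three relevant operator-norm distances: $\|U-V_1\|$, $\|U-V_2\|$, and $\|U-V_{\rm m}\|$ where $V_{\rm m}=pV_1+(1-p)V_2$. First I would recall that for a segment $U=e^{-iH\tau}$ the Taylor truncation $F_i$ satisfies a bound of the Berry~\emph{et al.} type, $\|U-F_1\|\le \delta_1$ with $\delta_1 = 2(\ln 2)^{K_1+1}/(K_1+1)!$ (here $\|H\tau\|\le \ln 2$ is the standard normalization choice in BCCKS so that OAA works with $s_i$ near $2$), and likewise $\|U-F_{\rm m}\|\le\delta_m$ for the order-$K_2$ truncation $F_{\rm m}=pF_1+(1-p)F_2$. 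Then I would pass from the $F_i$'s to the amplitude-amplified $V_i$'s: the polynomials $3x-4x^3$ and $5x-20x^3+16x^5$ are the degree-$3$ and degree-$5$ OAA polynomials that map the near-isometry with subnormalization $s_i$ back to a near-unitary, so on the relevant subspace $V_i$ differs from the unitary obtained by applying the \emph{same} OAA polynomial to $U/s_i\cdot s_i$ by an amount controlled by the input error $\delta_1$ (resp.\ $\delta_m$); the Lipschitz constant of the OAA polynomial on $[0,1]$ (which is where the relevant singular values live, up to the truncation error) produces the constant factors $40$ and $8$ appearing in the statement. This gives $a_1 := \|U-V_1\|\lesssim \tfrac{1}{1-p}\delta_1$-type bounds and $b:=\|U-V_{\rm m}\|\lesssim \delta_m$, after which Lemma~\ref{le:mix} yields $\|\mathcal{V}_{\rm mix}(\rho)-\mathcal{U}(\rho)\|_1\le 4b+2pa_1^2+2(1-p)a_2^2$, and collecting the dominant terms and taking a max (rather than a sum) over the $\delta_1^2$ and $\delta_m$ contributions gives the advertised $\max\{\tfrac{40}{1-p}\delta_1^2,\,8\delta_m\}$.

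For the cost bound I would simply quote the resource count of the BCCKS/LCU construction: implementing $F_i$ in LCU form with $\Gamma=\sum_{k=0}^{K_i}L^k$ terms via \textbf{SELECT}/\textbf{PREPARE} plus a constant number of OAA rounds costs $\tilde{\mathcal O}(nLK_i)$ two-qubit gates (the extra reflection round for $V_2$ only changes constants, which are absorbed by $\tilde{\mathcal O}$); since the circuit is $\mathfrak{V}_1$ with probability $p$ and $\mathfrak{V}_2$ with probability $1-p$, the expected gate count per segment is $\tilde{\mathcal O}(nL(pK_1+(1-p)K_2))$. For the failure probability I would track the post-selection amplitude: for $V_1$ the OAA is designed so that the amplitude on $\ket{0}$ is exactly $1$ when the input block-encoding is exact, so the failure probability is the deficit $1-\|V_1\ket{\psi}\|^2$, which by the same near-unitarity estimate is $\mathcal O(\delta_1)$; more precisely one gets $\xi\le \tfrac{8}{1-p}\delta_1^2 + 4\delta_1$ by expanding $\|V_i\ket{\psi}\|^2 = 1 - (\text{error terms})$ and using $\|U-V_i\|\le a_i$ together with $a_2\le \tfrac{1}{1-p}(\delta_m+\tfrac{p}{1-p}\delta_1)$-type bounds from the Theorem~\ref{cor:main} computation.

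The main obstacle I expect is the second step: carefully relating the amplitude-amplified operator $V_i$ (a degree-$3$ or degree-$5$ polynomial in the \emph{non-unitary, subnormalized} $F_i/s_i$) to a genuine unitary with a clean operator-norm bound, and in particular getting the constants $40$, $8$, $4$, and the $1/(1-p)$ prefactors exactly rather than up to an unspecified constant. This requires writing $V_i$ as $P_{\rm OAA}(F_i)$, comparing with $P_{\rm OAA}$ applied to the exact (unitary) target, and bounding the difference using $\|F_i - U_i^{\rm target}\|\le\delta$ and the fact that all the operators in sight have norm $\le 1+\mathcal O(\delta)$; the $1/(1-p)$ enters because $F_2$ has its high-order block amplified by $1/(1-p)$, so the bound $a_2\le b+\tfrac{p}{1-p}\delta_1$ from Theorem~\ref{cor:main} must be propagated through the degree-$5$ OAA polynomial. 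I would organize this as a short lemma on "OAA is stable under near-unitary perturbations of its input," prove it once, and apply it twice (degree $3$ and degree $5$); the rest is bookkeeping with the tail bound $\delta_K = 2(\ln 2)^{K+1}/(K+1)!$ and the Taylor remainder.
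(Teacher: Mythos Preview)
Your overall architecture matches the paper's: bound $a_1=\|U-V_1\|$, $a_2=\|U-V_2\|$, $b=\|U-V_m\|$ and feed them into Lemma~\ref{le:mix}. The cost and failure-probability paragraphs are also in line with what the paper does. But there is a real gap in your treatment of $b$.

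You assert $b\lesssim\delta_m$ and then move on, and your proposed ``OAA is Lipschitz-stable'' lemma only produces the $a_i$ bounds. The issue is that $V_m=pV_1+(1-p)V_2$ is \emph{not} an OAA polynomial applied to $F_m=pF_1+(1-p)F_2$: $V_1$ and $V_2$ are obtained from \emph{different} nonlinear OAA polynomials (degree~$3$ with $s_1\approx 2$ versus degree~$5$ with $s_2=(\sin(\pi/10))^{-1}$), so linear combination does not commute with amplification. A bare Lipschitz/triangle argument therefore gives only $b\le pa_1+(1-p)a_2=\mathcal{O}(\delta_1)$, which is far too weak (it would kill the quadratic improvement entirely). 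The paper's Lemma on $V_m$ gets around this by expanding each $V_i$ to first order in the truncation error $E_i=U-F_i$: one finds
\[
V_i \;=\; U \;+\; \tfrac{1}{2}\bigl(E_i - U^\dagger E_i U\bigr) \;+\; R_i,\qquad \|R_i\|=\mathcal{O}(\delta_i^2),
\]
and the key (non-obvious) point is that the linear-in-$E$ pieces of the degree-$3$ and degree-$5$ amplifications have the \emph{same} form, so under the convex combination they collapse to $\tfrac{1}{2}\bigl((pE_1+(1-p)E_2)-U^\dagger(pE_1+(1-p)E_2)U\bigr)$, whose norm is $\le\delta_m$. The quadratic remainders $pR_1+(1-p)R_2$ then contribute an additional $\tfrac{3}{1-p}\delta_1^2$ to $b$; this term, multiplied by $4$ in Lemma~\ref{le:mix}, is a substantial part of the final $\tfrac{40}{1-p}\delta_1^2$. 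Your plan neither identifies this cancellation nor accounts for the $\delta_1^2$ piece of $b$.

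A smaller point: $a_1$ carries no $1/(1-p)$ factor (it is $\approx 2\delta_1$, since $V_1$ does not involve $p$ at all); the $p$-dependence enters only through $a_2$ via $\delta_2\approx\tfrac{p}{1-p}\cdot\tfrac{(\ln 2)^{K_1+1}}{(K_1+1)!}$.
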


For the BCCKS algorithm, which employs unary encoding in the \textbf{SELECT} circuit, the ancillary-qubit width is $\mathcal{O}(K\log L)$.  In our RTS protocol, $\mathfrak{V}_1$ uses a smaller width, whereas $\mathfrak{V}_2$ requires a larger ancillary system.  In each segment, the ancilla qubit is measured then refreshed with a clean qubit. Therefore, the total spatial cost can be evaluated using the average ancilla width per segment. The total number of qubits that must be prepared during executing the algorithm equals the segment count multiplied by that average width.  Since the number of segments is unchanged but the average width is reduced, the burden on the ancillary system is ultimately lower.

\subsection{Quantum Signal Processing (QSP) examples}

QSP is powerful in transforming the eigenvalue of a Hamiltonian $H$. We will demonstrate the application of RTS in two settings: exponential-function transformation and truncated-linear transformation. Other algorithms relying on QSP can be addressed similarly.

Consider a single eigenstate $\ket{\lambda}$ of $H$. QSP performs a degree-$d$ polynomial transformation $f(\lambda)$ by classically finding a vector of angles $\vec{\phi} \in \mathbb{R}^d$ and construct an iterator $W_{\vec{\phi}}$ such that 

\begin{equation}
    \begin{aligned}
        W_{\vec{\phi}} = \begin{pmatrix}
            f(\lambda) & \cdot \\
            \cdot & \cdot
        \end{pmatrix},
    \end{aligned}
\end{equation}
where $\cdot$ denotes the other matrix elements (discarded through post-selection). The notation is consistent with Ref.~\cite{lowHamiltonianSimulationQubitization2019a}. Regarding HS, we use $f(\lambda)$ to approximate $U_\lambda := e^{-\ii\lambda t}$ by the truncated Jacobi-Anger expansion~\cite{abramowitz1988handbook}. Utilizing RTS, we obtained the following corollary. 

\begin{corollary}
    \label{cor:QSP_HS}
    Consider two quantum circuits implementing $\hat{V}_1$ and $\hat{V}_2$ in Eq.~\eqref{eq:formal_qsp_HS}.  Given a mixing probability $p \in [0,1)$ and an arbitrary density matrix $\rho$, distance between the evolved state under the mixing channel $\mathcal{V}_{\text{mix}}(\rho) = p V_1\rho V_1^{\dagger} +  (1-p) V_2\rho V_2^{\dagger}$ and an ideal evolution for $U_\lambda$ is bounded by 
    \begin{equation}
    \left\|\mathcal{V}_{\text{mix}}(\rho) - U\rho U^\dagger\right\| \le \max\left\{28\delta_1, 8\sqrt{\delta_m}\right\},
    \end{equation}
    where $\delta_m = \frac{4t^{K_2}}{2^{K_2} K_2!}$ and $\delta_1 = \frac{4t^{K_1}}{2^{K_1} K_1!}$. The overall cost is 
    \begin{equation}
        G=\mathcal{O}\left(pK_1 + (1-p)K_2\right),
    \end{equation}
    where $d$ is the sparsity of $H$, and $K_1$ and $K_2$ are truncated order in $V_1$ and $V_2$ respectively. The failure probability is upper bounded by $\xi \ge 4p \sqrt{\delta_2}$.
\end{corollary}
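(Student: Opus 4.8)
The plan is to reduce the statement to Lemma~\ref{le:mix} applied to the effective one-dimensional action of QSP on each eigenspace of $H$. First I would write the polynomials explicitly: $\hat V_1$ is the QSP circuit realizing the degree-$K_1$ truncation of the Jacobi--Anger expansion of $e^{-\ii\lambda t}$ (its even part by a cosine-type polynomial, its odd part by a sine-type polynomial), and $\hat V_2$ is the degree-$K_2$ truncation with the coefficients of orders $K_1+1,\dots,K_2$ rescaled by $1/(1-p)$, following the general RTS recipe of Theorem~\ref{cor:main}. Then $V_{\rm m}:=pV_1+(1-p)V_2$ is exactly the plain degree-$K_2$ Jacobi--Anger truncation, so all three norms that Lemma~\ref{le:mix} needs are controlled by Jacobi--Anger tail estimates.

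Next I would bound those tails. From $|J_k(t)|\le(|t|/2)^k/k!$ and summing the rapidly decaying remainder one gets $\|U_\lambda-V_1\|\le\delta_1$ and $\|U_\lambda-V_{\rm m}\|\le\delta_m$ uniformly in $\lambda\in[-1,1]$, hence as operator norms after spectral decomposition of $H$. The error of $V_2$ then follows from the triangle inequality exactly as in the proof of Theorem~\ref{cor:main}: $\|U_\lambda-V_2\|\le b+\tfrac{p}{1-p}a_1\le\delta_m+\tfrac{p}{1-p}\delta_1=:a_2$. Substituting $a_1=\delta_1$, $a_2$, and $b=\delta_m$ into Lemma~\ref{le:mix} gives $\varepsilon=4\delta_m+2p\delta_1^2+2(1-p)a_2^2$; expanding the square, using $\delta_m\le\delta_1$, and splitting into the regime in which the residual order-$K_2$ tail dominates versus the one in which the quadratic cancellation term dominates yields the $\max\{\cdot,\cdot\}$ bound claimed in the corollary.

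The main obstacle is the QSP-specific (sub)normalization. QSP realizes a polynomial $f$ only when $\max_{[-1,1]}|f|\le1$, and multiplying the high-order Jacobi--Anger coefficients by $1/(1-p)$ can push the polynomial of $\hat V_2$ slightly above $1$; I would handle this by implementing a subnormalized polynomial $(1-\eta)f_2$ with $\eta=\mathcal O(\delta_1/(1-p))$ together with one round of (oblivious) amplitude amplification. This step is also where a $\sqrt{\delta_m}$-type additive contribution and the $p$-dependence of the failure probability enter: tracking the amplitude on the success flag through the amplified circuit for each eigenvalue gives the claimed failure-probability bound, and the induced $\mathcal O(\eta)$ perturbation must be folded into the error budget above.

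Finally, the cost count is routine: a degree-$d$ QSP/qubitization circuit uses $\mathcal O(d)$ block-encoding queries (and $\mathcal O(d)$ single-qubit rotations), so $\mathfrak V_1$ costs $\mathcal O(K_1)$ and $\mathfrak V_2$ costs $\mathcal O(K_2)$; since $\mathfrak V_1$ is drawn with probability $p$, the classically averaged cost is $\mathcal O(pK_1+(1-p)K_2)$, and the constant number of extra reflections for amplitude amplification does not affect this asymptotics.
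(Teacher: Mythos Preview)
The gap is that you take $a_1=\delta_1$, $b=\delta_m$ to be the Jacobi--Anger polynomial truncation errors and plug those directly into Lemma~\ref{le:mix}, which yields $\varepsilon=4\delta_m+O(\delta_1^2)$. But the corollary asserts $\max\{28\delta_1,8\sqrt{\delta_m}\}$, a fundamentally different scaling, so your computation cannot land on the stated constants. You notice the stray $\sqrt{\delta_m}$ and attribute it loosely to an OAA step you would add on top, but that is not where it originates in the paper, and your sketch does not actually show how an $O(\delta_1)$ (rather than $O(\delta_1^2)$) term would emerge.

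The paper inserts the QSP robustness lemma (Lemma~\ref{le:qsp_functions_conditions}, i.e.\ Low--Chuang Theorem~1) between the Jacobi--Anger tail estimates and Lemma~\ref{le:mix}. That lemma only guarantees $\bigl\|(\bra{+}\bra{G})\hat V_{\vec\varphi}(\ket{G}\ket{+})-e^{-iHt}\bigr\|=O\bigl(\sqrt{\epsilon_1+\epsilon_2}\bigr)$, where $\epsilon_1=|\tilde A(0)-1|$ and $\epsilon_2=\max_\lambda(\tilde A^2+\tilde C^2-1)$; the QSP circuit does \emph{not} realize the target polynomial exactly, and the completion/rescaling needed to make the phased iterate unitary costs a square root. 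A companion lemma computes $(\epsilon_{1,V_1},\epsilon_{2,V_1})\le(\delta_1,0)$, $(\epsilon_{1,V_2},\epsilon_{2,V_2})\le\bigl(\tfrac{p}{1-p}\delta_1,\tfrac{5p}{1-p}\delta_1\bigr)$, and $(\epsilon_{1,V_m},\epsilon_{2,V_m})\le(\delta_m,0)$. Taking the square roots gives $a_1=\sqrt{\delta_1}$, $a_2=\sqrt{6p\delta_1/(1-p)}$, $b=\sqrt{\delta_m}$; plugging \emph{these} into Lemma~\ref{le:mix} yields $4\sqrt{\delta_m}+2p\delta_1+12p\delta_1\le4\sqrt{\delta_m}+14\delta_1\le\max\{8\sqrt{\delta_m},28\delta_1\}$. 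The failure-probability bound likewise comes from the $1-2\sqrt{\tilde\epsilon}$ success guarantee in Lemma~\ref{le:qsp_functions_conditions}, averaged over $i\in\{1,2\}$ with weights $p,1-p$, not from a separate OAA round.
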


The other algorithm we demonstrate under the context of QSP is the uniform spectral amplification (USA)~\cite{lowQuantumSignalProcessing2017}, which is a generalization of amplitude amplifications~\cite{nagaj2009fast,berryExponentialImprovementPrecision2014} and spectral gap amplification~\cite{SommaSpectral2013}. This algorithm amplifies the eigenvalue by $1/{2\Gamma}$ if $|\lambda| \in [0,\Gamma]$ while maintaining $H$ normalized. Specifically, USA approximates the truncated linear function 
\begin{equation}
\label{eq:QSP_linear_function}
    f_{\Gamma}(\lambda) = \begin{cases}
        \frac{\lambda}{2\Gamma}, & |\lambda| \in [0,\Gamma] \\
        \in [-1,1], &  |\lambda| \in (\Gamma,1],
    \end{cases}
\end{equation}
Eq.~\eqref{eq:QSP_linear_function} is approximated by $\tilde{f}_{\Gamma,\delta}(\lambda)$, where $\delta = \displaystyle\max_{|x| \in [0,\Gamma]} \left|\tilde{f}_\Gamma(x)  - x/(2\Gamma)\right|$ is the maximum error tolerance, formed by composing a truncated Jacobi–Anger expansion approximation of the error functions. Using RTS, error and circuit cost are given by the following corollary.

\begin{corollary}
    \label{cor:USA}
    For the unitary $\hat{W}_{\vec{\varphi}_{1(2)}}$ and post-selection scheme that implementing the operator in Eq.~\eqref{ep:QSP_lin_v_1(2)} (see Methods), there exist two sets of angles $\vec{\varphi}_{1(2)}$ such that $D_{1(2)}(\lambda) = \hat{P}_{\Gamma,\delta,K_{1(2)}}(\lambda)$. Denote these two quantum circuits as $V_1$ and $V_2$ respectively. Then, given a mixing probability $p \in [0,1)$ and an arbitrary density matrix $\rho$, distance between the evolved state under the mixing channel $\mathcal{V}_{\text{mix}}(\rho) = p V_1\rho V_1^{\dagger} +  (1-p) V_2\rho V_2^{\dagger}$ and an ideal transformation implementing the transformation given by Eq.~\eqref{eq:QSP_TLF} is bounded by 
    \begin{equation}
    \left\|\mathcal{V}_{\text{mix}}(\rho) - f_{\Gamma,\delta}(\rho)\right\| \le \max\left\{8\delta_m,\frac{4}{1-p}\delta_1^2\right\},
    \end{equation}
    where $f_{\Gamma,\delta}(\rho) = f_{\Gamma,\delta}(H)\rho f_{\Gamma,\delta}(H)^\dagger$ is the ideal quantum channel, $\delta_1 =  \frac{8\Gamma e^{-8\Gamma ^2}}{\sqrt{\pi}} \frac{4(8\Gamma^2)^{K_1/2}}{2^{K_1/2}(K_1/2)!}$ and $\delta_m = \frac{8\Gamma e^{-8\Gamma ^2}}{\sqrt{\pi}} \frac{4(8\Gamma^2)^{K_2/2}}{2^{K_2/2}(K_2/2)!}$. The overall cost is 
    \begin{equation}
        G=\mathcal{O}\left(pK_1 + (1-p)K_2\right),
    \end{equation}
    where $d$ is the sparsity of $H$, and $K_1$ and $K_2$ are truncated order in $V_1$ and $V_2$ respectively.
\end{corollary}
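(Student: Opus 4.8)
\emph{Proof idea.} The plan is to reduce the claim to the generalized mixing lemma (Lemma~\ref{le:mix}) once the two QSP circuits have been placed into the Randomized-Truncated-Series pattern, so the real work is to (i) control the approximation error of the order-$K$ truncations of the composed polynomial, and (ii) check that those truncations are QSP-implementable. For (i) I would start from the target: the truncated linear function $f_\Gamma$ of Eq.~\eqref{eq:QSP_linear_function} is approximated by $f_{\Gamma,\delta}$, which is built by composing a truncated Jacobi-Anger expansion of an error-function kernel. First I would record the standard Bessel/Chebyshev tail estimate for the Jacobi-Anger series of the $e^{-x^2}$-type kernel; truncating it and composing produces a fixed-parity polynomial $\hat P_{\Gamma,\delta,K}$ whose degree is controlled by $K$ and whose uniform distance to $f_{\Gamma,\delta}$ on $[-1,1]$ is at most $\delta_K:=\tfrac{8\Gamma e^{-8\Gamma^2}}{\sqrt\pi}\,\tfrac{4(8\Gamma^2)^{K/2}}{2^{K/2}(K/2)!}$, the exponent $K/2$ reflecting the kernel's even parity. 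Since $\lVert H\rVert\le1$, the spectral mapping theorem upgrades this to $\lVert f_{\Gamma,\delta}(H)-\hat P_{\Gamma,\delta,K}(H)\rVert\le\delta_K$, so that $a_1\le\delta_1$ (taking $K=K_1$) and the mixed-operator error $b\le\delta_m$ (taking $K=K_2$).

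Next I would set up the two circuits following the RTS recipe: $V_1$ block-encodes $\hat P_{\Gamma,\delta,K_1}$, and $V_2$ block-encodes the tail-amplified polynomial $\hat P_{\Gamma,\delta,K_1}+\tfrac1{1-p}\bigl(\hat P_{\Gamma,\delta,K_2}-\hat P_{\Gamma,\delta,K_1}\bigr)$, so that $V_{\rm m}=pV_1+(1-p)V_2$ block-encodes exactly the order-$K_2$ truncation $\hat P_{\Gamma,\delta,K_2}$. Existence of the angle sequences $\vec\varphi_1,\vec\varphi_2$ realizing these as $D_{1(2)}(\lambda)$ follows from the QSP/QSVT characterization of achievable block-encoded polynomials; the one point to verify is that the amplified polynomial still has sup-norm at most $1$ on $[-1,1]$. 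Because the amplified block is the exponentially small tail $\hat P_{\Gamma,\delta,K_2}-\hat P_{\Gamma,\delta,K_1}$, its sup-norm is $O(\delta_1)$, so a rescaling by $1+O(\delta_1)$ is enough and injects only $O(\delta_1^2)$ extra error, negligible against the final bound. Post-selecting on the block-encoding ancilla then implements $V_1$ and $V_2$, with $pV_1+(1-p)V_2=\hat P_{\Gamma,\delta,K_2}(H)$.

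With the circuits in hand I would apply the mixing lemma with $U=f_{\Gamma,\delta}(H)$. From the previous steps $a_1=\lVert U-V_1\rVert\le\delta_1$, $b=\lVert U-V_{\rm m}\rVert\le\delta_m$, and the triangle-inequality estimate used in the proof of Theorem~\ref{cor:main} gives $a_2=\lVert U-V_2\rVert\le b+\tfrac p{1-p}a_1\le\delta_m+\tfrac p{1-p}\delta_1$. Because $f_{\Gamma,\delta}$ is a contraction rather than a unitary, I would use the contraction version of Lemma~\ref{le:mix}: the expansion of $\mathcal V_{\text{mix}}(\rho)-U\rho U^\dagger$ into the cross terms $\ket{\epsilon_m}\bra\psi U^\dagger+U\ket\psi\bra{\epsilon_m}+p\ket{\epsilon_1}\bra{\epsilon_1}+(1-p)\ket{\epsilon_2}\bra{\epsilon_2}$ is unchanged and only $\lVert U\ket\psi\rVert\le1$ is used, yielding $\lVert\mathcal V_{\text{mix}}(\rho)-f_{\Gamma,\delta}(\rho)\rVert\le 2b+pa_1^2+(1-p)a_2^2$ up to the renormalization factor. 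Substituting the bounds, expanding, using $\delta_1,\delta_m\le1$ to absorb the cross terms $\delta_1\delta_m$ and $\delta_m^2$ into $\delta_m$, and simplifying $p+\tfrac{p^2}{1-p}=\tfrac p{1-p}$, I would split the estimate into a piece controlled by $\delta_m$ and a piece controlled by $\delta_1^2/(1-p)$, which after collecting constants yields the claimed $\max\{8\delta_m,\tfrac4{1-p}\delta_1^2\}$. Finally, each $V_i$ is a degree-$O(K_i)$ QSP sequence using $O(K_i)$ calls to the block-encoding of $H$, so mixing with probabilities $p$ and $1-p$ gives expected cost $G=\mathcal O(pK_1+(1-p)K_2)$.

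The hard part will be making the composition-stability step rigorous: tracking precisely how the degree of the truncated inner Jacobi-Anger expansion becomes the degree of $\hat P_{\Gamma,\delta,K}$, pinning down the Lipschitz constant of the outer map on the range of the inner approximant so that the $\delta_K$ prefactor comes out exactly as stated, and keeping the parity bookkeeping straight --- together with the QSP sup-norm check for the amplified polynomial in the second step. Once those are settled, the mixing-lemma application is routine constant-chasing and follows the same template as Corollaries~\ref{cor:LCU} and~\ref{cor:QSP_HS}.
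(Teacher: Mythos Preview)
Your proposal is correct and follows essentially the same route as the paper: bound the Jacobi--Anger tail for the error-function kernel (the paper's Lemma on $\epsilon_{erf,\gamma,K}$), propagate it through the composition via $\epsilon_{\Gamma,K}\le 2\epsilon_{erf,4\Gamma,K-1}$ to obtain $a_1,a_2,b$, and then feed these into Lemma~\ref{le:mix} and simplify via $x+y\le 2\max\{x,y\}$. The only minor deviations are that the paper computes $a_2=\tfrac{p}{1-p}a_1$ directly from the form of the modified polynomial rather than through your triangle-inequality estimate $a_2\le b+\tfrac{p}{1-p}a_1$ (both give the same leading term), and the paper does not explicitly address your (legitimate) observation that $f_{\Gamma,\delta}(H)$ is a contraction rather than a unitary when invoking Lemma~\ref{le:mix}.
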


Since the ancillary qubit required to perform QSP does not depend on the truncation order $K$, the number of ancillary qubits is not affected by employing RTS.

\subsection{Ordinary Differential Equations example}

Solving differential equations~\cite{kroviImprovedQuantumAlgorithms2023,berryQuantumAlgorithmLinear2017,berryHighorderQuantumAlgorithm2014,liuDenseOutputsQuantum2023} is another promising application of quantum computing, empowering numerous applications.

Consider an anti-Hermitian operator $A$ and the differential equation of the form $d\vec{x}/dt = A\vec{x}+ \vec{b}$, where $A \in \mathbb{R}^{n\times n}$ and $\vec{b} \in \mathbb{R}^n$ are time-independent. The exact solution is given by
\begin{equation}
\label{eq:ODE}
    \vec{x}(t) = e^{At}\vec{x}(0) + \left(e^{At}-\mathds{1}\right)A^{-1}\vec{b},
\end{equation}
where $\mathds{1}$ is the identity matrix. 

We can approximate $e^{z}$ and $\left(e^{z}-\mathds{1}\right)z^{-1}$ by two $K_1$-truncated Taylor expansions:
\begin{equation}
\label{eq:ODE_functions}
\begin{aligned}
    T_{K_1}(z) &:= \sum^{K_1}_{k=0} \frac{z^k}{k!} \approx e^{z},\\
    S_{K_1}(z) &:= \sum^{K_1}_{k=1}\frac{z^{k-1}}{k!} \approx \left(e^{z}-1\right)z^{-1}.
\end{aligned}
\end{equation}

Given non-negative integer $j$, denote $\vec{x}^j$ as the approximated solution at time $jh$ for a short time step $h$ with $\vec{x}^0 = \vec{x}(0)$. We can calculate $\vec{x}^j$ by the recursive relation 
\begin{equation}
\label{eq:ODE_recur}
    \vec{x}_1^{j} = T_{K_1}(Ah)\vec{x}^{j-1} + S_{K_1}(Ah)h\vec{b}.
\end{equation}
Furthermore, we encode the series of recursive equations in a large linear system $\mathcal{L}_1$ as proposed in Ref.~\cite{berryQuantumAlgorithmLinear2017} and denote the operator solving the linear system as $V_1$. Sampling results on $V_1$ give information of $\vec{x}^{j}$ for time step $j$. To employ RTS, we construct another circuit $V_2$ that encodes $\vec{x}_2^j = T_{K_2}(Ah)\vec{x}_2^{j-1} + S_{K_2}(Ah)h\vec{b}$, where $T_{K_2}(z)$ and $S_{K_2}(z)$ are modified expansions with maximum order $K_2$ in another linear system $\mathcal{L}_2$. RTS mixes the solution to $\mathcal{L}_1$ and $\mathcal{L}_2$ with probability $p$ and $1-p$ respectively to give $\vec{x}_{\text{mix}}^j$.
\begin{figure*}
    \centering
    \includegraphics[width = .9\linewidth]{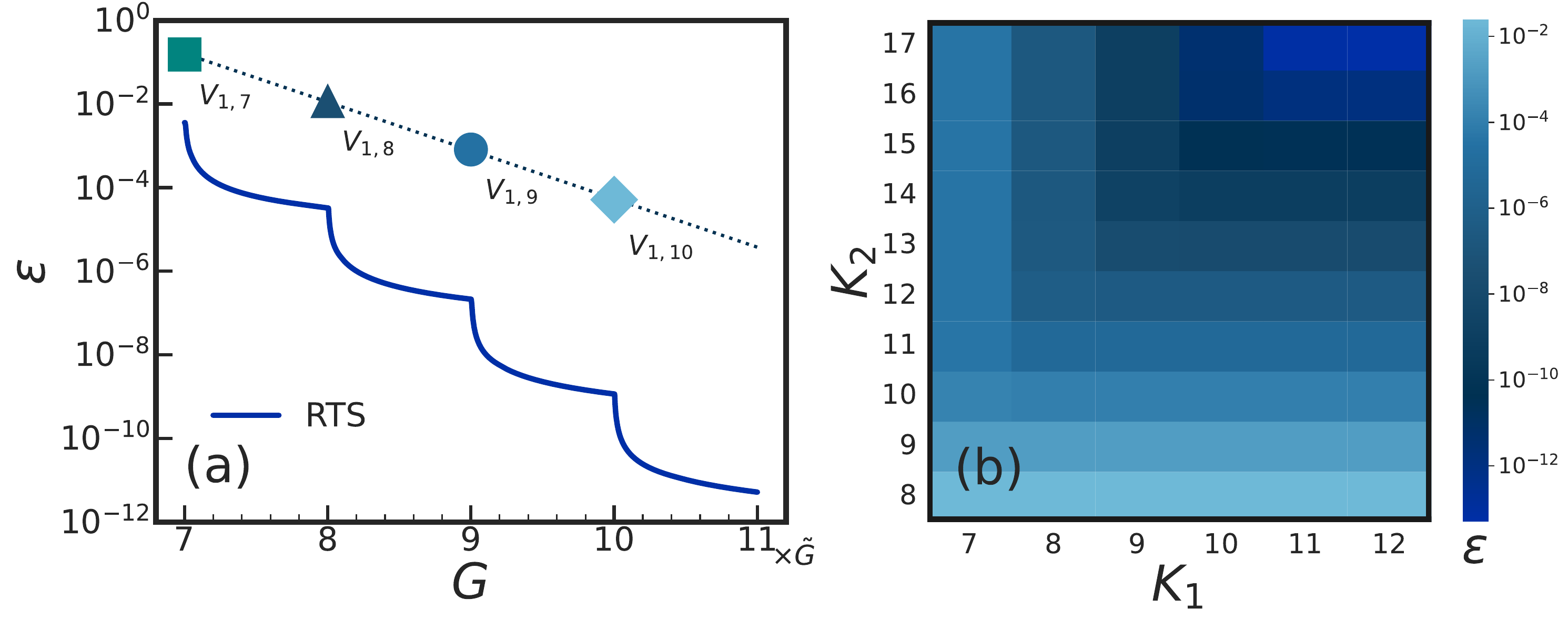}
    \caption{(a) Performance enhancement achieved by applying RTS to the BCCKS algorithm. We denote $V_{1,K1}$ as the performance of the BCCKS algorithm with truncation order $K_1$. With RTS, the overall error exhibits a substantial reduction of several orders of magnitude, consistent with quadratic error suppression. Each point on the curve represents the error obtained using the optimal set of parameters $\{K_1,K_2,p\}$. $\tilde{G} = 131574240$ is a multiplier for CNOT gate cost. (b) illustrates the variation of epsilon with $K_1$ and $K_2$ for a fixed $p = 0.8$.}
    \label{fig:error_cost}
\end{figure*}

\begin{corollary}
\label{cor:APP_ODE}
    Suppose $V_1$ and $V_2$ are quantum circuits solving the linear system in Eq.~\eqref{eq:ODE_LS} with $C_{m,K_1,p}(A)$ and $\tilde{C}_{m,K_1,K_2,p}(A)$ respectively. Solutions at time $jh$ are denoted by $x_1^j$ and $x_2^j$. We apply our framework $\mathcal{V}_{\text{mix}}(\rho) = p V_1\rho V_1^{\dagger} +  (1-p) V_2\rho V_2^{\dagger}$ for a mixing probability $p \in [0,1)$, and denote the obtained solution as $\vec{x}_{\text{mix}}^j$. We can upper bound the estimation error by
    \begin{equation}
        \left\|\vec{x}_{\text{mix}}^j-\vec{x}(jh)\right\| \le \max\left\{8\delta_m, \frac{4}{1-p}\delta_1^2\right\},
    \end{equation}
    where $\delta_1 \le \frac{\mathcal{C}_j}{(K_1+1)!}$, and $\delta_m \le \frac{\mathcal{C}_j}{(K_2+1)!}$. $\mathcal{C}_j$. $\mathcal{C}_j$ is a problem specific constant. 
\end{corollary}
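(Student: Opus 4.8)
The plan is to follow the template of Theorem~\ref{cor:main}: isolate the three error quantities $a_1=\|V_1-U\|$, $a_2=\|V_2-U\|$ and $b=\|V_{\rm m}-U\|$ with $V_{\rm m}=pV_1+(1-p)V_2$ and $U$ the exact evolution of Eq.~\eqref{eq:ODE} restricted to the time-step-$j$ block, verify the relation $a_2\le b+\tfrac{p}{1-p}a_1$, and then apply Lemma~\ref{le:mix}. The work specific to the ODE solver is threefold: (i) translate the quantum-linear-system solution back into an operator acting on the initial data; (ii) control the truncation error accumulated over the $j$ time steps of the recursion Eq.~\eqref{eq:ODE_recur}; and (iii) verify that the convex mixture of the two truncated solutions reproduces the order-$K_2$ recursion up to a quadratically small correction.

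For (i)--(ii) I would use the construction of Ref.~\cite{berryQuantumAlgorithmLinear2017}: after solving the linear system of Eq.~\eqref{eq:ODE_LS} and post-selecting on the $j$-th time-step block, $V_i$ implements the affine map $\vec{x}^0\mapsto\vec{x}_i^j$, i.e. the $j$-fold composition of the one-step map built from $T_{K_i}$ and $S_{K_i}$ (the amplified variants for $i=2$), while $U$ implements $\vec{x}^0\mapsto\vec{x}(jh)$ from Eq.~\eqref{eq:ODE}. Because $A$ is anti-Hermitian, $\|e^{Ah}\|=1$ and --- with $h$ chosen so that the partial-sum propagators $T_{K_i}(Ah)$ are well-conditioned, as in Ref.~\cite{berryQuantumAlgorithmLinear2017} --- these post-selected operators are near-unitary, so Lemma~\ref{le:mix} applies (for a coherent output one uses the history-state/segmented form of the lemma, as in Corollary~\ref{cor:LCU}). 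Bounding the one-step local error by the Taylor remainder, $\|T_K(Ah)-e^{Ah}\|\le e^{\|Ah\|}\|Ah\|^{K+1}/(K+1)!$ and similarly for $S_K$, and propagating it additively over the $j$ steps, gives $a_1\le\delta_1\le\mathcal{C}_j/(K_1+1)!$ and, for the \emph{unamplified} order-$K_2$ recursion, error $\le\delta_m\le\mathcal{C}_j/(K_2+1)!$, where the problem constant $\mathcal{C}_j$ absorbs $j$, $\|\vec b\|$, $\|\vec x^0\|$, the stability constant, and the (harmless) powers of $\|Ah\|$. The amplified high-order block of $V_2$ then contributes an extra $\tfrac{p}{1-p}\delta_1$, so $a_2\le\delta_m+\tfrac{p}{1-p}\delta_1\le\tfrac{1}{1-p}\delta_1$ (using $\delta_m\le\delta_1$ since $K_2>K_1$).

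Step (iii) is the one I expect to be the main obstacle. At the level of a single recursion step the amplification factor $1/(1-p)$ is cancelled exactly by the weight $1-p$, so the mixed step map equals the \emph{unamplified} order-$K_2$ step map --- the analogue of the identity $pF_1+(1-p)F_2=\sum_{k=0}^{K_2}\alpha_kH^k$ from Theorem~\ref{cor:main}. But $V_i$ is the $j$-fold \emph{composition} of its step map, and a convex combination does not commute with composition. Writing each step map as (exact step)$\,+\,\delta\Phi_i$ and expanding the $j$-fold product, the weighted first-order terms reassemble into the order-$K_2$ recursion error (bounded by $\delta_m$), while the residual is built from binomial cross-terms that are second order in $\delta\Phi_1,\delta\Phi_2$; since $\|\delta\Phi_1-\delta\Phi_2\|=\mathcal{O}(\tfrac{1}{1-p}\eta_1)$ per step, these accumulate to $\mathcal{O}(\tfrac{p}{1-p}\delta_1^2)$. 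Equivalently, in the linear-system picture $\mathcal{L}_i=\mathds{1}-N_i$ with $N_i$ strictly block-lower-triangular (hence nilpotent over the finite time horizon), one uses $pN_1+(1-p)N_2=N_{\rm m}$, so $p\mathcal{L}_1^{-1}+(1-p)\mathcal{L}_2^{-1}=\sum_k(pN_1^k+(1-p)N_2^k)$ agrees with $\mathcal{L}_{\rm m}^{-1}=\sum_kN_{\rm m}^k$ through the linear term and differs only in quadratic-and-higher terms. Either route yields $b\le\delta_m+\mathcal{O}(\tfrac{1}{1-p}\delta_1^2)$.

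Finally I would substitute $a_1\le\delta_1$, $a_2\le\tfrac{1}{1-p}\delta_1$ and $b\le\delta_m+\mathcal{O}(\tfrac{1}{1-p}\delta_1^2)$ into $\varepsilon=4b+2pa_1^2+2(1-p)a_2^2$ from Lemma~\ref{le:mix}; every term is $\mathcal{O}(\delta_m)$ or $\mathcal{O}(\tfrac{1}{1-p}\delta_1^2)$, and collecting constants via $\max\{x,y\}\ge\tfrac12(x+y)$ together with $\delta_m\le\delta_1$ gives $\|\vec{x}_{\rm mix}^j-\vec{x}(jh)\|\le\max\{8\delta_m,\tfrac{4}{1-p}\delta_1^2\}$. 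The cost claim is then immediate: $\mathfrak{V}_1$ (order $K_1$) is drawn with probability $p$ and $\mathfrak{V}_2$ (order $K_2$) with probability $1-p$, and since the Taylor order enters the BCCKS/Berry resource count linearly, the average cost scales with $pK_1+(1-p)K_2$, matching Corollaries~\ref{cor:LCU}--\ref{cor:USA}.
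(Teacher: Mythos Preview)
Your proposal is correct and follows the same line as the paper's proof: bound the single-step truncation errors of $T_{K_1}$, $\tilde T_{K_2}$ and their convex mixture $pT_{K_1}+(1-p)\tilde T_{K_2}=T_{K_2}$ (and observe that the $S_k$ bounds are dominated by the $T_k$ bounds via $e^z-T_k(z)=z\bigl((e^z-1)z^{-1}-S_k(z)\bigr)$), propagate these over $j$ steps by inserting them into Theorem~6 of Ref.~\cite{berryQuantumAlgorithmLinear2017} to obtain $\delta_1,\delta_2,\delta_m$ with the stated $\mathcal C_j/(K+1)!$ form, and then invoke Lemma~\ref{le:mix}.

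The one place where you go beyond the paper is your step~(iii). The paper simply identifies $b$ with $\delta_m$, i.e.\ treats $pV_1+(1-p)V_2$ as if it were the unamplified order-$K_2$ solver, on the strength of the exact single-step identity $pT_{K_1}+(1-p)\tilde T_{K_2}=T_{K_2}$; it does not discuss the fact that the $j$-fold composition of the mixed step differs from the convex mixture of the $j$-fold compositions. Your observation that this discrepancy is second order in the per-step errors --- hence contributes only an $\mathcal O\bigl(\tfrac{1}{1-p}\delta_1^2\bigr)$ correction to $b$ that is absorbed into the $\tfrac{4}{1-p}\delta_1^2$ term --- is the rigorous justification the paper leaves implicit. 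Either your expansion $\Phi_i^j=U^j+\sum_kU^{j-1-k}\delta\Phi_iU^k+\mathcal O(\delta\Phi_i^2)$ or the nilpotent Neumann-series argument makes this precise, and both lead to the same final bound as the paper.
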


Because the size of the linear system grows with the truncation order, solving $\mathcal{L}_2$ can require a larger quantum circuit than before. Nevertheless, as in the BCCKS case, the total number of qubits engaged over the entire execution is actually lower, so the overall spatial resource is reduced.  The trade-off is that the instantaneous circuit width, which is the peak qubit count at one execution, can be higher.

\subsection{Numerical result}
To validate the improvements, we illustrate RTS by employing it in the HS with the BCCKS algorithm~\cite{berrySimulatingHamiltonianDynamics2015} for the following Ising model with $n = 100$ and $t = 100$
\begin{equation}
\label{eq:simulation_H}
    H = \sum_{i=1}^{n} \sigma_i^{x}\sigma_{i+1}^x  + \sum^n_{i=1}\sigma_i^z,
\end{equation}
where $\sigma_i$ are Pauli's operators acting on the $i^{th}$ qubit and we choose to set all interaction and external field parameters to be $1$ for simplicity. We can decompose Eq.~\eqref{eq:simulation_H} into $L = 200$ Pauli operators, and the coefficient of each Pauli is 1. Thus, we must separate the simulation into $r = \sum_i\alpha_i t/\log(2) = 28854$ segments. We neglect that the evolution time for the last segment is less than $t/r$ for simplicity. 

We focus on the dominant gate cost by ${\rm Select}(H)$, for each segment, we need to perform $3 (4)$ ${\rm Select}(H)$ oracles for each of $V_1(V_2)$ defined in Eq.~\eqref{eq:LCU_circuit}. We neglect any extra cost for implementing the Hermitian conjugate. Each ${\rm Select}(H)$ oracle can be implemented by $K(7.5\times2^w + 6w - 26)$ CNOT gates \cite{childsFirstQuantumSimulation2018}, where $K$ is the truncation order and $w = \log_2(L)$. The CNOT gate cost for faithfully implementing the LCU algorithm at truncation $K$ is thus well approximated by $\tilde{G} = 3rK(7.5\times2^w + 6w - 26)$. We define $G = \tilde{G}/(3r(7.5\times2^w + 6w - 26))$ as a cost indicator since it removes all constants. Note that implementing $V_2$ costs $4/3 G$ more than $V_1$.

Given a fixed CNOT gate cost budget $G$, we exhaustively search through all feasible sets $\{K_1,K_2,p\}$ that consume the entire budget and identify the minimum error upper bound $\epsilon$, with $K_1 \in [1,100]$, $K_2\in [K_1+1,100]$ and $p \in [0,1)$ such that $pK_1 + (1-p)(4/3)K_2 = G$. The results are presented in Fig.~\ref{fig:error_cost}(a). RTS achieves the same accuracy with a reduced gate count. For instance, targeting $\epsilon = 10^{-8}$, we achieve a CNOT-gate savings of approximately 30\%. This cost reduction becomes more significant for larger problem sizes or higher accuracy requirements as can be seen from the trends in Fig.~\ref{fig:error_cost}. We also calculate the cost saving for various $\epsilon$ and achieved over 40\% cost saving. The continuous nature of the error-cost curve highlights the flexibility of RTS in finding optimal parameter combinations without excess gate usage, in contrast to the discreteness presented in the BCCKS algorithm. Fig.~\ref{fig:error_cost}(b) delves deeper into the origins of this error reduction. We observe that for a fixed $K_2$, increasing $K_1$, (which incurs a higher cost) leads to a smaller error reduction compared to increasing $K_2$, (where the cost is mitigated by $p$). Consequently, we can exploit this principle to realize the aforementioned quadratic error improvement. 

The foregoing analyses are strictly asymptotic. In practice the
actual saving depends on sub‑leading terms that are suppressed in the
$\mathcal{O}(\cdot)$ notation.  Consequently the saving may not reach 50\% in specific finite examples.  For the numerical case study on the BCCKS algorithm our simulation saturates at approximately 40~\%
(Table~\ref{tab:cost-saving}).

\begin{table}[htbp]
\centering
\begin{tabular}{|c|c|c|c|}
\hline
\textbf{Error} & \textbf{Framework Cost ($G$)} & \textbf{Original Cost ($G$)} & \textbf{Cost Saving (\%)} \\
\hline
$1.0 \times 10^{-4}$  & 7.29 & 10  & 27.1 \\
$1.0 \times 10^{-8}$  & 9.12 & 13  & 29.9 \\
$1.0 \times 10^{-12}$ & 11.02 & 16 & 31.1 \\
$1.0 \times 10^{-16}$ & 12.59 & 19 & 33.8 \\
$1.0 \times 10^{-20}$ & 14.05 & 22 & 36.1 \\
$1.0 \times 10^{-24}$ & 15.73 & 24 & 34.5 \\
$1.0 \times 10^{-28}$ & 17.04 & 27 & 36.9 \\
$1.0 \times 10^{-32}$ & 18.28 & 29 & 37.0 \\
$1.0 \times 10^{-36}$ & 20.04 & 32 & 37.4 \\
$1.0 \times 10^{-40}$ & 21.06 & 34 & 38.1 \\
$1.0 \times 10^{-44}$ & 22.41 & 37 & 39.4 \\
$1.0 \times 10^{-48}$ & 24.02 & 39 & 38.4 \\
$1.0 \times 10^{-52}$ & 25.04 & 41 & 38.9 \\
$1.0 \times 10^{-56}$ & 26.14 & 43 & 39.2 \\
$1.0 \times 10^{-60}$ & 27.96 & 45 & 37.9 \\
$1.0 \times 10^{-64}$ & 29.02 & 48 & 39.5 \\
$1.0 \times 10^{-68}$ & 30.04 & 50 & 39.9 \\
$1.0 \times 10^{-72}$ & 31.07 & 52 & 40.3 \\
$1.0 \times 10^{-76}$ & 32.34 & 54 & 40.1 \\
\hline
\end{tabular}
\caption{Comparison of framework cost and original cost at varying error thresholds.  The dimensionless unit $\tilde{G}=131\,574\,240$ can be interpreted as an effective truncation order.}
\label{tab:cost-saving}
\end{table}
 
\section{Discussion}
\label{sec:sum}
We have presented the RTS framework, which can accelerate a broad range of quantum algorithms that rely on truncated series approximations. RTS enables a ``fractional'' truncation order and provides a quadratic improvement on $\epsilon$. Essentially, we developed a random mixing protocol with two input quantum circuits $V_1$ and $V_2$. Their truncation errors cancel out one another during the mixing channel, and with the newly introduced mixing probability $p$, continuous adjustment of the overall cost becomes viable. We specifically exhibit the implementation of RTS in the context of HS, uniform spectral amplification, and solving time-independent ODE. These illustrate the flexibility of RTS to be embedded in other algorithms as presented in ODE implementation, and to encompass subroutines like oblivious amplitude amplification and the linear combination of multiple polynomials. Finally, we evaluated the reduction of the CNOT gate in the BCCKS algorithm.

The acceleration achieved by RTS stems from the introduction of classical randomness into the quantum circuit. By leveraging prior classical knowledge of operators like knowing the high-order terms of the series, we can randomize them and save quantum resources. It is important to note that RTS is compatible with other circuit optimization protocols, potentially leading to further speedups like quantum-circuit synthesis~\cite{jiangOptimalSpaceDepthTradeCNOT2022}.  The ancilla requirements in our RTS protocol vary depending on the specific algorithm being applied to. For BCCKS and ODE solver, the largest individual circuit (segment) does require a wider register; however, the total number of qubits that need to be prepared throughout is lower, because the average register size decreases. In algorithms such as QSP, where the ancilla count is independent of the truncation order, spatial cost is not affected by RTS.

RTS can also apply to the recently proposed LCHS~\cite{anLinearCombinationHamiltonian2023, an2023quantum} and QEP~\cite{lowQuantumEigenvalueProcessing2024} algorithms simulating non-unitary dynamics. Although no integer constraint applies to truncated-integral algorithms like LCHS, employing RTS also offers a quadratic improvement in truncation error. We anticipate that RTS can be further extended to encompass a broader range of quantum algorithms, including those involving time-dependent operators~\cite{lowHamiltonianSimulationInteraction2019} and infinite space truncations~\cite{mizutaOptimalHamiltonianSimulation2023, clinton2024towards}.

\section{Methods}
\label{sec:method}

\subsection{BCCKS}

In the BCCKS algorithm, we approximate the unitary, $U = e^{-iHt}$, through a truncated Taylor series, where each term in the series is unitary. One can apply the LCU algorithm \cite{childsHamiltonianSimulationUsing} to combine them to approximate $U$. 

More formally, any Hamiltonian $H$ can be represented by a sum of unitary components, i.e. 
\begin{equation}
    \label{eq:H_expand}
    H = \sum^L_{l=1}\alpha_l H_l.
\end{equation}

With Eq.~\eqref{eq:H_expand}, we express the order $K$ truncated Taylor series as 
\begin{equation}
    \label{eq:U_Taylor}
    \tilde{U} := \sum^K_{k=0} \frac{1}{k!}\left(-iHt\right)^k = \sum^K_{k=0} \sum^L_{l_1,\dots,l_k = 1} \overbrace{\frac{\alpha_{l_1} \alpha_{l_2} \ldots \alpha_{l_k} t^k}{k !}}^{\rm Coefficients }\underbrace{(-i)^k H_{l_1} H_{l_2} \ldots H_{l_k}}_{\rm Unitaries },
\end{equation}
where $\alpha_l \ge 0$ since we can absorb the negative sign in the corresponding $H_l$.
$\tilde{U}$ is in a standard form of LCU, i.e. $\sum_j \beta_j \tilde{V}_j$ for positive coefficients $\beta_j$ and unitaries $\tilde{V}_j$:

To implement the LCU algorithm, we first define two oracles
\begin{equation}
\label{eq:lcu_oracle}
 \begin{aligned}
G\ket{0} :=\frac{1}{\sqrt{s}}\sum_{j}\sqrt{\beta_j}\ket{j}, \quad
\mathbf{\rm SELECT}(\tilde{U}) :=\sum_{j} \ket{j}\bra{j} \otimes \tilde{V}_j,
\end{aligned}
\end{equation}
where $s = \sum_{j=0}^{L^K -1} \beta_j$.

With these two oracle, we can construct 
\begin{equation}
 \begin{aligned}
W := (G^{\dag}\otimes \mathds{1}) \mathbf{\rm SELECT}(\tilde{U}) (G\otimes \mathds{1}),
\end{aligned}
\end{equation}
where $\mathds{1}$ is the identity operator. Such that 
\begin{equation}
    W\ket{0}\ket{\psi} = \frac{1}{s}\ket{0}\tilde{U}\ket{\psi} + \ket{\perp},
\end{equation}
where $\left(\bra{0}\otimes\mathds{1}\right)\ket{\perp} = 0$.

Therefore, we successfully obtain $\tilde{U}\ket{\psi}$ heralding by measuring state $\ket{0}$ in the ancilla with probability $1/s^2$. Practically, the time of evolution $t$ is very large, making $1/s^2$ extremely small. We thus further apply oblivious amplitude amplification (OAA) \cite{berryExponentialImprovementPrecision2014} to amplify the success probability to near unity. To conduct, we control $s$ by dividing the evolution into $r = \left\lceil(\sum_{l=1}^L\alpha_l t) / \ln 2\right\rceil$ segments such that each segment has $s=2$ in the case of $K = \infty$. The last segment has a different $s$ due to the ceiling rounding, and its treatment is illustrated in ref. \cite{berryExponentialImprovementPrecision2014} with the cost of one additional ancillary qubit. In the actual implementation, since we only have finite $K$, $s$ will be slightly less than 2. However, OAA is robust as long as $|s -2| \le \mathcal{O}\left(\epsilon\right)$ and $\|\tilde{U} - U\| \le \mathcal{O}\left(\epsilon\right)$ as analyzed in ref. \cite{berrySimulatingHamiltonianDynamics2015}. By OAA, we can implement a segment with high probability, and one can approximate $\tilde{U}$ by concatenating $r$ segments. In the following discussion, we focus on just one segment, and the error and cost corresponding to the entire evolution can be retrieved by multiplying $r$.

After applying OAA, we have
\begin{equation}
\label{eq:sup_LCU_V1}
 \begin{aligned}
PTW (\ket{0}\otimes \ket{\Psi})=\ket{0}\otimes \left(\frac{3}{s}\tilde{U} - \frac{4}{s^3}\tilde{U}\tilde{U}^\dagger\tilde{U}  \right)\ket{\Psi} =: \ket{0}\otimes V_1\ket{\Psi},
\end{aligned}
\end{equation}
where $P := \ket{0}\bra{0}\otimes \mathds{1}$, $T = -WRW^{\dag}R$, and $R=(\mathds{1}-2\ket{0}\bra{0})\otimes \mathds{1}$. Therefore, we can construct a quantum circuit with post-selection, $V_1 = \left(\bra{0}\otimes\mathds{1}\right)TW\left(\ket{0}\otimes\mathds{1}\right)$, to approximates a $K_1$ truncated Eq.~\eqref{eq:U_Taylor}. 

Additionally, we define the index set $J_1$ for the mapping from $j \in J_1$ index to the tuple $(k,l_1,l_2,\dots,l_k)$ as 
\begin{equation}
    J_1 := \left\{(k,l_1,l_2,\dots,l_k): k\in \mathbb{N}, k\le K_1,l_1,l_2,\dots,l_k\in \{1,\dots,L\}\right\}.
\end{equation}
\begin{equation}
\label{
eq:LCU_V1}
\begin{aligned}
    \U &= \sum^{K_1}_{k=0}\frac{1}{k!}\left(-iH\tau\right)^k \\
    &= \sum^{K_1}_{k=0} \sum^L_{l_1,\dots,l_k = 1} \frac{\alpha_{l_1} \alpha_{l_2} \ldots \alpha_{l_k} t^k}{k !}(-i)^k H_{l_1} H_{l_2} \ldots H_{l_k} \\
    &= \sum_{j\in J_1}\beta_j\tilde{V}_j.
\end{aligned} 
\end{equation}
Thus, $\U$ can be implemented by invoking $G_1$ and ${\rm select}(\U)$ with the same definition in Eq.~\eqref{eq:lcu_oracle} with $j \in J_1$. 
The other quantum circuit $V_2$ implements the sum
\begin{equation}
\label{eq:LCU_V2}
\begin{aligned}
    \UU &= \sum^{K_1}_{k=0}\frac{1}{k!}\left(-iH\tau\right)^k + \frac{1}{1-p}\sum^{K_2}_{k=K_1+1}\frac{1}{k!}\left(-iH\tau\right)^k \\
    &= \sum^{K_1}_{k=0} \sum^L_{l_1,\dots,l_k = 1} \frac{\alpha_{l_1} \alpha_{l_2} \ldots \alpha_{l_k} t^k}{k !}(-i)^k H_{l_1} H_{l_2} \ldots H_{l_k}  + \frac{1}{1-p}\sum^{K_2}_{k=K_1+1} \sum^L_{l_1,\dots,l_k = 1} \frac{\alpha_{l_1} \alpha_{l_2} \ldots \alpha_{l_k} t^k}{k !}(-i)^k H_{l_1} H_{l_2} \ldots H_{l_k}\\
    &= \sum_{j\in J_2}\beta_j\tilde{V}_j,
\end{aligned} 
\end{equation}
Where $J_2 : = \left\{(k,l_1,l_2,\dots,l_k): k\in \mathbb{N}, k\le K_2,l_1,l_2,\dots,l_k\in \{1,\dots,L\}\right\}$.
Since we already set $\tau\sum^L_{l=1}\alpha_l = \ln 2$, and we must have $2p\delta_1/(1-p) \le a_2$ holds for bounding error in OAA, which can be easily violated as we increase $p$. We, hence, amplify $s_2 = \sum_{j\in J_2} \beta_j$ to $\sin(\pi/10))^{-1}$, and we will need one more flip to achieve approximately unit success probability. We also define $G_2$ and ${\rm select}(\UU)$ as in Eq.~\eqref{eq:lcu_oracle} with $j \in J_2$.

\subsection{QSP}

Here we derive the detailed error bounds for QSP-based transformations. QSP equips us with the tool for implementing non-linear combinations of Hamiltonians $H$, i.e. $f[H] = \sum_i \alpha_i H^i$ subject to some constraints on the coefficient $\alpha_i$. In this section, we will analyze how to apply RTS to two instances in QSP, namely HS and Uniform Spectral Amplification(USA). Oracles in QSP are similar to what was discussed in LCU. However, we express them in another form to be consistent with existing literature  \cite{lowHamiltonianSimulationQubitization2019a,childsHamiltonianSimulationUsing,lowQuantumSignalProcessing2017}. Note that all variables in this section are irrelevant to definitions in the last section.

Assuming we have two oracles: $\hat{G}$ prepares the state
\begin{equation}
    \hat{G} \ket{0}_a = \ket{G}_a \in \mathcal{H}_a,
\end{equation}
in the ancillary space $\mathcal{H}_a$ and $\hat{U}$ block encodes the Hamiltonian $H$ such that
\begin{equation}
    \begin{aligned}
        \hat{U}\ket{G}_a\ket{\lambda}_s &= \ket{G}_aH\ket{\lambda}_s + \sqrt{1 - \|H\ket{\lambda}\|^2}\ket{G^\perp}_{a}\ket{\lambda}_s \\
        &=\lambda\ket{G_\lambda}_{as} + \sqrt{1 - |\lambda|^2}\ket{G_\lambda^\perp}_{as},
    \end{aligned}
\end{equation}
where $\ket{\lambda}_s$ is one of the eigenstates of $H$ in the subspace $\mathcal{H}_s$ and $\lambda$ is the corresponding eigenvalue such that $H\ket{\lambda}_s = \lambda \ket{\lambda}_s$, $\ket{G_\lambda}_{as}$ is the abbreviation of $\ket{G}_a\ket{\lambda}_s$ and $(\bra{G}_a\otimes \mathds{\hat{1}})\ket{G_\lambda^\perp}_{as} = 0$. However, successively applying $\hat{U}$ does not produce powers of $\lambda$ because its action on $\ket{G_\lambda^\perp}_{as}$ contaminates the block we are interested in. We thus need to construct a unitary iterate $\hat{W}$  
\begin{equation}
    \begin{aligned}
        \hat{W} &= \left(\left(2\ket{G}\bra{G} - \mathds{\hat{1}}_a\right)\otimes \mathds{\hat{1}}_s\right)\hat{S}\hat{U}\\
        &= \begin{pmatrix}
        \lambda\ket{G_\lambda}\bra{G_\lambda} & -\sqrt{1-|\lambda|^2}\ket{G_\lambda}\bra{G_\lambda^\perp} \\
        \sqrt{1-|\lambda|^2}\ket{G_\lambda^\perp}\bra{G_\lambda} & \lambda\ket{G_\lambda^\perp}\bra{G_\lambda^\perp}
        \end{pmatrix},
    \end{aligned}
\end{equation}
where the construction of $\hat{S}$ is out of the scope of this article, and details can be found in ref.~\cite{lowHamiltonianSimulationQubitization2019a}. 
Note that from lemma 17 in ref.~\cite{childsQuantumAlgorithmSystems2017} power of $\hat{W}$ has the form
\begin{equation}
    \hat{W}^n = \begin{pmatrix}
\mathcal{T}_n(\lambda) & \cdot \\
\cdot & \cdot
\end{pmatrix},
\end{equation}
where $\mathcal{T}_n(\lambda)$ is the $n$-th order Chebyshev polynomial. However, $f[H]$ available for $\hat{W}$ is limited due to the restriction on parity. We can add an ancilla in subspace $\mathcal{H}_b$ to rotate $\hat{W}$ for a wider variety of $f[H]$. Define $\hat{V} = e^{i\Phi}\hat{W}$, $\hat{V}_0 = \ket{+}\bra{+}_b\otimes \mathds{\hat{1}}_s + \ket{-}\bra{-}_b\otimes\hat{V}$, and 
\begin{equation}
\label{eq:qsp_phase_iterate}
    \hat{V}_{\vec{\varphi}} = \prod^{K/2}_{\text{odd }k = 1}\hat{V}_{\varphi_{k+1}+\pi}^\dagger\hat{V}_{\varphi_{k}},
\end{equation}
where $\hat{V}_{\varphi} = \left(e^{-i\varphi \hat{Z}/2}\otimes \mathds{\hat{1}}_s\right)\hat{V}_0\left(e^{i\varphi \hat{Z}/2}\otimes \mathds{\hat{1}}_s\right)$ and $\hat{Z}\ket{\pm}_b = \ket{\mp}_b$.
Eq.~\eqref{eq:qsp_phase_iterate} essentially implements
\begin{equation}
\label{eq:qsp_general_ABCD}
    \hat{V}_{\vec{\varphi}} = \bigoplus_{\lambda,\pm} \left(\mathds{\hat{1}}_b\mathcal{A}(\theta_\lambda)+i\hat{Z}_b\mathcal{B}(\theta_\lambda) +i\hat{X}_b\mathcal{C}(\theta_\lambda) + i\hat{Y}_b\mathcal{D}(\theta_\lambda)\right)\otimes \ket{G_{\lambda\pm}}\bra{G_{\lambda\pm}}_{as},
\end{equation}
where $\ket{G_{\lambda\pm}} = (\ket{G_\lambda}\pm i\ket{G_\lambda^\perp}/\sqrt{2}$ and $(\mathcal{A},\mathcal{B},\mathcal{C},\mathcal{D})$ are real functions on $\theta_\lambda$.
We can classically solve for the vector $\vec{\varphi}$ to control each $\theta_\lambda$ thus implementing various functions of $H$. Since the constraints differ with parity of $f[H]$, we will specify the corresponding constraints along with functions to be implemented in HS and USA.

\begin{lemma}{(lemma 14 and Theorem 1 of ref.~\cite{lowHamiltonianSimulationQubitization2019a})
\label{le:qsp_functions_conditions}
For any even integer $Q > 0$, a choice of functions $\mathcal{A}(\theta)$ and $\mathcal{C}(\theta)$ is achievable by the framework of QSP if and only if the following are true:}
    \begin{enumerate}
        \item $\mathcal{A}(\theta) = \sum^{K}_{k=0} a_k\cos(k\theta)$ be a real cosine Fourier series of degree at most $K$, where $a_k$ are coefficients;
        \item $\mathcal{C}(\theta) = \sum^{K}_{k=1} c_k\sin(k\theta)$ be a real sine Fourier series of degree at most $K$, where $c_k$ are coefficients;
        \item $\mathcal{A}(0) = 1 + \epsilon_1$, where $|\epsilon_1| \le 1$;
        \item $\forall \theta \in \mathbb{R}, \mathcal{A}^2(\theta) + \mathcal{C}^2(\theta) \le 1 + \epsilon_2$, where $\epsilon_2 \in [0,1]$.
    \end{enumerate}
    Then, with $\tilde{\epsilon} = \epsilon_1 + \epsilon_2$, we can approximate the evolution unitary $e^{-iHt}$ with classically precomputed $\vec{\varphi} \in \mathbb{R}^{2K}$ such that
    \begin{equation}
        \left\|\bra{+}_b\bra{G}_a\hat{V}_{\vec{\varphi}}\ket{G}_a\ket{+}_b - e^{-iHt}\right\| \le \mathcal{O}\left(\sqrt{\tilde{\epsilon}}\right).
    \end{equation}
    The post-selection succeeds with probability at least $1-2\sqrt{\tilde{\epsilon}}$.
\end{lemma}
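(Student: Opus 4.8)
The plan is to reduce the statement to the qubitization form of quantum signal processing and then propagate the two pointwise defects $\epsilon_1,\epsilon_2$ through a single renormalization step. First I would recall that, by construction, the phased iterate $\hat{V}_{\vec{\varphi}}$ of Eq.~\eqref{eq:qsp_phase_iterate} admits the block decomposition \eqref{eq:qsp_general_ABCD} with $(\mathcal{A},\mathcal{B},\mathcal{C},\mathcal{D})$ real functions of the qubitization angle $\theta_\lambda$ attached to each eigenvalue $\lambda$ of $H$ (the eigenphases of $\hat{W}$ being $e^{\pm i\theta_\lambda}$), constrained by the unitarity identity $\mathcal{A}^2+\mathcal{B}^2+\mathcal{C}^2+\mathcal{D}^2=1$. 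Post-selecting the $b$-ancilla on $\ket{+}$ and the $a$-ancilla on $\ket{G}$ collapses this to the scalar $\mathcal{A}(\theta_\lambda)+i\mathcal{C}(\theta_\lambda)$ acting on each eigenspace, so the whole lemma reduces to two questions: which pairs $(\mathcal{A},\mathcal{C})$ are realizable, and how close the realized operator and its post-selection probability are to the ideal.

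For realizability I would argue that conditions~1--2 are exactly the degree and cosine/sine (parity) structure demanded by the Chebyshev--Fourier dictionary underlying QSP, while condition~4 only gives $\mathcal{A}^2+\mathcal{C}^2\le 1+\epsilon_2$, which slightly violates the sub-normalization QSP needs; I would therefore first rescale $\mathcal{A}\mapsto\mathcal{A}/\sqrt{1+\epsilon_2}$, $\mathcal{C}\mapsto\mathcal{C}/\sqrt{1+\epsilon_2}$ (a perturbation of size $O(\epsilon_2)$), after which $\tilde{\mathcal{A}}^2+\tilde{\mathcal{C}}^2\le 1$ pointwise. A complementary-polynomial (Fej\'{e}r--Riesz-type) completion then produces companion Fourier series $\mathcal{B},\mathcal{D}$ of the correct degree and parity with $\tilde{\mathcal{A}}^2+\mathcal{B}^2+\tilde{\mathcal{C}}^2+\mathcal{D}^2=1$, and the standard QSP angle-finding routine returns the $2K$ phases $\vec{\varphi}$ realizing $(\tilde{\mathcal{A}},\mathcal{B},\tilde{\mathcal{C}},\mathcal{D})$; this is precisely Lemma~14 of Ref.~\cite{lowHamiltonianSimulationQubitization2019a}, and the extra $b$-ancilla with the dilation $\hat{V}_0=\ket{+}\bra{+}_b\otimes\mathds{\hat{1}}_s+\ket{-}\bra{-}_b\otimes\hat{V}$ is exactly what lifts the parity restriction so a mixed-degree $\mathcal{A}$--$\mathcal{C}$ pair can be implemented.

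For the error I would take $\mathcal{A}+i\mathcal{C}$ to be the real and imaginary parts of a truncated Jacobi--Anger-type expansion of $e^{-i\lambda t}$ written in the qubitization angle, so that condition~3 records the truncation defect $\mathcal{A}(0)=1+\epsilon_1$ (with $\mathcal{C}(0)=0$ automatically, $\mathcal{C}$ being a sine series). The operator-norm distance to the ideal is then $\max_\lambda\bigl|\tilde{\mathcal{A}}(\theta_\lambda)+i\tilde{\mathcal{C}}(\theta_\lambda)-e^{-i\lambda t}\bigr|$, which I would bound by combining the $O(\epsilon_2)$ rescaling error with the defect forced by $\mathcal{A}(0)=1+\epsilon_1$: near the point where $\tilde{\mathcal{A}}\approx 1$ the companion $\mathcal{B}=\sqrt{1-\tilde{\mathcal{A}}^2-\tilde{\mathcal{C}}^2-\mathcal{D}^2}$ is pushed to order $\sqrt{|\epsilon_1|}$, and it is this square root --- not any loss in the polynomial approximation itself --- that surfaces as the $\mathcal{O}(\sqrt{\tilde{\epsilon}})$ with $\tilde{\epsilon}=\epsilon_1+\epsilon_2$. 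The post-selection onto $\ket{+}_b\ket{G}_a$ succeeds on $\ket{\lambda}$ with probability $|\tilde{\mathcal{A}}(\theta_\lambda)+i\tilde{\mathcal{C}}(\theta_\lambda)|^2$, and since $\bigl||\tilde{\mathcal{A}}+i\tilde{\mathcal{C}}|-1\bigr|\le\bigl|\tilde{\mathcal{A}}+i\tilde{\mathcal{C}}-e^{-i\lambda t}\bigr|=\mathcal{O}(\sqrt{\tilde{\epsilon}})$, this probability is at least $1-2\sqrt{\tilde{\epsilon}}$.

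The main obstacle is the bookkeeping in the last step: showing rigorously that the defect $\epsilon_1$ in $\mathcal{A}(0)$ enters the final bound at order $\sqrt{\epsilon_1}$ through the Fej\'{e}r--Riesz companion rather than linearly, while simultaneously controlling the rescaling-by-$\sqrt{1+\epsilon_2}$ perturbation and the parity-handling dilation --- all of which is carried out, with explicit constants, in Lemma~14 and Theorem~1 of Ref.~\cite{lowHamiltonianSimulationQubitization2019a}, which I would cite for the details.
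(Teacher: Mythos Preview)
The paper does not prove this lemma; it is stated verbatim as a citation of Lemma~14 and Theorem~1 of Ref.~\cite{lowHamiltonianSimulationQubitization2019a}, followed only by the two-line remark that $\epsilon_1$ quantifies the truncation error of $\tilde{A}(\lambda)$ (since $\mathcal{A}(\theta)$ is maximized at $\theta=0$) and $\epsilon_2$ quantifies the rescaling error needed for the mix-in term. Your sketch --- block decomposition \eqref{eq:qsp_general_ABCD}, rescaling by $\sqrt{1+\epsilon_2}$, Fej\'er--Riesz completion of the companion polynomials, and the observation that the $\sqrt{\tilde{\epsilon}}$ arises through the square-root in the complementary function --- is a faithful outline of the argument in that reference and is therefore consistent with (and more detailed than) the paper's own treatment, which simply cites the result.
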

Note $\mathcal{A}(\theta)$ is maximized at $\theta = 0$, thus $\epsilon_1$ quantifies the truncation error of $\tilde{A}(\lambda)$.
Furthermore, $\epsilon_2$ quantifies errors introduced by rescaling, which will be necessary for the mixed-in term. As we amplify part of it by $1/(1-p)$, the magnitude of $\hat{V}_{\vec{\varphi}}$ may exceed 1 in such instance. 

There are two method to implement RTS in the context of QSP according to classical computation power. On one hand, QSP can realizes the Hamiltonian evolution in a single unitary, which corresponds to a long evolution time. Therefore, we need to truncate Eq.~\eqref{eq:qsp_ac} at a very high $K$ as it scales linear with evolution time. Consequently, the calculation of $\vec{\varphi}$ becomes extremely hard. On the other hand, we can splits the evolution time into segments and concatenates these short-time evolutions to approximate the overall Hamiltonian dynamics thus the classical calculation becomes easier. However, this method faces the problem of large constant overhead in quantum resources. RTS embraces both cases. We assume using segmentized QSP for simplicity, and the analysis for the other is a straightforward extension of our result. In the following discussion, We will give constructions of the two unitaries we are mixing. We drop the subscript $\vec{\varphi}$ in $\hat{V}_{\vec{\varphi}}$ for simplicity, and denote $\hat{V}_1$ and $\hat{V}_2$ as the two unitaries, where

\begin{equation}
\label{eq:formal_qsp_HS}
    \begin{aligned}
        \hat{V}_1(\lambda) = &A_1(\lambda) + iC_1(\lambda) \\
        = &J_0(t) + 2\sum^{K_1}_{\text{even } k>0}(-1)^{\frac{k}{2}}J_k(t)T_k(\lambda) + i2\sum^{K_1}_{\text{odd } k>0}(-1)^{\frac{k-1}{2}}J_k(t)T_k(\lambda) \\
        \hat{V}_2(\lambda) =  &A_2(\lambda) + iC_2(\lambda) \\
        &J_0(t) + 2\sum^{K_1}_{\text{even } k>0}(-1)^{\frac{k}{2}}J_k(t)T_k(\lambda) +  \frac{1}{1-p}\left(2\sum^{K_2}_{\text{even } k>K_1}(-1)^{\frac{k}{2}}J_k(t)T_k(\lambda)\right)\\
        &+ i2\sum^{K_1}_{\text{odd } k>0}(-1)^{\frac{k-1}{2}}J_k(t)T_k(\lambda) +i\frac{1}{1-p}\left(2\sum^{K_2}_{\text{odd } k>K_1}(-1)^{\frac{k-1}{2}}J_k(t)T_k(\lambda)\right).
    \end{aligned}
\end{equation}

Further, the virtual operator appears in the error analysis is 
\begin{equation}
    \begin{aligned}
        \hat{V}_m(\lambda) &= p\hat{V}_1(\lambda) + (1-p) \hat{V}_2(\lambda) \\
        &= \left(pA_1(\lambda) + (1-p)A_2(\lambda)\right) + i\left(pC_1(\lambda) + (1-p)C_2(\lambda)\right)\\
        &=: A_m(\lambda) + i C_m(\lambda).
    \end{aligned}
\end{equation}

We approximate $\{A_i,C_i| i \in \{1,2,m\}\}$ by the corresponding rescaled operators $\{\tilde{A}_i,\tilde{C}_i| i \in \{1,2,m\}\}$, and the error is quantified by lemma \ref{le:qsp_functions_conditions}. We thus need to calculate the corresponding $\epsilon_1$ and $\epsilon_2$ for all three operators.

\subsubsection{Hamiltonian Simulation}
Given $\vec{\varphi}\in \mathbb{R}^K$, by choosing $\Phi = \pi/2$ and projecting $\hat{V}_{\vec{\varphi}}\ket{+}_b\ket{G}_a$ on to $\bra{G}_a\bra{+}_b$ with post-selection, Eq.~\eqref{eq:qsp_general_ABCD} becomes
\begin{equation}
    \label{eq:QSP_HS_phase_iterate}
    \bra{G}_a\bra{+}_b\hat{V}_{\vec{\varphi}}\ket{+}_b\ket{G}_a = \bigoplus_{\lambda} \left(\tilde{A}(\lambda)+i\tilde{C}(\lambda)\right)\otimes \ket{\lambda}\bra{\lambda}_s,
\end{equation}
where $\tilde{A}(\lambda) = \sum^{K/2}_{\text{even } k=0}a_kT_k(\lambda)$, $\tilde{C}(\lambda) = \sum^{K/2}_{\text{odd } k=1}c_kT_k(\lambda)$, and $a_k, c_k\in \mathbb{R}$ are coefficients depending on $\vec{\varphi}$.

$U$ can be decomposed using the Jacobi-Anger expansion \cite{abramowitz1988handbook}
\begin{equation}
\label{eq:qsp_decompose}
\begin{aligned}
    e^{-i\lambda t }&= J_0(t) + 2\sum^\infty_{{\rm even}\, k>0}(-1)^{\frac{k}{2}}J_k(t)T_k(\lambda) + i2\sum^\infty_{{\rm odd}\,  k>0}(-1)^{\frac{k-1}{2}}J_k(t)T_k(\lambda)\\
    &= A(\lambda) + iC(\lambda),
\end{aligned}
\end{equation}
where $J_k$ is the Bessel function of the first kind and $T_k$ are Chebyshev's polynomials. We truncate $A(\lambda)$ and $C(\lambda)$ in Eq.~\eqref{eq:qsp_decompose} at order $K$, such that we need to compute $\vec{\varphi}_1\in \mathbb{R}^{2K}$ classically to obtain
\begin{equation}
\label{eq:qsp_ac}
    \begin{gathered}
        \tilde{A}(\lambda) =  J_0(t) + 2\sum^{K}_{\text{even } k>0}(-1)^{\frac{k}{2}}J_k(t)T_k(\lambda) \\
        \tilde{C}(\lambda) = 2\sum^{K}_{\text{odd } k>0}(-1)^{\frac{k-1}{2}}J_k(t)T_k(\lambda).
    \end{gathered}
\end{equation}

We will need the $\tilde{A}(\lambda)$ and $\tilde{C}(\lambda)$ to satisfy constraints in lemma \ref{le:qsp_functions_conditions}, which gives how robust QSP is when approximating $A(\lambda), C(\lambda)$ by $\tilde{A}(\lambda),\tilde{C}(\lambda)$.

\subsubsection{Uniform spectral amplification (USA)}
Going back to Eq.~\eqref{eq:qsp_general_ABCD}, with another ancilla qubit in $\mathcal{H}_c$, we can define 
\begin{equation}
    \hat{W}_{\vec{\varphi}} = \hat{V}_{\vec{\varphi}}\otimes \ket{+}\bra{+}_c + \hat{V}_{\pi - \vec{\varphi}}\otimes \ket{-}\bra{-}_c.
\end{equation}

We then can project $\hat{W}_{\vec{\varphi}}\ket{G}_a\ket{0}_b\ket{0}_c$ onto $\bra{0}_c\ket{0}_b\ket{G}_a$ such that 
\begin{equation}
\label{eq:APP_qsp_hs_qc}
    \bra{0}_c\bra{0}_b\bra{G}_a\hat{W}_{\vec{\varphi}}\ket{G}_a\ket{0}_b\ket{0}_c = D(\lambda)\otimes \ket{\lambda}\bra{\lambda},
\end{equation}
where $D$ is an odd real polynomial function of degree at most $2K+1$ satisfying $\forall \lambda \in [-1,1]$, $D^2(\lambda) \le 1$. The rescaling in this case becomes easy because we can neglect $A, B$ and $C$ in Eq.~\eqref{eq:qsp_general_ABCD}. For the mix-in term, where the norm may be greater than 1, we can simply rescale it by a constant factor, and the upper bound on the corresponding error will be doubled. 

In the task of USA, we would like to approximate the truncated linear function 
\begin{equation}
\label{eq:QSP_TLF}
    f_{\Gamma,\delta}(\lambda) = \begin{cases}
        \frac{\lambda}{2\Gamma}, & |\lambda| \in [0,\Gamma] \\
        \in [-1,1], &  |\lambda| \in (\Gamma,1],
    \end{cases}
\end{equation}
where $\delta = \displaystyle\max_{|x| \in [0,\Gamma]} \left|\frac{|x|}{2\Gamma}\tilde{f}_\Gamma(\lambda) - 1\right|$ is the maximum error tolerance. 

We can approximate Eq.~\eqref{eq:QSP_TLF} by 
\begin{equation}
\label{eq:QSP_tlf_construction}
    f_{\Gamma,\delta}(\lambda ) = \frac{\lambda}{4\Gamma}\left(erf\left(\frac{\lambda +2\Gamma}{\sqrt{2}\Gamma \delta^\prime}\right)+erf\left(\frac{2\Gamma-\lambda }{\sqrt{2}\Gamma \delta^\prime}\right)\right),
\end{equation}
where $1/\delta^\prime = \sqrt{\log (2/(\pi\delta^2)}$ and $erf(\gamma x) = \frac{2}{\pi}\int^{\gamma x}_0 e^{-t^2}dt = \frac{2}{\sqrt{\pi}}\int^{x}_0 e^{-(\gamma t)^2}dt$ is the error function. Observe that we can approximate the truncated linear function by a combination of error functions only. The following is the construction of the error function by Chebyshev's polynomial using the truncated Jacobi-Anger expansion

\begin{equation}
\label{eq:erf}
    P_{erf,\gamma ,K_1}(\lambda) = \frac{2\gamma e^{-\gamma ^2/2}}{\sqrt{\pi}}\left(J_0\left(\frac{\gamma ^2}{2}\right)\lambda+\sum^{(K_1-1)/2}_{k=1}J_k\left(\frac{\gamma ^2}{2}\right)(-1)^k\left(\frac{T_{2k+1}(\lambda)}{2k+1}-\frac{T_{2k-1}(\lambda)}{2k-1}\right)\right).
\end{equation}
The polynomial we mixed in with small probability is 
\begin{equation}
    \label{eq:erf_2}
    \begin{aligned}
    P_{erf,\gamma ,K_2}(\lambda) =& \frac{2\gamma e^{-\gamma ^2/2}}{\sqrt{\pi}}\left(J_0\left(\frac{\gamma ^2}{2}\right)\lambda  +\left(\sum^{(K_1-1)/2}_{k=1}J_k\left(\frac{\gamma ^2}{2}\right)(-1)^k\left(\frac{T_{2k+1}(\lambda)}{2k+1}-\frac{T_{2k-1}(\lambda)}{2k-1}\right)\right)\right. \\
    & \qquad \qquad \qquad \qquad \; \;\quad + \left.\frac{1}{1-p}\left(\sum^{(K_2-1)/2}_{k=(K_1+1)/2}J_k\left(\frac{\gamma ^2}{2}\right)(-1)^k\left(\frac{T_{2k+1}(\lambda)}{2k+1}-\frac{T_{2k-1}(\lambda)}{2k-1}\right)\right)\right),
    \end{aligned}
\end{equation}

We can then substitute Eq.~\eqref{eq:erf} and~\eqref{eq:erf_2} into Eq.~\eqref{eq:QSP_tlf_construction} to approximate truncated linear function, which results in $\hat{P}_{\Gamma,\delta,K_{1(2)}}$ for replacing $erf(\lambda)$ by $P_{erf,\gamma ,K_{1(2)}}(\lambda)$
\begin{equation}
\label{ep:QSP_lin_v_1(2)}
     \hat{P}_{\Gamma,\delta,K_{1(2)}}(\lambda) = \frac{\lambda}{4\Gamma}\left(P_{erf,\gamma ,K_{1(2)}}\left(\frac{\lambda +2\Gamma}{\sqrt{2}\Gamma \delta^\prime}\right)+P_{erf,\gamma ,K_{1(2)}}\left(\frac{2\Gamma-\lambda }{\sqrt{2}\Gamma \delta^\prime}\right)\right).
\end{equation}

Follow the error propagation in ref.  \cite{lowQuantumSignalProcessing2017}, we can bound 
\begin{equation}
\label{eq:QSP_lin_erf_bound_transfer}
    \epsilon_{\Gamma,K} = \max_{|\lambda| \in [0,\Gamma]}\frac{2\Gamma}{|\lambda|} \left|\hat{P}_{\Gamma,\delta,K}(\lambda)-\frac{\lambda}{2\Gamma}\right|\le 2\epsilon_{erf,4\Gamma,K-1},
\end{equation}
where $\epsilon_{erf,\Gamma,K}$ is the truncation error of  $P_{erf,\gamma ,K}(\lambda)$ approximating $erf(\gamma\lambda)$.

\subsection{ODE}

Consider a differential equation of the form 
\begin{equation}
    \frac{d\vec{x}}{dt} = A\vec{x}+ \vec{b},
\end{equation}
where $A \in \mathbb{R}^{n\times n}$, $\vec{b} \in \mathbb{R}^n$ are time-independent. The exact solution is given by
\begin{equation}
    \vec{x}(t) = e^{At}\vec{x}(0) + \left(e^{At}-\mathds{1}_n\right)A^{-1}\vec{b},
\end{equation}
where $\mathds{1}_n$ is the $n$-dimensional identity vector. 

We can approximate $e^{z}$ and $\left(e^{z}-\mathds{1}_n\right)z^{-1}$ by two $k$-truncated Taylor expansions:
\begin{equation}
\label{eq:APP_Tk}
    T_k(z) := \sum^{K}_{k=0} \frac{z^k}{k!} \approx e^{z}
\end{equation}
and
\begin{equation}
\label{eq:APP_Sk}
    S_k(z) := \sum^K_{k=1}\frac{z^{k-1}}{k!} \approx \left(e^{z}-1\right)z^{-1}.
\end{equation}

Consider a short time $h$, We can approximate the solution $\vec{x}(qh)$ recursively from $\vec{x}((q-1)h)$, for an integer $q$. Denote $x^q$ as the solution approximated by the algorithm, we have  
\begin{equation}
    x^{q+1} = T_k(Ah)x^{q} + S_k(Ah)h\vec{b}.
\end{equation}
Furthermore, we can embed the series of recursive equations into a large linear system $\mathcal{L}$ as proposed in  \cite{berryQuantumAlgorithmLinear2017} such that the solution to $\mathcal{L}$ gives the history state \cite{berryHighorderQuantumAlgorithm2014} of $x$, which encodes solution at all time steps. $\mathcal{L}$ has the form
\begin{equation}
\label{eq:ODE_LS}
    C_{m,K,p}(Ah) \ket{x} = \ket{0}\ket{x_{in}} + h\sum^{m-1}_{i=0}\ket{i(K+1)+1}\ket{b},
\end{equation}
where $m$ is the maximum time step and $p$ is the repetition number of identity operator after evolution aiming to increase the probability of projecting onto the final state. The operator has the form
\begin{equation}
    \begin{split}
    C_{m,K_1,p} (A) := &\sum^{d_1}_{j=0} \ket{j}\bra{j}\otimes \mathds{1} - \sum^{m-1}_{i=0}\sum^{K_1}_{j=1}\ket{i(K_1+1)+j}\bra{i(K_1+1)+j-1}\otimes \frac{A}{j}\\
    &-\sum^{m-1}_{i=0}\sum^{K_1}_{j=0}\ket{(i+1)(K_1+1)}\bra{i(K_1+1)+j}\otimes \mathds{1}-\sum^d_{j=d-p+1}\ket{j}\bra{j-1}\otimes \mathds{1},
    \end{split}
\end{equation}
where $d_1 = m(K_1+1) + p$. To implement RTS, we need to apply the modified higher order terms in Eq.~\eqref{eq:APP_Tk} and~\eqref{eq:APP_Sk}. This could be done by performing the operator
\begin{equation}
    \begin{split}
    \tilde{C}_{m,K_1,K_2,p}(A) := &\sum^{d_2}_{j=0} \ket{j}\bra{j} \otimes \mathds{1} - \sum^{m-1}_{i=0}\sum^{K_2}_{j=1}\ket{i(K_2+1)+j}\bra{i(K_2+1)+j-1}\otimes \frac{A}{j} \\
    &-\frac{p}{(1-p)}\sum^{m-1}_{i=0}\ket{i(K_2+1)+K_1+1}\bra{i(K_2+1)+K_1}\otimes \frac{A}{j}\\
    &-\sum^{m-1}_{i=0}\sum^{K_2}_{j=0}\ket{(i+1)(K_2+1)}\bra{i(K_2+1)+j}\otimes \mathds{1}-\sum^d_{j=d-p+1}\ket{j}\bra{j-1}\otimes \mathds{1},
    \end{split}
\end{equation}
where $d_2 = m(K_2+1)+p$.

$\mathcal{L}$ with $C_{m,k,p}(Ah)$ and $\tilde{C}_{m,K_1,K_2,p}(Ah)$ gives solutions $\ket{x}$ and $\ket{\tilde{x}}$ respectively, where
\begin{itemize}
    \item $\ket{x_{i,j}}$ satisfies
            \begin{equation}
                \begin{aligned}
                    &\ket{x_{0,0}} = \ket{x_{in}}, &&\\
                    &\ket{x_{i,0}} = \sum^{K_1}_{j=0}\ket{x_{i-1,j}}, && 1\le i \le m \\
                    &\ket{x_{i,1}} = Ah\ket{x_{i,0}}+h\ket{b}, && 0\le i < m \\
                    &\ket{x_{i,j}} = \frac{Ah}{j}\ket{x_{i,j-1}}, &&0 \le i <m, 2\le j \le K_1\\
                    &\ket{x_{m,j}} = \ket{x_{m,j_1}}, &&1\le j \le p
                \end{aligned}
            \end{equation}
    \item $\ket{\tilde{x}_{i,j}}$ satisfies 
            \begin{equation}
                \begin{aligned}
                    &\ket{\tilde{x}_{0,0}} = \ket{\tilde{x}_{in}}, &&\\
                    &\ket{\tilde{x}_{i,0}} = \sum^{K_2}_{j=0}\ket{\tilde{x}_{i-1,j}}, && 1\le i \le m \\
                    &\ket{\tilde{x}_{i,1}} = Ah\ket{\tilde{x}_{i,0}}+h\ket{b}, && 0\le i < m \\
                    &\ket{\tilde{x}_{i,j}} = \frac{Ah}{j}\ket{\tilde{x}_{i,j-1}}, &&0 \le i <m, 2\le j \le K_1\\
                    &\ket{\tilde{x}_{i,j}} = \frac{1}{(1-p)^{1/(K_2-K_1)}}\frac{Ah}{j}\ket{\tilde{x}_{i,j-1}}, &&0 \le i <m, K_1+1\le j \le K_2\\
                    &\ket{\tilde{x}_{m,j}} = \ket{\tilde{x}_{m,j_1}}, &&1\le j \le p
                \end{aligned}
            \end{equation} 
\end{itemize}

Therefore,
\begin{equation}
    \ket{x_{m,j}} = \tilde{T}_{K_2}(Ah)\ket{x_{m_1},0}+\tilde{S}_{K_2}(Ah)h\ket{b},
\end{equation}
where
\begin{equation}
    \begin{aligned}
        \tilde{T}_{K_2} &= \sum^{K_1}_{k=0} \frac{z^k}{k!} + \frac{1}{1-p}\sum^{K_2}_{k=K_1+1} \frac{z^k}{k!} \\
        \tilde{S}_{K_2} &= \sum^{K_1}_{k=1} \frac{z^{k-1}}{k!} + \frac{1}{1-p}\sum^{K_2}_{k=K_1+1} \frac{z^{k-1}}{k!}
    \end{aligned}
\end{equation}

Our quantum circuit solves the linear system described by Eq.~\eqref{eq:ODE_LS}. However, quantifying the operator norm between $C^{-1}_{m,k,p}$ and an errorless $C_\infty^{-1}$, which is Taylor series summed to infinity order, is meaningless because 1. $C^{-1}_{m,k,p}$ and $C_\infty^{-1}$ has different dimension. 2. The difference between them does not directly reflect the distance of the final state obtained by the algorithm. We therefore choose to follow the derivation in ref.  \cite{berryQuantumAlgorithmLinear2017} that evaluate the state distance between $x^m$ and $x(mh)$. Equivalently, we are measuring the distance between the post-selected operator and an idea evolution on the target state. We will first prove the operator norm distance for $T_k$, followed by proving these distances can also upper bound the corresponding norm for $S_k$

\begin{proof}

Let us first consider the truncation errors concerning $T_k$.   
We are going to derive $\alpha_1 = \|e^{z} - T_{K_1}\|,\alpha_2 = \|e^{z} - T_{K_2}^\prime\|$, and $\beta = \| e^z - \left(pT_{K_1} + (1-p) T_{K_2}^\prime\right)\|$ for $T_k$, where 
\begin{equation}
\begin{aligned}
    T_{K_1} &= \sum^{K_1}_{k=0} \frac{z^k}{k!}\\
    \tilde{T}_{K_2} &= \sum^{K_1}_{k=0} \frac{z^k}{k!} + \frac{1}{1-p}\sum^{K_2}_{k=K_1+1} \frac{z^k}{k!}
\end{aligned}
\end{equation}

\begin{enumerate}
\item  From the proof of Lemma \ref{le:lcu_v1}, we can write $\alpha_1 \le  2\frac{\left(\ln 2\right)^{K_1+1}}{(K_1+1)!}$
\item Similarly, from Lemma \ref{le:lcu_v2}, we can write $\alpha_2 \le \left|-\frac{p}{1-p}\frac{(\ln 2)^{K_1+1}}{(K_1+1)!}+ \frac{2}{1-p} \frac{(\ln 2)^{K_2+1}}{(K_2+1)!}\right|\approx \frac{p}{1-p}\frac{(\ln 2)^{K_1+1}}{(K_1+1)!} $
\item Observe that $pT_{K_1} + (1-p) T_{K_2}^\prime = T_{k_2}$, $\beta \le 2\frac{\left(\ln 2\right)^{K_2+1}}{(K_2+1)!}$
\end{enumerate}

As for $S_k$, we will prove three bounds on  $\alpha^\prime_1 = \|\left(e^z-1\right)z^{-1} - S_{K_1}\|,\alpha^\prime_2 = \|\left(e^z-1\right)z^{-1} - S_{K_2}^\prime\|$, and $\beta^\prime = \| \left(e^z-1\right)z^{-1} - \left(pS_{K_1} + (1-p) S_{K_2}^\prime\right)\|$,
where 
\begin{equation}
\begin{aligned}
    S_{K_1} &= \sum^{K_1}_{k=1} \frac{z^{k-1}}{k!}\\
   \tilde{S}_{K_2} &= \sum^{K_1}_{k=1} \frac{z^{k-1}}{k!} + \frac{1}{1-p}\sum^{K_2}_{k=K_1+1} \frac{z^{k-1}}{k!}
\end{aligned}
\end{equation}
Consider 
\begin{equation}
\begin{aligned}
    &e^z - T_k(z) \\
    =&\left(e^z - 1\right) - \left(T_k(z)-1\right) \\
    =& \left(e^z - 1\right) - zS_k(z)
\end{aligned}
\end{equation}
Therefore, with $|z| \le 1$, $\|e^{z} - T_{K_1}\| = \|\left(\left(e^z-1\right) - zS_{K_1}\right)\| = \|z\left(\left(e^z-1\right)z^{-1} - S_{K_1}\right)\| \ge \|\left(e^z-1\right)z^{-1} - S_{K_1}\|$. $S_k$ and $T_k$ share the same bounds. 

Denote $x_{m}^j$ as the state obtained by solving the linear system defined by $C_{m,K_2,p}(A)$.
We have $\delta_1 = \|\ket{x(jh)} - x_1^{j}\|$, $\delta_2 = \|\ket{x(jh)} - x_2^{j}\|$ and $\delta_m = \|\ket{x(jh)} - x_{m}^j\|$. 

By inserting bounds on $\alpha_1,\alpha_2$ and $\beta$ into the proof of Theorem 6 in Ref.~\cite{berryQuantumAlgorithmLinear2017}, we obtain that
\begin{equation}
    \delta_1 \le \frac{\mathcal{C}_j}{(K_1+1)!} \quad \delta_2 \le \frac{p}{1-p}\frac{\mathcal{C}_j}{(K_1+1)!} \quad \delta_m \le \frac{\mathcal{C}_j}{(K_2+1)!},
\end{equation}
where $\mathcal{C}_j = 2.8 \kappa_V j \left(\|\ket{x}_{in}\|+ mh \|\ket{b}\|\right)$, and $\kappa_V$ is the condition number in the eigendecomposition of $C_{m,K,p} = VDV^{-1}$ for some diagonal matrix $D$. Further applying lemma \ref{le:mix}, we can bound the stated state distance by.
\end{proof}

\textit{Note added:} After the submission of this manuscript, we became aware of a related work~\cite{martyn2025halving} that appeared subsequently and addresses a similar problem of reducing the cost of quantum algorithms using randomization. Our study was developed independently, and the main results reported here were obtained prior to the appearance of that work. Although both approaches employ randomization, the constructions are fundamentally different. The related work mixes over an ensemble of truncated polynomials within the framework of quantum signal processing (QSP), which requires numerically determining $\mathcal{O}(K)$ sets of phase angles for implementation, leading to circuit structures that may vary substantially. In contrast, our method involves only two polynomials, offering a simpler and more transparent framework that facilitates both classical optimization and experimental realization.

\subsection*{Funding}
    Q.Z. is supported by funding from Innovation Program for Quantum Science and Technology via Project 2024ZD0301900, National Natural Science Foundation of China (NSFC) via Project No. 12347104 and No. 12305030, Guangdong Basic and Applied Basic Research Foundation via Project 2023A1515012185, Hong Kong Research Grant Council (RGC) via No. 27300823, N\_HKU718/23, and R6010-23, Guangdong Provincial Quantum Science Strategic Initiative No. GDZX2303007, HKU Seed Fund for Basic Research for New Staff via Project 2201100596. 

\subsection*{Acknowledgements}
    Q.Z. acknowledges funding from Quantum Science and Technology-National Science and Technology Major Project 2024ZD0301900, National Natural Science Foundation of China (NSFC) via Project No. 12347104 and No. 12305030, Guangdong Basic and Applied Basic Research Foundation via Project 2023A1515012185, Hong Kong Research Grant Council (RGC) via No. 27300823, N\_HKU718/23, and R6010-23, Guangdong Provincial Quantum Science Strategic Initiative No. GDZX2303007.
    
\subsection*{Author Contributions}
Y.W and Q.Z collaboratively initialized the research ideas, derived the mathematical proofs and wrote the manuscript. Y.W. performed the numerical simulations.

\subsection*{Competing Interests}
The authors declare no competing interests.

\bibliography{bibli}

\appendix

\onecolumngrid

\section{Framework implementation on BCCKS algorithm}
\label{app:LCU}

\begin{lemma}{(Ref.~\cite{berrySimulatingHamiltonianDynamics2015}. Error and success probability of $V_1$)}
\label{le:lcu_v1}

    The quantum circuit $V_1$ implements $\U$, approximating the unitary $U = e^{-iH\tau}$ with error bounded by 
\begin{equation}
\begin{aligned}
    &\|V_1-U\|\le a_1 \\
    &a_1 =\delta_1(\frac{\delta^2_1+3\delta_1+4}{2}),
\end{aligned}
\end{equation}
where $\delta_1=2\frac{(\ln 2)^{K_1+1}}{(K_1+1)!}$. The success probability is lower bounded by $\theta_1 = (1-a_1)^2$.
\end{lemma}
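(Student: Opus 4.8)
The plan is to adapt the robustness analysis of oblivious amplitude amplification (OAA) from Ref.~\cite{berrySimulatingHamiltonianDynamics2015}: bound the bare Taylor truncation error of $F_1$, propagate it through the one-round OAA map, and then read off the heralding probability. For the first ingredient, recall that each segment evolves for time $\tau$ with $\tau\sum_l\alpha_l=\ln 2$ and $\alpha_l\ge 0$, so $\|H\tau\|\le\tau\sum_l\alpha_l=\ln 2<1$ and the operator-norm Taylor remainder gives
\[
\|F_1-U\|=\Bigl\|\sum_{k=K_1+1}^{\infty}\frac{(-\ii H\tau)^k}{k!}\Bigr\|\le\sum_{k=K_1+1}^{\infty}\frac{(\ln 2)^k}{k!}\le\frac{(\ln 2)^{K_1+1}}{(K_1+1)!}\,e^{\ln 2}=2\,\frac{(\ln 2)^{K_1+1}}{(K_1+1)!}=\delta_1 .
\]
The same tail estimate bounds the normalisation constant $s_1=\sum_{k=0}^{K_1}(\ln 2)^k/k!$, giving $|s_1-2|\le\delta_1$. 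Since both $\|F_1-U\|\le\delta_1$ and $|s_1-2|\le\delta_1$ are $\mathcal{O}(\epsilon)$, the OAA robustness regime recalled in the Methods applies and the post-selected circuit realises $V_1=\tfrac{3}{s_1}F_1-\tfrac{4}{s_1^3}F_1F_1^\dagger F_1$ as in Eq.~\eqref{eq:sup_LCU_V1}.

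Next I would propagate the error through the cubic OAA map. Writing $F_1=U+E$ with $\|E\|\le\delta_1$ and using $U^\dagger U=\mathds{1}$, one expands $F_1F_1^\dagger F_1=U+2E+R$, where $R=UE^\dagger U+UE^\dagger E+EU^\dagger E+EE^\dagger U+EE^\dagger E$, so that $\|R\|\le\delta_1+3\delta_1^2+\delta_1^3$ by submultiplicativity of the operator norm. Substituting and grouping,
\[
V_1-U=\Bigl(\tfrac{3}{s_1}-\tfrac{4}{s_1^3}-1\Bigr)U+\Bigl(\tfrac{3}{s_1}-\tfrac{8}{s_1^3}\Bigr)E-\tfrac{4}{s_1^3}R ,
\]
whose three rational coefficients equal $0$, $\tfrac12$ and $-\tfrac12$ at $s_1=2$. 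Bounding each of them on the interval $s_1\in[2-\delta_1,2]$ and collecting the contributions with the triangle inequality yields a polynomial in $\delta_1$; carrying out this bookkeeping gives $\|V_1-U\|\le\delta_1(\delta_1^2+3\delta_1+4)/2=:a_1$.

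Finally, for the success probability: projecting $PTW(\ket{0}\otimes\ket{\psi})$ onto $\ket{0}$ produces the sub-normalised state $\ket{0}\otimes V_1\ket{\psi}$, so the probability of heralding equals $\|V_1\ket{\psi}\|^2$, and the reverse triangle inequality together with $\|U\ket{\psi}\|=1$ gives $\|V_1\ket{\psi}\|\ge 1-\|(V_1-U)\ket{\psi}\|\ge 1-a_1$, hence success probability at least $(1-a_1)^2=\theta_1$. I expect the middle step to be the main obstacle: obtaining the stated cubic with the precise coefficients $(4,3,1)/2$ requires careful estimates of the rational functions of $s_1$ near $s_1=2$ and a complete accounting of the cross terms in $F_1F_1^\dagger F_1$, and one must also confirm that $\delta_1$ is small enough — which it is, being a rapidly converging tail — for the OAA robustness conditions to hold throughout. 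The outer two steps are routine.
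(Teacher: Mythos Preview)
Your strategy is sound and will produce a valid bound, but it is more elaborate than the paper's and, as you anticipate, makes the precise coefficients awkward to recover. The paper takes $s_1=2$ \emph{exactly} (this is the standard padding trick mentioned in the Methods, so there is no need to carry the interval $s_1\in[2-\delta_1,2]$ through the analysis) and then uses a different, simpler grouping:
\[
V_1-U=\tfrac{3}{2}F_1-\tfrac{1}{2}F_1F_1^\dagger F_1-U=(F_1-U)+\tfrac{1}{2}\,F_1\bigl(\mathds{1}-F_1^\dagger F_1\bigr).
\]
Bounding the first piece by $\delta_1$ and the second via $\|F_1\|\le 1+\delta_1$ and $\|\mathds{1}-F_1^\dagger F_1\|\le\delta_1(2+\delta_1)$ gives $\delta_1+\tfrac12(1+\delta_1)\delta_1(2+\delta_1)=\delta_1(\delta_1^2+3\delta_1+4)/2$ in one line, with no cross-term bookkeeping and no rational functions of $s_1$ to control. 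Your full expansion $F_1F_1^\dagger F_1=U+2E+R$ is correct, but at $s_1=2$ it actually yields the \emph{sharper} leading term $\delta_1$ rather than $2\delta_1$, so to land on the stated polynomial you would have to regroup as the paper does; keeping $s_1$ variable only adds work and will not produce exactly those coefficients. For the success probability the paper simply invokes Lemma~G.4 of Ref.~\cite{childsFirstQuantumSimulation2018}; your direct reverse-triangle argument is an equally valid (and arguably more self-contained) way to get $(1-a_1)^2$.
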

\begin{proof}

We can bound the truncation error of $\U$ by
    \begin{equation}
        \begin{aligned}
            \left\|\U - U\right\| &= \left\|\sum_{k=K_1+1}^\infty \frac{\left(-iH\tau\right)^k}{k!}\right\|\\
            & \le \sum_{k=K_1+1}^\infty \frac{\left(\|H\|t\right)^k}{k!} \\
            & \le \sum_{k=K_1+1}^\infty \frac{\left(\tau\sum^L_{l=1}\alpha_l\right)^k}{k!} \\
            &= \sum_{k=K_1+1}^\infty \frac{\left(\ln 2\right)^k}{k!} \\
            & \le 2\frac{\left(\ln 2\right)^{K_1+1}}{(K_1+1)!} =: \delta_1.\\
        \end{aligned}
    \end{equation}
    Therefore, the following holds,
    \begin{equation}
        \left\|\U\right\| \le \left\|\U-U\right\| + \|U\| \le 1 + \delta_1
    \end{equation}
    and
    \begin{equation}
        \begin{aligned}
            \left\|\U\UD - \mathds{1}\right\| &\le \left\|\U\UD - U\UD\right\| + \left\|U\UD - UU^\dagger\right\|  \\
            &\le \delta_1\left(1 + \delta_1\right) + \delta_1 = \delta_1\left(2+\delta_1\right).
        \end{aligned}
    \end{equation}
    Eventually, we complete the proof of error bound by 
    \begin{equation}
        \begin{aligned}
            \left\|V_1 - U\right\| &= \left\|\frac{3}{2}\U - \frac{1}{2}\U\UD \U - U\right\| \\
            &\le \left\|\U - U\right\| + \frac{1}{2}\left\|\U - \U\UD \U\right\| \\ 
            &\le \delta_1 + \frac{\delta_1(1+\delta_1)(2+\delta_1)}{2} = \delta_1(\frac{\delta^2_1+3\delta_1+4}{2}) = a_1.
        \end{aligned}
    \end{equation}
    As for the success probability, we apply lemma G.4. in ref. \cite{childsFirstQuantumSimulation2018} to claim that it is greater than $\left(1-a_1\right)^2$
\end{proof}

\begin{lemma}{(Error and success probability of $V_2$)}
\label{le:lcu_v2}
There exists a quantum circuit $V_2$ implementing $\UU$, and $V_2$ approximates the unitary $U = e^{-iH\tau}$ with error bounded by 
\begin{equation}
    \begin{aligned}
        \|V_2-U\|\le \left(1+ \frac{40}{s_2^3} + \frac{64}{s_2^5}\right)\delta_2 =: a_2
    \end{aligned}
\end{equation}
where $\delta_2 =\frac{p}{1-p}\frac{(\ln 2)^{K_1+1}}{(K_1+1)!}$ and $s_2 = (\sin(\pi/10))^{-1}$. The success probability is lower bounded by  $\theta_2 = (1-a_2)^2$.
\end{lemma}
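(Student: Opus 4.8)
The plan is to run the same argument as the proof of Lemma~\ref{le:lcu_v1}, promoting the single‑round (degree‑$3$) oblivious amplitude amplification polynomial to the two‑round (degree‑$5$) one. First I would recall the construction from the Methods: invoking $G_2$ and ${\rm SELECT}(\UU)$ on the index set $J_2$ and applying two rounds of reflection produces, after post‑selecting on $\ket{0}$, exactly the operator $V_2 = \frac{5}{s_2}\UU - \frac{20}{s_2^{3}}\UU\UUD\UU + \frac{16}{s_2^{5}}\UU\UUD\UU\UUD\UU$ of Eq.~\eqref{eq:LCU_circuit}, with $s_2 = (\sin(\pi/10))^{-1}$; this settles the existence claim, and it remains to bound the error and the success probability.

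Second, I would bound the truncation error of $\UU$ against the ideal unitary $U = e^{-\ii H\tau}$. Writing $\tfrac{1}{1-p} = 1 + \tfrac{p}{1-p}$ and recalling $\tau\sum_l \alpha_l = \ln 2$, one obtains
\begin{equation}
\UU - U = -\sum_{k=K_2+1}^{\infty}\frac{(-\ii H\tau)^{k}}{k!} + \frac{p}{1-p}\sum_{k=K_1+1}^{K_2}\frac{(-\ii H\tau)^{k}}{k!},
\end{equation}
and bounding each tail by a geometric series (exactly as for $\U$ in Lemma~\ref{le:lcu_v1}) gives $\|\UU - U\| \le \delta_2$ up to a contribution that is exponentially suppressed in $K_2$ (a term of the same $\delta_m$‑type that is neglected throughout). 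From this, $\|\UU\| \le 1 + \delta_2$, and, abbreviating $E := \UUD\UU - \mathds{1}$, the manipulation $E = \UUD(\UU - U) + (\UUD - U^{\dagger})U$ yields $\|E\| \le \delta_2(2+\delta_2)$.

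Third, I would propagate near‑unitarity through the degree‑$5$ polynomial. Using $\UU\UUD\UU = \UU(\mathds{1}+E)$ and $\UU\UUD\UU\UUD\UU = \UU(\mathds{1}+E)^{2}$ together with the calibration identity $\frac{5}{s_2} - \frac{20}{s_2^{3}} + \frac{16}{s_2^{5}} = \sin\!\big(5\arcsin(1/s_2)\big) = \sin(\pi/2) = 1$ (which is precisely why $s_2 = (\sin(\pi/10))^{-1}$ is chosen), the $\UU$‑terms collapse onto $U$ and one is left with
\begin{equation}
V_2 - U = (\UU - U) + \Big(\!-\tfrac{20}{s_2^{3}} + \tfrac{32}{s_2^{5}}\Big)\UU E + \tfrac{16}{s_2^{5}}\UU E^{2}.
\end{equation}
Applying the triangle inequality with $\|\UU\| \le 1+\delta_2$, $\|E\| \le \delta_2(2+\delta_2)$, and discarding subleading powers of $\delta_2$ gives $\|V_2 - U\| \le \big(1 + \tfrac{40}{s_2^{3}} + \tfrac{64}{s_2^{5}}\big)\delta_2 =: a_2$, as claimed. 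For the success probability, exactly as for $V_1$, Lemma~G.4 of Ref.~\cite{childsFirstQuantumSimulation2018} (applied with two rounds of reflection) bounds the amplitude left outside the flagged $\ket{0}$ subspace by $\|V_2 - U\|$, so post‑selection succeeds with probability at least $(1-a_2)^{2} =: \theta_2$.

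The main obstacle is the bookkeeping in the third step: one must verify that the degree‑$5$ amplification polynomial is calibrated so that the $O(1)$ piece cancels (the identity $\tfrac{5}{s_2} - \tfrac{20}{s_2^{3}} + \tfrac{16}{s_2^{5}} = 1$), and, more delicately, that the amplification factor $\tfrac{1}{1-p}$ acting on the block $K_1 < k \le K_2$ does not destroy the near‑unitarity of $\UU$ --- that is, that $\|\UU - U\|$ remains $O(\delta_2)$ rather than $O(1)$. This is what forces $K_2$ to be taken large enough that the residual $K_2$‑truncation error is of the negligible $\delta_m$‑type, and it is the reason the OAA‑robustness condition ($2p\delta_1/(1-p) \lesssim a_2$) must be respected when choosing $p$.
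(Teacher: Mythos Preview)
Your proposal is correct and follows essentially the same route as the paper's proof: bound $\|\UU - U\|\le\delta_2$, control $\|\UUD\UU-\mathds{1}\|\le\delta_2(2+\delta_2)$, propagate these through the degree-$5$ OAA polynomial, and then invoke Lemma~G.4 of Ref.~\cite{childsFirstQuantumSimulation2018} for the success probability. Your grouping via $\UU(\mathds{1}+E)^2$ together with the explicit calibration identity $\tfrac{5}{s_2}-\tfrac{20}{s_2^{3}}+\tfrac{16}{s_2^{5}}=\sin(5\arcsin(1/s_2))=1$ is in fact a bit cleaner than the paper's (the paper only records the coefficient as $\le 1$ and, grouping the terms as $\UU\UUD\UU-\UU$ and $\UU\UUD\UU\UUD\UU-\UU$, arrives at the looser constants $80/s_2^{3}$ and $128/s_2^{5}$ before passing to the final $\le 4\delta_2$), whereas your expansion recovers the $40/s_2^{3}$ and $64/s_2^{5}$ of the lemma statement directly.
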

\begin{proof}

    The operator we are approximating using $V_2$ is 
    \begin{equation}
    \label{eq:F_V_2}
        \begin{aligned}
        \UU&=\sum_{k=0}^{K_1}\frac{(-iHt)^k}{k!}+\frac{1}{1-p}\sum_{k=K_1+1}^{K_2}\frac{(-iHt)^k}{k!}.\\
        \end{aligned}
    \end{equation}
    With another set of oracles as Eq.~\eqref{eq:lcu_oracle}, $G_2$ preparing the coefficients and $\mathbf{\rm SELECT}(V_2)$ applying unitaries for Eq.~\eqref{eq:F_V_2}, we can construct $W_2 = (G_2^{\dag}\otimes \mathds{1}) \mathbf{\rm SELECT}(\UU) (G_2\otimes \mathds{1}),$ such that
    \begin{equation}
        W_2(\ket{0}\otimes \ket{\psi}) = \frac{1}{s^\prime_2}(\ket{0}\otimes  \UU\ket{\psi})+\ket{\perp},
    \end{equation}
    where $(\bra{0}\otimes \mathds{1})\ket{\perp} = 0$ and
    \begin{equation}
        s_2^\prime = \sum_{k=0}^{K_1}\frac{\left(\tau\sum_{l=1}^L\alpha_l\right)^k}{k!}+  \frac{1}{1-p}\sum_{k=K_1+1}^{K_2}\frac{\left(\tau\sum_{l=1}^L\alpha_l\right)^k}{k!}\approx \exp(\ln 2)+\frac{p (\ln 2)^{K_1+1}}{(1-p)(K_1+1)!}\exp(\ln 2)
        =2+\frac{2p}{(1-p)}\delta_1.
    \end{equation}
    We want to amplify $1/s_2^\prime$ to 1 by applying OAA. Although $s_2^\prime$ is unbounded above when $p \to 1$, $\{K_1,K_2,p\}$ with extreme $p$ will be discarded when transversing viable sets for a given cost budget. Therefore, it is safe for us to bound $p \le 1/(1+2\delta_1)$ such that $s_2^\prime \le 3$ and further amplify $s_2^\prime$ to $s_2 = (\sin(\pi/10))^{-1}$.

    We thus perform $PTTW$ such that
    \begin{equation}
        \begin{aligned}
            PTTW \left(\ket{0}\otimes \ket{\psi}\right)=\ket{0}\otimes\left(\frac{5}{s_2}\UU-\frac{20}{s^3_2}\UU\UUD \UU+\frac{16}{s^5_2}\UU \UUD \UU \UUD \UU\right)\ket{\psi}.
        \end{aligned}
    \end{equation}
    Finally, we obtain the near-unitary operator
    \begin{equation}
    \label{eq:sup_LCU_V2}
        \begin{aligned}
            V_2 &= \left(\bra{0}\otimes \mathds{1}\right) PTTW \left(\ket{0}\otimes \mathds{1}\right) \\
            &=\left(\frac{5}{s_2}-\frac{20}{s^3_2}+\frac{16}{s^5_2}\right) \UU-\frac{20}{s^3_2}\left(\UU\UUD \UU-\UU\right)+ \frac{16}{s^5_2} \left(\UU \UUD \UU \UUD \UU-\UU\right).
        \end{aligned}
    \end{equation}
    We denote the truncated error in $\UU$ as $\delta_2$, where 
    \begin{equation}
        \delta_2 =\left|-\frac{p}{1-p}\frac{(\ln 2)^{K_1+1}}{(K_1+1)!}+ \frac{2}{1-p} \frac{(\ln 2)^{K_2+1}}{(K_2+1)!}\right| \le  \frac{p}{1-p}\frac{(\ln 2)^{K_1+1}}{(K_1+1)!} = \frac{p}{1-p}\delta_1
    \end{equation}
    With the facts $\left\|\UU\right\|\le 1+\delta_2$, $\left\|\UUD \UU-I\right\|\le \delta_2(2+\delta_2)$, and
    \begin{equation}
    \begin{aligned}
         \left\|\UUD \UU \UUD \UU-\mathds{1}\right\|&\le \|\UUD \UU \UUD \UU-\UUD \UU \|+ \|\UUD \UU-\mathds{1}\|\\
         &\le \left\|\UUD \UU\right\| \delta_2(2+\delta_2)+ \delta_2(2+\delta_2)\\
         &\le [(1+\delta_2)^2+1]\delta_2(2+\delta_2)=\delta_2(2+\delta_2)(2+2\delta_2+\delta_2^2)\\
         &\approx 4\delta_2.
    \end{aligned}
    \end{equation}
    We can thereby compute $\|V_2\|$ terms by terms. In the first term in the last line of Eq.~\eqref{eq:sup_LCU_V2}, we have 
    \begin{equation}
        \left\|\left(\frac{5}{s_2}-\frac{20}{s^3_2}+\frac{16}{s^5_2}\right) \UU - U \right\|\le \delta, 
    \end{equation}
    as $\frac{5}{s_2}-\frac{20}{s^3_2}+\frac{16}{s^5_2} \le 1$. 
    For the rest,
    \begin{equation}
    \begin{aligned}
    \left\|\frac{20}{s^3_2}\left(\UU\UUD \UU-\UU\right)\right\| &\le \frac{20}{s^3_2}(1+\delta_2)\delta_2(2+\delta_2) \le \frac{80}{s^3_2}\delta_2\\
    \left\|\frac{16}{s^5_2}  \left(\UU \UUD \UU \UUD \UU-\UU\right)\right\|&\le \frac{16}{s^5_2} (1+\delta_2)\delta_2(2+\delta_2)(2+2\delta_2+\delta_2^2) \le \frac{128}{s^5_2} \delta_2.
    \end{aligned}
    \end{equation}
    Consequently, 
    \begin{equation}
        \|V_2-U\|\le \left(1+\frac{80}{s^3_2}+ \frac{128}{s^5_2} \right)\delta_2 \le 4\delta_2 = a_2.
    \end{equation}
    The success probability follows similarly to be greater than $(1-a_2)^2$
\end{proof}

The last ingredient is the calculation of the error bound on the operator $V_{m} = pV_1 + (1-p) V_2$. Since we never actually implement $V_{\rm m}$ we do not need its cost. 

\begin{lemma}{(Error of $V_{m}$)}
\label{le:lcu_vm}
The quantum circuit $V_m$ implemented by mixing two quantum circuits $V_1$ and $V_2$ with probability $p$ and $1-p$ respectively approximates $U = e^{-iH\tau}$ with bounded error
 \begin{equation}
 \label{eq:lcu_vm}
    \begin{aligned}
        &\|V_m - U\|\le b \\
        &b =\delta_m + \frac{3}{1-p}\delta_1^2,
    \end{aligned}
 \end{equation}
 where $\delta_m=2\frac{(\ln 2)^{K_2+1}}{(K_2+1)!}$.
 \end{lemma}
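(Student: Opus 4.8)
The plan is to work entirely with the three near‑unitary operators $F_1$, $F_2$ and their convex combination $F_m:=pF_1+(1-p)F_2$, using the key fact that $F_m$ is \emph{exactly} the order‑$K_2$ Taylor truncation of $U=e^{-iH\tau}$, so that $\|F_m-U\|\le\delta_m$ by the same tail estimate $\sum_{k>K_2}(\ln 2)^k/k!\le 2(\ln 2)^{K_2+1}/(K_2+1)!$ used in Lemma~\ref{le:lcu_v1}. I would then split $V_m-U=(F_m-U)+p(V_1-F_1)+(1-p)(V_2-F_2)$ and show that the combined amplification residue $p(V_1-F_1)+(1-p)(V_2-F_2)$ cancels \emph{half} of the term $F_m-U$ up to quantities quadratic in the truncation errors. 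Note the naive bound $\|V_m-U\|\le pa_1+(1-p)a_2$ only gives $O(\delta_1)$, so this cancellation is the whole point.

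First I record two exact identities. Since $V_1=F_1\bigl(\tfrac32\mathds{1}-\tfrac12 F_1^\dagger F_1\bigr)$ one gets $V_1-F_1=-\tfrac12 F_1\bigl(F_1^\dagger F_1-\mathds{1}\bigr)$. For $V_2$, the choice $s_2=(\sin(\pi/10))^{-1}$ makes $\tfrac{5}{s_2}-\tfrac{20}{s_2^3}+\tfrac{16}{s_2^5}=\sin(5\cdot\tfrac{\pi}{10})=1$, and differentiating the identity $16x^5-20x^3+5x=1$ at $x=\sin(\pi/10)$ (equivalently $80x^4-60x^2+5=0$ there) yields the decisive coefficient match $-\tfrac{20}{s_2^3}+\tfrac{32}{s_2^5}=-\tfrac12$; substituting $F_2^\dagger F_2=\mathds{1}+Y$ into $V_2=F_2\bigl(\tfrac{5}{s_2}\mathds{1}-\tfrac{20}{s_2^3}F_2^\dagger F_2+\tfrac{16}{s_2^5}(F_2^\dagger F_2)^2\bigr)$ then gives $V_2-F_2=-\tfrac12 F_2\bigl(F_2^\dagger F_2-\mathds{1}\bigr)+\tfrac{16}{s_2^5}F_2\bigl(F_2^\dagger F_2-\mathds{1}\bigr)^2$. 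Both residues begin with the \emph{same} coefficient $-\tfrac12$, which is exactly what lets the $p$‑weighted sum telescope rather than staying at order $\delta_1$.

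Next I expand $F_i(F_i^\dagger F_i-\mathds{1})=\Delta_i+U\Delta_i^\dagger U+Q_i$ with $\Delta_i:=F_i-U$ and remainder $\|Q_i\|\le 3\delta_i^2+\delta_i^3$ (the terms with at least two factors of $\Delta_i$). Because $p\Delta_1+(1-p)\Delta_2=F_m-U=:\Delta_m$ with $\|\Delta_m\|\le\delta_m$, the two residues sum to $-\tfrac12\bigl(\Delta_m+U\Delta_m^\dagger U+pQ_1+(1-p)Q_2\bigr)$ plus the surviving extra term of $V_2$, so $V_m-U=\tfrac12\bigl(\Delta_m-U\Delta_m^\dagger U\bigr)-\tfrac12\bigl(pQ_1+(1-p)Q_2\bigr)+\tfrac{16(1-p)}{s_2^5}F_2\bigl(F_2^\dagger F_2-\mathds{1}\bigr)^2$. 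Taking operator norms, the first term is at most $\delta_m$; for the rest I invoke $\|\Delta_1\|\le\delta_1$, the bound $\delta_2\le\tfrac{p}{1-p}\delta_1$ from Lemma~\ref{le:lcu_v2}, $\|F_i^\dagger F_i-\mathds{1}\|\le\delta_i(2+\delta_i)$ and $16/s_2^5<1$, so that the $Q_1$ part is $O(p\delta_1^2)$ while the $Q_2$ part and the extra term are $O\bigl(\tfrac{p^2}{1-p}\delta_1^2\bigr)$; since $p\delta_1^2+\tfrac{p^2}{1-p}\delta_1^2=\tfrac{p}{1-p}\delta_1^2\le\tfrac{1}{1-p}\delta_1^2$, the accumulated constants fit inside $\tfrac{3}{1-p}\delta_1^2$, giving $b=\delta_m+\tfrac{3}{1-p}\delta_1^2$.

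The main obstacle is the middle step: confirming that the leading residue coefficient of the two‑round amplification $V_2$ equals $-\tfrac12$, the same as for the single‑round $V_1$, despite $s_2\neq 2$ and the appearance of the extra term $\tfrac{16}{s_2^5}F_2(F_2^\dagger F_2)^2$. This is precisely the identity $-\tfrac{20}{s_2^3}+\tfrac{32}{s_2^5}=-\tfrac12$ built into the choice of $s_2$; were it to fail the residues would only combine to $p\delta_1+(1-p)\delta_2=O(\delta_1)$ and the quadratic suppression would be lost. One also has to keep the extra term separate from $Q_i$ and check it contributes only at order $\delta_2^2$, which is immediate once its linear‑in‑$Y$ part has been pinned to $-\tfrac12$.
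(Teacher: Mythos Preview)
Your proof is correct and follows essentially the same strategy as the paper: write each $V_i-U$ as a term linear in the truncation error $\Delta_i=F_i-U$ plus a quadratic remainder, then use $p\Delta_1+(1-p)\Delta_2=\Delta_m$ so that the linear parts collapse to something of size $\delta_m$ while the remainders contribute $O(\delta_1^2/(1-p))$. Your explicit derivation of the coefficient identity $-20/s_2^3+32/s_2^5=-\tfrac12$ for the two-round OAA (via $80x^4-60x^2+5=0$ at $x=\sin(\pi/10)$) is a genuine clarification, since the paper simply asserts the expansion $V_2=U+\tfrac12(E_2-U^\dagger E_2 U)+R_2$ without justifying why the leading coefficient matches that of $V_1$.
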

\begin{proof}

    We cannot trivially add each term in $\U$ and $\UU$ linearly because each of $V_1$ and $V_2$ is reflected during OAA. We, therefore, give a loose upper bound by analyzing each term in the error sources separately. 

    Observe that we have $U = \U + E_1$ and $U = \UU + E_2$, where $E_1$ and $E_2$ are truncation errors with, 
    \begin{equation}
        \begin{aligned}
            E_1&=\sum_{k=K_1+1}^{\infty}(-iH\tau)^k,\\
            E_2&= -\frac{p}{1-p}\sum_{k=K_1+1}^{K_2}(-iH\tau)+\sum_{k=K_2}^{\infty}(-iH\tau)^k,\\
        \end{aligned}
    \end{equation} 
    and we can bound their absolute value by $\delta_1$ and $\delta_2$ respectively. We can thus express
    \begin{equation}
    \label{eq:v12}
        \begin{aligned}
            V_1&=U+\frac{1}{2}(E_1-U^\dag E_1U)+ R_1\\
            V_2&=U+ \frac{1}{2}(E_2-U^\dag E_2U)+R_2,
        \end{aligned}
    \end{equation}
    where $\|R_1\|\le \frac{3}{2}\|E_1\|^2+\frac{1}{2}\|E_1\|^3\le \frac{3}{2}\delta_1^2+\frac{1}{2} \delta_1^3$ and $\|R_2\|\le (\frac{80}{s^3}+\frac{128}{s^5} ) \|E_2\|^2 +\mathcal O(\|E_2\|^3)\le \frac{3p^2}{(1-p)^2}\delta_1^2$ are the truncation error after OAA. These bounds can be derived after invoking lemma \ref{le:lcu_v1} and \ref{le:lcu_v2}.
    
    Lastly, note that 
    \begin{equation}
    \label{eq:PE12}
        \left\|pE_1+(1-p)E_2\right\|= \left\|\sum_{k=K_2}^{\infty}(-iH\tau)^k\right\|\le 2\frac{(\ln 2)^{K_2+1}}{(K_2+1)!} =: \delta_m,
    \end{equation}
    combining Eq.~\eqref{eq:v12} and \eqref{eq:PE12}, we have 
    \begin{equation}
        \begin{aligned}
            \left\|pV_1+(1-p)V_2-U\right\|&=\left\|\frac{1}{2} [(pE_1+(1-p)E_2)-  U_0^\dag (pE_1+(1-p)E_2)U_0 ]+pR_1+(1-p)R_2\right\|\\
            &\le \delta_m+ p\left(\frac{3}{2}\delta_1^2+\frac{1}{2} \delta_1^3\right) + (1-p)\left(\frac{3p^2}{(1-p)^2}\delta_1^2\right) \\
            &\le  \delta_m + \frac{3}{1-p}\delta_1^2 + \mathcal{O}\left(\delta_1^3\right) =: b,
        \end{aligned}
    \end{equation}
    where the inequality in second line holds because an operator $O$ satisfy $\|O\| \le \|U^\dagger O U\|$ for any unitary $U$
\end{proof}

\subsection{Proof of Corollary~\ref{cor:LCU}}
\begin{proof}
    From lemma \ref{le:mix}, we know the error after mixing channel can be expressed by $a_1$ $a_2$ and $b$, which were derived in lemma \ref{le:lcu_v1}, \ref{le:lcu_v2} and \ref{le:lcu_vm}. We can combine the results to get
    \begin{equation}
    \label{eq:app_lcu_final_error}
    \begin{aligned}
        \epsilon &= 4b + 2pa_1^2 + 2(1-p)a_2^2 \\
        &= 4\left( \delta_m + \frac{3}{1-p}\delta_1^2 + \mathcal{O}\left(\delta_1^3\right)\right) + 2(1-p)\left(1+\frac{80}{s^3_2}+ \frac{128}{s^5_2} \right)^2\delta_2^2 + 2p\left(\delta_1(\frac{\delta^2_1+3\delta_1+4}{2})\right)^2\\
        &\le \frac{20}{1-p}\delta_1^2 +4\delta_m \\
        &\le \max\left\{\frac{40}{1-p}\delta_1^2, 8\delta_m\right\},
    \end{aligned}
    \end{equation}
    where the equality in the last line holds because $\delta_m$ is exponentially smaller than $\delta_1$.
    As for the success probability, according to  \cite{childsFirstQuantumSimulation2018}, the lower bound on success probability for each of implementing $V_1$ and $V_2$ are $(1-a_1)^2$ and $(1-a_2)^2$ respectively. Therefore, the overall algorithm succeeds with the probability of at least
    \begin{equation}
        \begin{aligned}
            \theta &\ge p(1-a_1)^2 + (1-p)(1-a_2)^2 \\
            &\ge  1- \frac{8}{1-p}\delta^2_1 + \delta_1^2 - 4\delta_1.
        \end{aligned}
    \end{equation}
    It implies that the failure probability $\xi \le \frac{8}{1-p}\delta^2_1 + 4\delta_1$.
\end{proof}

\section{QSP}
\label{app:QSP}

\begin{lemma}{(Truncation and rescaling error)}

For index $i = \{1,2\}$, we can upper bound 

\begin{equation}
    \left\|\bra{+}_b\bra{G}_a\hat{V}_i\ket{G}_a\ket{+}_b - e^{-iHt}\right\|
\end{equation}

by $\mathcal{O}\left(\sqrt{|\epsilon_{1,V_i}|+\epsilon_{2,V_i}}\right)$ where $\epsilon_{1,V_i} := \tilde{A}_i(0) - 1$ and $\epsilon_{2,V_i} := \displaystyle\max_{\lambda} \tilde{A}^2(\lambda) + \tilde{C}^2(\lambda) - 1$.
\label{le:qsp_result}
\begin{enumerate}
    \item 
    \begin{equation}
        \epsilon_{1,V_1} \le \frac{4t^{K_1}}{2^{K_1} K_1!}, \quad \epsilon_{2,V_1} = 0
    \end{equation}
    \item 
    \begin{equation}
        \epsilon_{1,V_2} \le \frac{p}{1-p}\frac{4t^{K_1}}{2^{K_1} K_1!}, \quad \epsilon_{2,V_2} = \frac{5p}{1-p}\frac{4t^{K_1}}{2^{K_1} K_1!}
    \end{equation}
    \item For the mixed function, $V_m = pV_1 + (1-p)V_2$, we have 
    \begin{equation}
        \epsilon_{1,V_m} \le \frac{4t^{K_2}}{2^{K_2} K_2!}, \quad \epsilon_{2,V_m} = 0
    \end{equation}
\end{enumerate}
\end{lemma}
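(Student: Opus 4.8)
The plan is to treat Lemma~\ref{le:qsp_functions_conditions} as a black box that converts the pair $(\epsilon_1,\epsilon_2)$ into the operator-norm error $\mathcal O(\sqrt{|\epsilon_1|+\epsilon_2})$, so the actual work is computing $\epsilon_{1}$ and $\epsilon_{2}$ for the three operators $\hat V_1,\hat V_2,\hat V_m$. Each of these acts on an eigenstate $\ket\lambda$ of $H$ as multiplication by $\tilde A_i(\lambda)+i\tilde C_i(\lambda)$, while the ideal evolution acts as $A(\lambda)+iC(\lambda)=e^{-i\lambda t}$ of Eq.~\eqref{eq:qsp_decompose}; on $\lambda\in[-1,1]$ this supplies the two facts I lean on throughout, $A(0)=1$ and $|A(\lambda)+iC(\lambda)|=1$. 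Since $\epsilon_{1,V_i}=\tilde A_i(0)-A(0)$ and $\epsilon_{2,V_i}=\max_{\lambda\in[-1,1]}\big(\tilde A_i^2+\tilde C_i^2\big)-1$, both quantities are controlled by the Bessel coefficients $J_k(t)$ of exactly those terms that are dropped (in $\hat V_1$) or amplified by $1/(1-p)$ (in $\hat V_2$), via $|T_k(0)|\le 1$ and $|T_k(\lambda)|\le 1$ on $[-1,1]$. The one analytic estimate I need up front is the tail bound $2\sum_{k>K}|J_k(t)|\le \tfrac{4t^{K}}{2^{K}K!}$, obtained from $|J_k(t)|\le (t/2)^k/k!$ and a geometric summation valid for $t=\mathcal O(K)$; applied at $K=K_1$ and $K=K_2$ it produces every numerator in the statement.

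For $\hat V_1$ (plain truncation at $K_1$): $\epsilon_{1,V_1}$ is minus the sum of the omitted even-order terms evaluated at $\lambda=0$, so $|\epsilon_{1,V_1}|\le 2\sum_{\mathrm{even}\,k>K_1}|J_k(t)|\le \tfrac{4t^{K_1}}{2^{K_1}K_1!}$; and since $\tilde A_1+i\tilde C_1=(A+iC)-(\text{tail}_{>K_1})$ with $|\text{tail}_{>K_1}(\lambda)|\le 2\sum_{k>K_1}|J_k(t)|$ on $[-1,1]$, one gets $\|\tilde A_1+i\tilde C_1\|_\infty\le 1+\mathcal O(\delta_1)$, which in the Low--Chuang prescription (the point where Ref.~\cite{gilyenQuantumSingularValue2019} flags a minor, leading-order-irrelevant inaccuracy in Ref.~\cite{lowHamiltonianSimulationQubitization2019a}) is taken to give $\epsilon_{2,V_1}=0$; equivalently, a sup-norm rescaling forces $\tilde A_1^2+\tilde C_1^2\le 1$ identically while changing $|\epsilon_{1,V_1}|$ only by an $\mathcal O(1)$ factor. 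For $\hat V_m=p\hat V_1+(1-p)\hat V_2$ the weights telescope: $p\tilde A_1+(1-p)\tilde A_2$ is precisely the plain order-$K_2$ truncation of $A$ (the $(1-p)$ weight cancels the $1/(1-p)$ amplification inside $\hat V_2$), and similarly for $\tilde C$, so $\hat V_m$ is the $K_1\mapsto K_2$ case of $\hat V_1$ and inherits $|\epsilon_{1,V_m}|\le\tfrac{4t^{K_2}}{2^{K_2}K_2!}$, $\epsilon_{2,V_m}=0$.

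For $\hat V_2$ only the block $K_1<k\le K_2$ is altered, and by the factor $1/(1-p)$. Evaluating at $\lambda=0$, $\tilde A_2(0)-1=\tfrac{p}{1-p}\,2\sum_{\mathrm{even}\,K_1<k\le K_2}J_k(t)-2\sum_{\mathrm{even}\,k>K_2}J_k(t)$, and bounding the first sum by the $K_1$-tail and the negligible second by the $K_2$-tail gives $|\epsilon_{1,V_2}|\le \tfrac{p}{1-p}\tfrac{4t^{K_1}}{2^{K_1}K_1!}$. For $\epsilon_{2,V_2}$ one writes $\tilde A_2+i\tilde C_2=(A+iC)-(\text{tail}_{>K_2})+\tfrac{p}{1-p}(\text{mid}_{K_1<k\le K_2})$ with $|\text{mid}|\le \delta_1$ on $[-1,1]$, so $\|\tilde A_2+i\tilde C_2\|_\infty\le 1+\tfrac{p}{1-p}\delta_1+\mathcal O(\delta_{K_2})$ and hence $\epsilon_{2,V_2}=\|\tilde A_2+i\tilde C_2\|_\infty^2-1\le \tfrac{5p}{1-p}\tfrac{4t^{K_1}}{2^{K_1}K_1!}$, the constant $5$ absorbing the squaring and the $K_2$-order remainder. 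Feeding each $(\epsilon_1,\epsilon_2)$ into Lemma~\ref{le:qsp_functions_conditions} — after the routine check that $\tilde A_i,\tilde C_i$ have the cosine/sine Fourier form its hypotheses demand, immediate from $T_k(\cos\theta)=\cos k\theta$ and the even/odd split — delivers the three $\mathcal O(\sqrt{|\epsilon_{1,V_i}|+\epsilon_{2,V_i}})$ bounds.

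I expect the main obstacle to be the $\epsilon_2$ accounting rather than the $\epsilon_1$ computation: the honest Jacobi--Anger partial sum generically overshoots $1$ in sup-norm, so ``$\epsilon_{2,V_1}=\epsilon_{2,V_m}=0$'' is not literally free and rests on the rescaling/convention step above — one must verify that the induced perturbation of $\epsilon_1$, and for $\hat V_2$ the interaction between the $1/(1-p)$ amplification and the norm constraint, stay inside the advertised constants $4$ and $5$. The only other point needing care is making the Bessel-tail estimate fully rigorous with explicit constants in the regime $t=\mathcal O(K)$, where the factor $4$ and the shift from a $(K+1)!$ to a $K!$ denominator get absorbed.
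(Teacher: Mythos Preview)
Your proposal is correct and follows the same architecture as the paper: treat Lemma~\ref{le:qsp_functions_conditions} as a black box and compute $(\epsilon_1,\epsilon_2)$ for each of $\hat V_1,\hat V_2,\hat V_m$ via the Bessel-tail estimate $2\sum_{k>K}|J_k(t)|\le 4t^K/(2^K K!)$ together with the telescoping of the $p$ and $1/(1-p)$ weights for $\hat V_m$. Two points of comparison are worth noting. For $\epsilon_{2,V_2}$ you bound $|\tilde A_2+i\tilde C_2|$ directly by the triangle inequality and square; the paper instead splits into real and imaginary parts and uses the difference-of-squares factorisations $A^2-\tilde A_2^2=(A+\tilde A_2)(A-\tilde A_2)$ and $C^2-\tilde C_2^2=(C+\tilde C_2)(C-\tilde C_2)$, picking up coefficients $2J_0+3$ and $3$ whose maximum against $\delta_s+\delta_k$ produces the constant $5$. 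Your route is shorter and, if carried through, actually gives a smaller leading constant, so the stated $5$ is comfortably met. For $\epsilon_{2,V_1}=\epsilon_{2,V_m}=0$ you are more cautious than the paper: the paper simply asserts $\tilde A_1^2+\tilde C_1^2\le A^2+C^2=1$ as ``trivial'' without the rescaling discussion, whereas you correctly flag that a Jacobi--Anger partial sum can overshoot $1$ and that a normalisation step is implicitly being used. Your extra care is well placed but does not change the outcome.
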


\begin{proof}

    \begin{enumerate}
        \item proof for $\hat{V}_1$

        $\epsilon_{1,V_1}$ is the truncation error in Eq.~\eqref{eq:qsp_ac} with $K = K_1$. Since $\lambda$ originated from a cosine function in the Chebyshev polynomial, it takes the value $\lambda\in[-1,1]$, and we maximize the truncation error over the domain to obtain an upper bound on $\epsilon_{1,V_1}$. Thus, 
        \begin{equation} 
            \begin{aligned}
                \epsilon_{1,V_1}&\le\max_{\lambda\in[-1,1]}\left| e^{-i\lambda t} - \left(\tilde{A}_1(\lambda) + i\tilde{C}_1(\lambda)\right)\right| \\
                &=\max_{\lambda\in[-1,1]}\left|2\sum^{K_1}_{\text{even } k=1}(-1)^{\frac{k}{2}}J_k(t)T_k(\lambda) + i2\sum^{K_1}_{\text{odd } k=1}(-1)^{\frac{k-1}{2}}J_k(t)T_k(\lambda)\right|\\
                & \le 2\sum^{K_1}_{k=1}|J_k(t)| \le \frac{4t^{K_1}}{2^{K_1} K_1!}.
            \end{aligned}
        \end{equation}

        It is trivial that $\tilde{A}^2_1(\lambda)+\tilde{C}^2_1(\lambda) \le A^2(\lambda)+ C^2(\lambda) = 1, \forall K_1 \in \mathbb{Z}$. Thus $\epsilon_{2,V_1} = 0$
        \item Proof for $\hat{V}_2$

        Similarly, we obtain 
        \begin{equation}
        \label{eq:e1v2}
            \begin{aligned}
                \epsilon_{1,V_2}& \le\max_{\lambda\in[-1,1]}\left| e^{-i\lambda t} - \left(\tilde{A}_2(\lambda) + i\tilde{C}_2(\lambda)\right)\right| \\
                &=\max_{\lambda\in[-1,1]}\left|
                -\frac{p}{1-p}\left(2\sum^{K_2}_{\text{even} k>K_1}(-1)^{\frac{k}{2}}J_k(t)T_k(\lambda) + i2\sum^{K_2}_{\text{odd} k>K_1}(-1)^{\frac{k-1}{2}}J_k(t)T_k(\lambda)\right)\right. \\
                &\left. \quad \qquad \qquad +2\sum^{\infty}_{\text{even} k> K_2}(-1)^{\frac{k}{2}}J_k(t)T_k(\lambda) + i2\sum^{\infty}_{\text{odd} k>K_2}(-1)^{\frac{k-1}{2}}J_k(t)T_k(\lambda)\right| \\
                &\le \frac{p}{1-p}\left(2\sum^{K_2}_{k>K_1}|J_k(t)| \right)-2\sum^{\infty}_{k> K_2}|J_k(t)| \\
                &=\frac{p}{1-p}\left(2\sum^{\infty}_{k>K_1}|J_k(t)| \right) -\left(2 +\frac{p}{1-p}\right)\sum^{\infty}_{k> K_2}|J_k(t) |\\
                &\le \frac{p}{1-p}\frac{4t^{K_1}}{2^{K_1} K_1!}.
            \end{aligned}
        \end{equation}
        It should be noticed that $\epsilon_{2,V_2}$ will be greater than zero as we increase $p$ for a given $K_2$. We have to find $\epsilon_{2,V_2}$ satisfying 
        \begin{equation}
        \label{eq:e2v2}
            \begin{aligned}
                \epsilon_{2,V_2} &\le \left|1- \left(\tilde{A}^2_2(\lambda) + \tilde{C}^2_2(\lambda)\right)\right| \\
                & = \left|\left(A^2(\lambda)-\tilde{A}^2_2(\lambda)\right) + \left(C^2(\lambda) - \tilde{C}^2_2(\lambda)\right)\right|.
            \end{aligned}
        \end{equation}
        For simplicity, we abbreviate the sum by $2\sum^n_{\text{even } k>m}(-1)^{\frac{k}{2}}J_k(t)T_k(\lambda) = \mathfrak{S}_m^n(\lambda)$ and compute the first parentheses in Eq.~\eqref{eq:e2v2} as 
        \begin{equation}
            \begin{aligned}
                &A^2(\lambda)-\tilde{A}^2_2(\lambda) \\
                &=\left(J_0(t) + \mathfrak{S}_0^\infty(\lambda) \right)^2 -\left(J_0(t) + \mathfrak{S}_0^{K_1}(\lambda) + \frac{1}{1-p}\left(\mathfrak{S}_{K_1}^{K_2}(\lambda)\right) \right)^2 \\
                &= 2J_0(t)\mathfrak{S}_0^\infty(\lambda) + \left(\mathfrak{S}_0^\infty(\lambda)\right)^2 -2J_0(t)\left(\mathfrak{S}_0^{K_1}(\lambda) + \frac{1}{1-p}\left(\mathfrak{S}_{K_1}^{K_2}(\lambda)\right)\right) - \left(\mathfrak{S}_0^{K_1}(\lambda) + \frac{1}{1-p}\left(\mathfrak{S}_{K_1}^{K_2}(\lambda)\right) \right)^2 \\
                &= 2J_0\left(\mathfrak{S}_0^\infty(\lambda) -\mathfrak{S}_0^{K_1}(\lambda) - \frac{1}{1-p}\left(\mathfrak{S}_{K_1}^{K_2}(\lambda)\right)\right) + \left(\mathfrak{S}_0^\infty(\lambda)\right)^2  - \left(\mathfrak{S}_0^{K_1}(\lambda) + \frac{1}{1-p}\left(\mathfrak{S}_{K_1}^{K_2}(\lambda)\right) \right)^2 \\
                &= 2J_0\delta_s + \left(\mathfrak{S}_0^\infty(\lambda)+\mathfrak{S}_0^{K_1}(\lambda) + \frac{1}{1-p}\left(\mathfrak{S}_{K_1}^{K_2}(\lambda)\right)\right)\left(\mathfrak{S}_0^\infty(\lambda) -\mathfrak{S}_0^{K_1}(\lambda) - \frac{1}{1-p}\left(\mathfrak{S}_{K_1}^{K_2}(\lambda)\right)\right) \\
                &\le (2J_0+3)\delta_s,
            \end{aligned}
        \end{equation}
where 
\begin{equation}
\label{eq:qsp_hs_delta_s}
\begin{aligned}
    |\delta_s| &= \left|\mathfrak{S}_0^\infty(\lambda) -\mathfrak{S}_0^{K_1}(\lambda) - \frac{1}{1-p}\left(\mathfrak{S}_{K_1}^{K_2}(\lambda)\right)\right| \\
    &= \left|-\frac{p}{1-p}\mathfrak{S}_{K_1}^{\infty}(\lambda) + \frac{1}{1-p}\mathfrak{S}_{\ket{_2}}^{\infty}(\lambda)\right|\\
    &\le \frac{p}{1-p}\mathfrak{S}_{K_1}^{\infty}(\lambda).
\end{aligned}
\end{equation}

        Similarly, with definition $2\sum^n_{\text{odd } k>m}(-1)^{\frac{k-1}{2}}J_k(t)T_k(\lambda) = \mathfrak{K}_m^n(\lambda)$ we have 
        \begin{equation}
            \begin{aligned}
                &C^2(\lambda)-\tilde{C}^2_2(\lambda) \approx 3\delta_k,
            \end{aligned}
        \end{equation}
        where 
        \begin{equation}
        \label{eq:qsp_hs_delta_k}
        \begin{aligned}
            |\delta_k| &= \left|\mathfrak{K}_0^\infty(\lambda) -\mathfrak{K}_0^{K_1}(\lambda) - \frac{1}{1-p}\left(\mathfrak{K}_{K_1}^{K_2}(\lambda)\right)\right| \\
            &\le \frac{p}{1-p}\mathfrak{K}_{K_1}^{\infty}(\lambda).
        \end{aligned}
        \end{equation}
        Observe that 
        \begin{equation}
            \delta_s + \delta_k \le \frac{p}{1-p}\frac{4t^{K_1}}{2^{K_1} K_1!}
        \end{equation} and $J_0(t) \le 1, \forall t$, we finally bound  
        \begin{equation}
            \epsilon_{2,V_2} \le (2J_0+3)\delta_s + 3\delta_k \le \frac{5p}{1-p}\frac{4t^{K_1}}{2^{K_1} K_1!}.
        \end{equation}

        \item Proof for $\hat{V}_m$

        Observe that $V_m = pV_1 + (1-p) V_2 = J_0(t) + 2\sum^{K_2}_{{\rm even}\, k>0}(-1)^{k/2}J_k(t)T_k(\lambda) + i2\sum^{K_2}_{{\rm odd}\,  k>0}(-1)^{(k-1)/2}J_k(t)T_k(\lambda)$, we have 
        \begin{equation}
            \epsilon_{1,V_m} \le \frac{4t^{K_2}}{2^{K_2} K_2!} \text{ and } \epsilon_{2,V_m} = 0
        \end{equation}
    \end{enumerate}
\end{proof}

We can now prove the error bound on HS with QSP with lemma \ref{le:qsp_result} and \ref{le:qsp_functions_conditions}

\subsection{Proof of Corollary~\ref{cor:QSP_HS}}
\begin{proof}

The error of mixing channel in lemma \ref{le:mix}. 

\begin{equation}
    \begin{aligned}
        \epsilon &= 4b + 2pa_1^2 + 2(1-p)a_2^2 \\
        &= 4 \sqrt{\frac{4t^{K_2}}{2^{K_2} K_2!}} + 2p \frac{4t^{K_1}}{2^{K_1} K_1!} + 2(1-p)\frac{6p}{1-p}\frac{4t^{K_1}}{2^{K_1} K_1!} \\
        &\le 4\sqrt{\frac{4t^{K_2}}{2^{K_2} K_2!}} + 14\frac{4t^{K_1}}{2^{K_1} K_1!} \\
        &\le \max\left\{8\sqrt{\frac{4t^{K_2}}{2^{K_2} K_2!}},28\frac{4t^{K_1}}{2^{K_1} K_1!}\right\}.
    \end{aligned}
\end{equation}

We can lower bound the failure probability $\xi$ using the lemma \ref{le:mix} and \ref{le:qsp_functions_conditions} such that

\begin{equation}
\label{eq:QSP_HS_xi}
\begin{aligned}
    \xi &\le 2pa_1 + 2(1-p)a_2 \\
    &= 2p \sqrt{\frac{4t^{K_1}}{2^{K_1} K_1!}} + 2 \sqrt{6p(1-p)}\sqrt{\frac{4t^{K_1}}{2^{K_1} K_1!}} \\
    &\le 4p \sqrt{\frac{4t^{K_1}}{2^{K_1} K_1!}}.
\end{aligned}
\end{equation}
\end{proof}

\subsection{Uniform spectral amplification (USA)}

\begin{lemma}{(Truncation error of approximating error function)}
\label{le:QSP_erf_error}

    Polynomial functions, $ P_{erf,\gamma ,K_1}(\lambda)$ and $ P_{erf,\gamma ,K_2}(\lambda)$, constructed by Jacobi-Anger expansion and their probability mixture, $ P_{erf,\gamma ,p,K_1,K_2}(\lambda) = pP_{erf,\gamma ,K_1}(\lambda) + (1-p)P_{erf,\gamma ,K_2}(\lambda), p \in [0,1)$ approximate the error function $erf(\gamma \lambda) = \frac{2}{\pi}\int^{\gamma \lambda}_0 e^{-t^2}dt$ with truncation error bounded by $a_1^\prime$, $a_2^\prime$ and $b^\prime$ respectively.
    \begin{equation}
        \begin{aligned}
            a_1^\prime &=  \frac{\gamma e^{-\gamma ^2/2}}{\sqrt{\pi}} \frac{4(\gamma^2/2)^{(K_1+1)/2}}{2^{(K_1+1)/2}((K_1+1)/2)!}\\
            a_2^\prime &=  \frac{p}{1-p}\frac{\gamma e^{-\gamma ^2/2}}{\sqrt{\pi}} \frac{4(\gamma^2/2)^{(K_1+1)/2}}{2^{(K_1+1)/2}((K_1+1)/2)!}\\ 
            b^\prime &= \frac{\gamma e^{-\gamma ^2/2}}{\sqrt{\pi}} \frac{4(\gamma^2/2)^{(K_2+1)/2}}{2^{(K_2+1)/2}((K_2+1)/2)!}
        \end{aligned}
    \end{equation}
\end{lemma}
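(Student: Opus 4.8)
The plan is to reduce the entire lemma to three tail estimates for the Chebyshev–Jacobi–Anger series of the error function. The key preliminary observation is that extending the finite sums in Eq.~\eqref{eq:erf} and Eq.~\eqref{eq:erf_2} to $k=\infty$ (and replacing the amplification factor $1/(1-p)$ by $1$) reproduces exactly the Chebyshev expansion of $erf(\gamma\lambda)$ coming from the Jacobi–Anger identity (cf.\ Ref.~\cite{lowQuantumSignalProcessing2017,abramowitz1988handbook}). Hence each of $a_1'$, $a_2'$, $b'$ is the $\sup_{\lambda\in[-1,1]}|\cdot|$ norm of an explicit residual series, and everything comes down to bounding those residuals.

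For $a_1'$ the residual is
\begin{equation}
erf(\gamma\lambda)-P_{erf,\gamma,K_1}(\lambda)=\frac{2\gamma e^{-\gamma^2/2}}{\sqrt{\pi}}\sum_{k\ge (K_1+1)/2}J_k\!\left(\tfrac{\gamma^2}{2}\right)(-1)^k\left(\frac{T_{2k+1}(\lambda)}{2k+1}-\frac{T_{2k-1}(\lambda)}{2k-1}\right).
\end{equation}
I would bound this using $|T_n(\lambda)|\le 1$ on $[-1,1]$, the standard Bessel estimate $|J_k(x)|\le (x/2)^k/k!$, and the geometric-tail inequality $\sum_{k\ge m}(x/2)^k/k!\le 2(x/2)^m/m!$ (valid in the USA regime since the truncation order is chosen $\gtrsim\gamma^2$, so $x=\gamma^2/2\le m+1$ with $m=(K_1+1)/2$), together with the $1/(2k\pm1)$ weights to tighten the constant; collecting terms and using $(\gamma^2/2)^m/2^m=(\gamma^2/4)^m$ gives exactly $a_1'$. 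This is the standard error-function truncation bound (as in Ref.~\cite{lowQuantumSignalProcessing2017}), and the identical computation with $K_1\to K_2$ produces the stated form of $b'$—provided one first identifies which polynomial the mixture actually implements.

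That identification is a short algebraic telescoping. Writing $P_{erf,\gamma,K_2}$ as its common low-order block (orders $k\le (K_1-1)/2$) plus $\tfrac{1}{1-p}$ times the block $k\in\{(K_1+1)/2,\dots,(K_2-1)/2\}$, one checks that
\begin{equation}
pP_{erf,\gamma,K_1}(\lambda)+(1-p)P_{erf,\gamma,K_2}(\lambda)=\text{(honest degree-$K_2$ truncation of $erf(\gamma\lambda)$)},
\end{equation}
because the common block gets total weight $p+(1-p)=1$ and the amplified block gets weight $(1-p)\cdot\tfrac{1}{1-p}=1$. Thus $b'$ is literally the order-$K_2$ truncation error already controlled in the previous step, which is why $b'$ is just $a_1'$ with $K_1$ replaced by $K_2$.

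Finally, for $a_2'$ I would write $erf(\gamma\lambda)-P_{erf,\gamma,K_2}(\lambda)$ as the tail beyond order $K_2$ minus $\tfrac{p}{1-p}$ times the block $k\in\{(K_1+1)/2,\dots,(K_2-1)/2\}$ (the coefficient $1-\tfrac{1}{1-p}=-\tfrac{p}{1-p}$), apply the triangle inequality, and observe that $\tfrac{p}{1-p}$ times that partial-block sum is dominated by $\tfrac{p}{1-p}$ times the full tail from order $K_1$, i.e.\ by $\tfrac{p}{1-p}a_1'$, while the beyond-$K_2$ contribution is exponentially smaller and absorbed into the same bound; this yields $a_2'=\tfrac{p}{1-p}a_1'$. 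The only genuine care needed in the whole argument is bookkeeping—carrying the amplified middle block consistently through all three residuals, and invoking the geometric-tail inequality only inside its range of validity; once those are handled, the constants fall out of routine Bessel estimates, so I do not expect a substantive obstacle.
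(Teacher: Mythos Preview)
Your proposal is correct and follows essentially the same route as the paper: express each residual as the appropriate tail of the Chebyshev--Jacobi--Anger series, bound $|T_n|\le 1$, and apply the Bessel tail estimate $\sum_{k\ge m}|J_k(x)|\le 4(x/2)^m/m!$; for $a_2'$ you rewrite the middle block as $\tfrac{p}{1-p}$ times the full tail from $(K_1+1)/2$ minus a beyond-$K_2$ correction, exactly as the paper does, and for $b'$ you correctly identify the mixture with the honest $K_2$-truncation. The only minor remark is that the paper does not actually use the $1/(2k\pm1)$ weights to ``tighten the constant''---it simply bounds them by $1$ and drops them---so the factor $4$ comes entirely from the Bessel tail bound, not from those weights.
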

\begin{proof}
Calculate that 
    \begin{enumerate}
    \item $|erf(\lambda) -  P_{erf,\gamma ,K_1}(\lambda)| \le a_1^\prime$
    \begin{equation}
        \begin{aligned}
            \epsilon_{erf,\gamma,K_1} &= |erf(\lambda) -  P_{erf,\gamma ,K_1}(\lambda)| \\
            &= \left|\frac{2\gamma e^{-\gamma ^2/2}}{\sqrt{\pi}}\sum^{\infty}_{k=(K_1+1)/2}J_k\left(\frac{\gamma ^2}{2}\right)(-1)^k\left(\frac{T_{2k+1}(\lambda)}{2k+1}-\frac{T_{2k-1}(\lambda)}{2k-1}\right)\right| \\
            &\le \frac{2\gamma e^{-\gamma ^2/2}}{\sqrt{\pi}}\sum^{\infty}_{k=(K_1+1)/2}\left|J_k\left(\frac{\gamma ^2}{2}\right)\right|\left|\left(\frac{1}{2k+1}-\frac{1}{2k-1}\right)\right| \\
            &\le \frac{2\gamma e^{-\gamma ^2/2}}{\sqrt{\pi}}\sum^{\infty}_{k=(K_1+1)/2}\left|J_k\left(\frac{\gamma ^2}{2}\right)\right| \\
            &\le \frac{\gamma e^{-\gamma ^2/2}}{\sqrt{\pi}} \frac{4(\gamma^2/2)^{(K_1+1)/2}}{2^{(K_1+1)/2}((K_1+1)/2)!} =: a_1^\prime
        \end{aligned}
    \end{equation}

    \item $|erf(\lambda) -  P_{erf,\gamma ,K_2}(\lambda)| \le a_2^\prime$
    \begin{equation}
        \begin{aligned}
            \epsilon_{erf,\gamma,K_2} &= |erf(\lambda) -  P_{erf,\gamma ,K_2}(\lambda)| \\
            &= \left|\frac{2\gamma e^{-\gamma ^2/2}}{\sqrt{\pi}}\left(-\frac{p}{1-p}\sum^{(K_2-1)/2}_{k=(K_1+1)/2}J_k\left(\frac{\gamma ^2}{2}\right)(-1)^k\left(\frac{T_{2k+1}(\lambda)}{2k+1}-\frac{T_{2k-1}(\lambda)}{2k-1} \right)\right.\right. \\
            &\quad +\left.\left. \sum^{\infty}_{k=(K_2+1)/2}J_k\left(\frac{\gamma ^2}{2}\right)(-1)^k\left(\frac{T_{2k+1}(\lambda)}{2k+1}-\frac{T_{2k-1}(\lambda)}{2k-1}\right)\right)\right| \\
            &\le \left|\frac{2\gamma e^{-\gamma ^2/2}}{\sqrt{\pi}}\left(-\frac{p}{1-p}\sum^{\infty}_{k=(K_1+1)/2}J_k\left(\frac{\gamma ^2}{2}\right)(-1)^k\left(\frac{T_{2k+1}(\lambda)}{2k+1}-\frac{T_{2k-1}(\lambda)}{2k-1} \right)\right.\right. \\
            &\quad +\left.\left. \left(1+\frac{p}{1-p}\right)\sum^{\infty}_{k=(K_2+1)/2}J_k\left(\frac{\gamma ^2}{2}\right)(-1)^k\left(\frac{T_{2k+1}(\lambda)}{2k+1}-\frac{T_{2k-1}(\lambda)}{2k-1}\right)\right)\right| \\
            &\le \frac{p}{1-p}\frac{2\gamma e^{-\gamma ^2/2}}{\sqrt{\pi}}\sum^{\infty}_{k=(K_1+1)/2}\left|J_k\left(\frac{\gamma ^2}{2}\right)\right| \\
            &\le \frac{p}{1-p}\frac{\gamma e^{-\gamma ^2/2}}{\sqrt{\pi}} \frac{4(\gamma^2/2)^{(K_1+1)/2}}{2^{(K_1+1)/2}((K_1+1)/2)!} := a_2^\prime
        \end{aligned}
    \end{equation}

    \item $|erf(\lambda) -  P_{erf,\gamma ,p,K_1,K_2}(\lambda)| \le b^\prime$
        \begin{equation}
        \begin{aligned}
            \epsilon_{erf,\gamma ,p,K_1,K_2}(\lambda) &= |erf(\lambda) -  P_{erf,\gamma ,p,K_1,K_2}(\lambda)| \\
            &= \left|\frac{2\gamma e^{-\gamma ^2/2}}{\sqrt{\pi}}\sum^{\infty}_{k=(K_2+1)/2}J_k\left(\frac{\gamma ^2}{2}\right)(-1)^k\left(\frac{T_{2k+1}(\lambda)}{2k+1}-\frac{T_{2k-1}(\lambda)}{2k-1}\right)\right| \\
            &\le \frac{2\gamma e^{-\gamma ^2/2}}{\sqrt{\pi}}\sum^{\infty}_{k=(K_2+1)/2}\left|J_k\left(\frac{\gamma ^2}{2}\right)\right|\left(\frac{1}{2k+1}-\frac{1}{2k-1}\right) \\
            &\le \frac{2\gamma e^{-\gamma ^2/2}}{\sqrt{\pi}}\sum^{\infty}_{k=(K_2+1)/2}\left|J_k\left(\frac{\gamma ^2}{2}\right)\right| \\
            &\le \frac{\gamma e^{-\gamma ^2/2}}{\sqrt{\pi}} \frac{4(\gamma^2/2)^{(K_2+1)/2}}{2^{(K_2+1)/2}((K_2+1)/2)!} =: b^\prime
        \end{aligned}
        \end{equation}
\end{enumerate}
\end{proof}

We can then upper bound the error of approximating the truncated linear function 

\begin{lemma}{(Truncation error of approximating linear function)}

    Polynomial functions $\hat{P}_{\Gamma,\delta,K_{1(2)}}(\lambda)$ in Eq.~\eqref{ep:QSP_lin_v_1(2)} and the probability mixture $ \hat{P}_{\Gamma,\delta,p,K_1,K_2}(\lambda) = p\hat{P}_{\Gamma,\delta,K_{1}}(\lambda) + (1-p)\hat{P}_{\Gamma,\delta,K_{2}}(\lambda), p \in [0,1)$ approximate the truncated linear function $f_{\Gamma,\delta}(\lambda) = \lambda/(2\Gamma), \quad |\lambda| \in [0,\Gamma]$ with truncation error bounded by $a_1$, $a_2$ and $b$ respectively.
    \begin{equation}
        \begin{aligned}
            a_1 &=  \frac{8\Gamma e^{-8\Gamma ^2}}{\sqrt{\pi}} \frac{4(8\Gamma^2)^{K_1/2}}{2^{K_1/2}(K_1/2)!}\\
            a_2 &=  \frac{p}{1-p}\frac{8\Gamma e^{-8\Gamma ^2}}{\sqrt{\pi}} \frac{4(8\Gamma^2)^{K_1/2}}{2^{K_1/2}(K_1/2)!} = \frac{p}{1-p}a_1\\ 
            b &= \frac{8\Gamma e^{-8\Gamma ^2}}{\sqrt{\pi}} \frac{4(8\Gamma^2)^{K_2/2}}{2^{K_2/2}(K_2/2)!}
        \end{aligned}
    \end{equation}
\end{lemma}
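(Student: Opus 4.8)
The plan is to reduce the claim to the single–error-function estimate Lemma~\ref{le:QSP_erf_error}, using the closed forms Eq.~\eqref{eq:QSP_tlf_construction} and~\eqref{ep:QSP_lin_v_1(2)} together with the error-propagation bound Eq.~\eqref{eq:QSP_lin_erf_bound_transfer}. Writing $u_{\pm}(\lambda) = (2\Gamma \pm \lambda)/(\sqrt{2}\,\Gamma\delta^\prime)$, subtracting Eq.~\eqref{ep:QSP_lin_v_1(2)} from Eq.~\eqref{eq:QSP_tlf_construction} gives, for every $|\lambda|\in[0,\Gamma]$,
\begin{equation}
f_{\Gamma,\delta}(\lambda) - \hat{P}_{\Gamma,\delta,K}(\lambda)
= \frac{\lambda}{4\Gamma}\Big[\big(erf(u_{+}) - P_{erf,\gamma,K}(u_{+})\big) + \big(erf(u_{-}) - P_{erf,\gamma,K}(u_{-})\big)\Big],
\end{equation}
so the sup-error of the linear-function approximation over the window $[-\Gamma,\Gamma]$ (the only place it is nonzero, since outside that window the transformation is taken to be exact by construction) is controlled termwise by the erf-error; carrying the prefactor $|\lambda|/(4\Gamma)$ and the two summands through reproduces the relative bound $\epsilon_{\Gamma,K}\le 2\,\epsilon_{erf,4\Gamma,K-1}$ recorded in Eq.~\eqref{eq:QSP_lin_erf_bound_transfer}, with the identification $\gamma = 4\Gamma$ forced by the rescaling $\sqrt{2}\,\Gamma\delta^\prime$ and $1/\delta^\prime = \sqrt{\log(2/(\pi\delta^2))}$.

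With this in hand the three bounds follow by substituting the relevant case of Lemma~\ref{le:QSP_erf_error}, evaluated at $\gamma = 4\Gamma$ and erf-order $K-1$, into the transfer estimate. For $a_1$ take $K = K_1$ with the faithful truncation $P_{erf,4\Gamma,K_1}$ of Eq.~\eqref{eq:erf}: Lemma~\ref{le:QSP_erf_error}(1) gives $\epsilon_{erf,4\Gamma,K_1-1}\le \frac{4\Gamma e^{-8\Gamma^2}}{\sqrt{\pi}}\,\frac{4(8\Gamma^2)^{K_1/2}}{2^{K_1/2}(K_1/2)!}$, hence $a_1 = 2\,\epsilon_{erf,4\Gamma,K_1-1}$, matching the stated value. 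For $a_2$, the circuit $\hat{P}_{\Gamma,\delta,K_2}$ is built from $P_{erf,4\Gamma,K_2}$ of Eq.~\eqref{eq:erf_2}, whose Chebyshev tail from order $K_1{+}1$ to $K_2$ carries the amplification $1/(1-p)$; Lemma~\ref{le:QSP_erf_error}(2) shows this inflates the erf-error only by the factor $p/(1-p)$, and the same transfer yields $a_2 = \frac{p}{1-p}a_1$. For $b$, the probability mixture $\hat{P}_{\Gamma,\delta,p,K_1,K_2} = p\,\hat{P}_{\Gamma,\delta,K_1} + (1-p)\,\hat{P}_{\Gamma,\delta,K_2}$ equals $\frac{\lambda}{4\Gamma}\big(P_{erf,4\Gamma,p,K_1,K_2}(u_{+}) + P_{erf,4\Gamma,p,K_1,K_2}(u_{-})\big)$, where the amplification cancels against the weight $(1-p)$ and the low-order parts recombine with total weight $1$, i.e. it is the faithful construction truncated at $K_2$; Lemma~\ref{le:QSP_erf_error}(3) then gives $b = 2\,\epsilon_{erf,4\Gamma,K_2-1} = \frac{8\Gamma e^{-8\Gamma^2}}{\sqrt{\pi}}\,\frac{4(8\Gamma^2)^{K_2/2}}{2^{K_2/2}(K_2/2)!}$.

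The main obstacle is making the transfer step airtight for the amplified and mixed polynomials: Eq.~\eqref{eq:QSP_lin_erf_bound_transfer} is quoted from Ref.~\cite{lowQuantumSignalProcessing2017} for the faithful truncated-series construction, so one must check that the composition $Q \mapsto \frac{\lambda}{4\Gamma}\big(Q(u_{+})+Q(u_{-})\big)$ propagates the sup-error to the linear-function error with the \emph{same} universal constant $2$, regardless of whether $Q$ is $P_{erf,4\Gamma,K_1}$, $P_{erf,4\Gamma,K_2}$, or their mixture. This reduces to the elementary termwise identity above, together with the bookkeeping check that the rescaled arguments $u_{\pm}(\lambda)$ remain in the interval on which Lemma~\ref{le:QSP_erf_error} bounds the erf-error uniformly (which, through $\delta^\prime$, constrains $\delta$) and that the normalization constants line up to give $\gamma = 4\Gamma$. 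Once $a_1,a_2,b$ are established, feeding them into Lemma~\ref{le:mix} produces the channel-distance bound asserted in Corollary~\ref{cor:USA}.
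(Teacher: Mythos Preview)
Your proposal is correct and follows the same approach as the paper, whose entire proof reads ``This is followed by substituting Eq.~\eqref{eq:QSP_lin_erf_bound_transfer} into lemma~\ref{le:QSP_erf_error}.'' You have simply unpacked that one line: identify $\gamma=4\Gamma$, shift $K\to K-1$ in the erf-order, multiply by the transfer constant $2$, and read off the three cases from Lemma~\ref{le:QSP_erf_error}. Your additional paragraph verifying that the transfer bound Eq.~\eqref{eq:QSP_lin_erf_bound_transfer} applies equally to the amplified polynomial of Eq.~\eqref{eq:erf_2} and to the mixture (via the termwise identity for the composition $Q\mapsto \tfrac{\lambda}{4\Gamma}(Q(u_+)+Q(u_-))$) is a genuine improvement in rigor over the paper, which tacitly assumes this extension.
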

\begin{proof}
    This is followed by substituting Eq.~\eqref{eq:QSP_lin_erf_bound_transfer} into lemma \ref{le:QSP_erf_error}.
\end{proof}

\subsection{Proof of Corollary~\ref{cor:USA}}

\begin{proof}

With two classically computed $\vec{\varphi}_1 \in \mathbb{R}^{2K_1+1}, \vec{\varphi}_2 \in \mathbb{R}^{2K_2+1}$ , we can implement $\hat{W}_{\vec{\varphi}_{1(2)}}$ and $\hat{W}_{\vec{\varphi}_{m}} = p\hat{W}_{\vec{\varphi}_{1}} + (1-p)\hat{W}_{\vec{\varphi}_{2}}$ such that they approximate an unitary $U$ implementing truncated linear amplification by bounded errors
\begin{equation}
    \begin{aligned}
        \|\bra{0}_c\bra{0}_b\bra{G}_a\hat{W}_{\vec{\varphi}_{1}}\ket{G}_a\ket{0}_b\ket{0}_c - U\| \le a_1 \\
        \|\bra{0}_c\bra{0}_b\bra{G}_a\hat{W}_{\vec{\varphi}_{2}}\ket{G}_a\ket{0}_b\ket{0}_c - U\| \le a_2 \\
        \|\bra{0}_c\bra{0}_b\bra{G}_a\hat{W}_{\vec{\varphi}_{m}}\ket{G}_a\ket{0}_b\ket{0}_c - U\| \le b
    \end{aligned}
\end{equation}

The operator norm further bound the state distance after quantum channels since $\|V\rho V^\dagger - U\rho U^\dagger\| \le \|V-U\|, \, \forall \rho $. Employing lemma \ref{le:mix} gives the final result since $b$ is exponentially smaller than $a_{1(2)}$, i.e.
\begin{equation}
\begin{aligned}
    \left\|\mathcal{V}_{\text{mix}}(\rho) - U\rho U^\dagger\right\| &\le 4b+2p a_1^2 + 2(1-p)a_2^2 \\
    & \le \max\left\{8b,\frac{4}{1-p}a_1^2\right\}
\end{aligned}
\end{equation}
\end{proof}

\end{document}